\documentclass[conference]{IEEEtran}

\bibliographystyle{IEEEtranS}


\usepackage{amsmath,amssymb,amsfonts,amsthm}
\usepackage{mathtools}
\usepackage{algorithm}
\usepackage[noend]{algpseudocode}
\usepackage{enumitem}

\usepackage{xcolor}
\definecolor{linkblue}{RGB}{1,1,190}
\definecolor{citegreen}{RGB}{1,190,1}
\usepackage{hyperref}
\hypersetup{linkcolor=linkblue
           ,citecolor=citegreen}
\usepackage[ocgcolorlinks]{ocgx2}
\usepackage{cleveref}

\newtheorem{definition}{Definition}
\crefname{definition}{Definition}{Definitions}
\newtheorem{lemma}[definition]{Lemma}
\crefname{lemma}{Lemma}{Lemmas}
\newtheorem{theorem}[definition]{Theorem}
\crefname{theorem}{Theorem}{Theorems}
\newtheorem{proposition}[definition]{Proposition}
\crefname{proposition}{Proposition}{Propositions}
\newtheorem{corollary}[definition]{Corollary}
\crefname{corollary}{Corollary}{Corollaries}

\theoremstyle{definition}
\newtheorem{remark}[definition]{Remark}
\crefname{remark}{Remark}{Remarks}
\newtheorem{example}[definition]{Example}
\crefname{example}{Example}{Examples}

\crefname{algorithm}{Algorithm}{Algorithms}

\setlist[enumerate,1]{label=\arabic*)}

\usepackage{microtype}

\DeclareMathOperator{\GL}{GL}
\DeclareMathOperator{\lspan}{span}
\DeclareMathOperator{\rank}{rank}
\DeclareMathOperator{\im}{im}
\DeclareMathOperator{\grank}{\overline{r}}

\DeclareMathOperator{\chr}{char}
\DeclareMathOperator{\val}{\mathsf v}
\DeclarePairedDelimiter{\card}{\lvert}{\rvert}

\newcommand{\bcdot}{\boldsymbol{\cdot}}

\newcommand{\bC}{\mathbb C}
\newcommand{\bZ}{\mathbb Z}
\newcommand{\bQ}{\mathbb Q}

\newcommand{\cA}{\mathcal A}

\newcommand{\cS}{\mathcal S}
\newcommand{\cT}{\mathcal T}

\newcommand{\QQbar}{\overline{\mathbb Q}}

\newcommand{\quo}[1]{\mathbf{q}({#1})}

\newcommand{\zcl}[1]{\widetilde{#1}}

\newcommand{\comp}[1]{\textsf{c}({#1})}

\newcommand{\algc}[1]{{#1}^{\textup{alg}}}
\newcommand{\alg}[1]{{#1}^{\textup{alg}}}

\newcommand{\defit}[1]{\emph{{#1}}}

\begin{document}

\title{Computing the linear hull: Deciding Deterministic? and Unambiguous? for weighted automata over fields}

\author{\IEEEauthorblockN{Jason P. Bell}
  \IEEEauthorblockA{\textit{Department of Pure Mathematics}\\
    \textit{University of Waterloo}\\
    Waterloo, Canada\\
    jpbell@uwaterloo.ca}
  \thanks{J.\,P.~Bell was supported by NSERC grant RGPIN-2022-02951.}
\and
\IEEEauthorblockN{Daniel Smertnig}
\IEEEauthorblockA{\textit{Institute for Mathematics and Scientific Computing}\\
  \textit{NAWI Graz, University of Graz}\\
  Graz, Austria\\
  daniel.smertnig@uni-graz.at}}

\maketitle

\begin{abstract}
  The \emph{\textup{(}left\textup{)} linear hull} of a weighted automaton over a field is a topological invariant.
  If the automaton is minimal, the linear hull can be used to determine whether or not the automaton is equivalent to a deterministic one.
  Furthermore, the linear hull can also be used to determine whether the minimal automaton is equivalent to an unambiguous one.
  We show how to compute the linear hull, and thus prove that it is decidable whether or not a given automaton over a number field is equivalent to a deterministic one.
  In this case we are also able to compute an equivalent deterministic automaton.
  We also show the analogous decidability and computability result for the unambiguous case.
  Our results resolve a problem posed in a 2006 survey by Lombardy and Sakarovitch.
\end{abstract}

\begin{IEEEkeywords}
  weighted automata, determinization, sequential, deterministic, unambiguous, linear hull
\end{IEEEkeywords}

\section{Introduction}

Every unweighted (finite) automaton is equivalent to a \emph{deterministic} automaton\footnote{deterministic automata also called \emph{sequential} or \emph{subsequential} \cite[Remark V.1.2]{sakarovitch09} automata in the weighted case}, and there is a determinization procedure to find such an automaton.
For automata with weights in a semiring $K$ (in short, $K$-automata), this is no longer true.
More generally, a $K$-automaton is \defit{unambiguous} if (i) between each two states $p$ and $q$ and for every word $w$ there is at most one path from $p$ to $q$ labeled by $w$, and (ii) every word has at most one accepting path \cite[Deﬁnition I.1.11]{sakarovitch09}.
For trim automata (i) and (ii) are equivalent and one may be omitted.
Deterministic $K$-automata are unambiguous, but not every unambiguous $K$-automaton is equivalent to a deterministic one; furthermore not every $K$-automaton is equivalent to an unambiguous one.
Here, two $K$-automata are equivalent if they recognize the same $K$-rational series.

This leads to the following decidability problems for a $K$-automaton $\cA$.
\begin{itemize}
  \item \defit{Deterministic?} Is there a deterministic $K$-automaton $\cA'$ that is equivalent to $\cA$?
  \item \defit{Unambiguous?} Is there an unambiguous $K$-automaton $\cA'$ that is equivalent to $\cA$?
\end{itemize}
If these questions have a positive answer, it is furthermore desirable to actually produce a corresponding $K$-automaton.
These questions have received particular attention when $K$ is a tropical semiring \cite{choffrut77,mohri97,allauzen-mohri03,kirsten-maeurer05,kirsten-lombardy09,kirsten12,filiot-gentilini-raskin15,mohri-riley17}; the surveys \cite{lombardy-sakarovitch06,mohri09} are a good starting point.
Similar question have been studied for weighted tree automata \cite{buechse-vogler-may10,doerbrand-feller-stier21,fulop-koszo-vogler21,paul21}.
When $K$ is a field, even when $K=\bQ$, the question was still essentially completely open until recently.
It appears as an open problem in the 2006 survey by Lombardy and Sakarovitch \cite[Problem 1]{lombardy-sakarovitch06}.
For unary alphabets and $K=\bQ$, the problem \defit{Deterministic?} is decidable by a recent result of Kostolányi \cite{kostolanyi22}.
In the same setting \defit{Unambiguous?} is decidable by a result of Berstel and Mignotte \cite[Théorème 3]{berstel-mignotte76} together with a classical theorem of Pólya \cite[Chapter 6.3]{berstel-reutenauer11}.

In \cite{bell-smertnig21} a new invariant for an automaton with weights in a field, the \emph{linear hull}, was introduced, and it was used to prove a multivariate version of Pólya's theorem \cite[Theorem 1.2]{bell-smertnig21}.
This led to a characterization of $K$-rational series recognized by deterministic, respectively unambiguous, automata in terms of the linear hull of a minimal automaton for the series.
Unfortunately, the linear hull is defined as a topological closure (in the \emph{linear Zariski topology}) of the reachability set of an automaton, making its computability a non-trivial problem.

We show that the problems \defit{Deterministic?} and \defit{Unambiguous?} are decidable over number fields\footnote{The restriction to number fields is not essential, and only made for simplicity of the presentation.} (\cref{t:main-automata}).
Furthermore, our work yields an algorithm to compute an equivalent unambiguous, respectively, deterministic weighted automaton if it exists.
This uses the main theorems of \cite{bell-smertnig21} and a computability result for the linear hull (\cref{t:main}).

The key point is the computation of the linear Zariski closure of a matrix semigroup (a subsemigroup of the semigroup of all $d \times d$-matrices $M_d(K)$) generated by a closed set.
Our approach is inspired by the computation of the Zariski closure of such semigroups by Hrushovski, Ouaknine, Pouly, and Worrell \cite{hrushovski-ouaknine-pouly-worrell18}, which builds on the case for groups by Derksen, Jeandel, and Koiran \cite{derksen-jeandel-koiran05}; see also \cite{nosan-pouly-schmitz-shirmohammadi-worrell22}.
However, our approach stays almost entirely within the linear realm (see \cref{rem:efficiency}).

Our approach does not yield any bounds on the runtime.
The output size (the size of the linear hull) can be super-exponential in the input size. Namely, if $K=\bQ$ and $\cA$ has $d$ states, then the linear hull can be of size $2^{d-1}d!$ over a two-letter alphabet (\cref{r:output-size,rem:upper-bound-output}); by comparison, in the unary case, the algorithm of Kostolányi needs at most $O(d^3)$ operations.

In the group case (\cref{sec:invertible}), the Burnside--Schur theorem yields an upper bound on the size of a transversal modulo the component containing the identity, giving a bound on the output size that is double-exponential in $d$ (independent of the number of generators; \cref{rem:upper-bound-output}).
In the semigroup case (\cref{sec:semigroup}), this can be combined with a recursion lemma (\cref{l:recursion-newcpr}), to get a similar double-exponential upper bound (now dependent on the number of generators).
Further, our results  hold for all fields over which it is possible to do linear algebra exactly, and they can be extended to integers as well.
For reasons of space and simplicity we relegate details of this and the bounds on the output size to the arXiv version \cite{bell-smertnig23-arxiv}.

\textbf{Notation.} Throughout, let $K$ be a number field (a finite-dimensional field extension of $\bQ$), and let $d \ge 0$.
Let $M_d(K)$ be the semigroup of $d \times d$-matrices.
Further, $I \in M_d(K)$ denotes the identity matrix, and $E_{ij} \in M_d(K)$ denotes the $ij$-th elementary matrix.
If $X$ is a subset of a semigroup $\cS$, then $\langle X \rangle$ is the subsemigroup generated by $X$.
If $a$,~$b \in \bZ$, then $[a,b] \coloneqq \{\, x \in \bZ : a \le x \le b\,\}$ is the discrete interval.
Background on automata can be found in \cite{berstel-reutenauer11,droste-kuich-vogler09,sakarovitch09}.

\textbf{Acknowledgements.} We thank the reviewers for innumerable valuable comments on improving the presentation of the paper for the LICS community.
We have tried to implement them as far as possible; any remaining shortcomings are our own.

\section{Main results: Decidability of Deterministic? and Unambiguous?} \label{sec:main}

In this section we state the main results of the present paper (\cref{t:main,t:main-automata}) and show how \cref{t:main-automata} follows from \cref{t:main} and the results in \cite{bell-smertnig21}.
The proof of \cref{t:main} will then take up the rest of the paper.

We work with row vectors and apply matrices on the right.
A $d$-dimensional \defit{linear representation} over the alphabet $\Sigma$ consists of a row vector $u \in K^{1 \times d}$, a monoid homomorphism $\mu \colon \Sigma^* \to M_d(K)$, and a column vector $v \in K^d$.

To interpret $(u,\mu,v)$ as a $K$-automaton $\mathcal A$, we associate to it a directed graph with edge labels and set of vertices $[1,d]$ as follows: $u=(u_1,\ldots,u_d)$ is the vector of initial weights, with an incoming edge to state $i$ with weight $u_i$.
Analogously $v$ is interpreted as vector of terminal weights.
For each $a \in \Sigma$, the matrix $\mu(a)$ is an incidence matrix encoding the transition weights of the letter $a$: the $ij$-entry of $\mu(a)$ corresponds to the weight of the transition from state $i$ to the state $j$ labeled by $a$, and it is recorded by putting an edge with label $\mu(a) a$ (omitting the edge if $\mu(a)=0$).
In this way, there is a one-to-one correspondence between linear representations and weighted automata (see \cite[Chapter 1.6]{berstel-reutenauer11} for a more complete treatment).

An \emph{accepting path} for a word $w$ is a path in the graph that is labeled by $w$ and leads from an input state (a state with nonzero input weight) to a terminal state (a state with nonzero terminal weight).
We always assume that our automata are \emph{trim} (every state lies on some accepting path).

Given any word $w \in \Sigma^*$ one can compute the output $\mathcal A(w)\coloneqq u\mu(w)v$ of the $K$-automaton by
\begin{enumerate}
\item for each accepting path labeled by $w$, taking the product of the weights along each path;
\item summing up these values over all accepting paths for $w$. 
\end{enumerate}

The task of finding all accepting paths for $w$ becomes computationally easier if the automaton is
\begin{enumerate}
\item \emph{deterministic}, that is, there exists at most input state and for every state $i$ and every letter $a \in \Sigma$, there is at most one outgoing edge from $i$ that is labeled by $a$ (i.e., every row of $\mu(a)$ has at most one nonzero entry); or
\item \emph{unambiguous}, that is, for every word $w$ there exists at most one accepting path.
\end{enumerate}
Every deterministic automaton is unambiguous.

To an automaton we associate its \emph{behavior}, the $K$-rational series $\sum_{w \in \Sigma^*} \mathcal A(w) w$.
Two automata are equivalent if they have the same behavior.
Our main theorem is the following.

\begin{theorem} \label{t:main-automata}
  Let $\mathcal A$ be a $K$-automaton.
  Then it is decidable if $\cA$ is equivalent to
  \begin{enumerate}
  \item a deterministic $K$-automaton;
  \item an unambiguous $K$-automaton. 
  \end{enumerate}
  In both cases the corresponding deterministic (or unambiguous) $K$-automaton is computable.
\end{theorem}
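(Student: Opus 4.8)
The plan is to reduce \cref{t:main-automata} to \cref{t:main} (computability of the linear hull) together with the characterizations of deterministic- and unambiguous-recognizability from \cite{bell-smertnig21}. First I would recall that given the $K$-automaton $\cA$ with linear representation $(u,\mu,v)$, one can effectively compute a \emph{minimal} linear representation $(u',\mu',v')$ of the same behavior by the standard linear-algebra procedure over $K$ (iteratively computing the forward and backward reachability spaces and passing to the quotient); this is classical and works over any field where linear algebra is effective, in particular over a number field. Equivalence of $K$-automata is likewise decidable, so we may assume $\cA$ is minimal. The point of minimality is that the results of \cite{bell-smertnig21} phrase the criteria for \defit{Deterministic?} and \defit{Unambiguous?} in terms of the linear hull of a minimal automaton.

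Next I would invoke the structural theorems from \cite{bell-smertnig21}: a $K$-rational series is recognized by a deterministic (respectively unambiguous) $K$-automaton if and only if the linear hull of a minimal automaton for the series satisfies the corresponding algebraic condition — for the deterministic case, that the linear hull consists of ``lines'' through the reachability set in the appropriate sense, and for the unambiguous case, the analogous weaker flag/arrangement condition. By \cref{t:main} the linear hull is computable: it is returned as a finite union of linear subspaces (more precisely, as the finite data of a linear variety in the linear Zariski topology), given by explicit bases over $K$. Once we have this finite description, checking the relevant combinatorial/linear condition is a finite computation: for each of the finitely many subspaces in the hull one tests whether it has the required dimension and incidence pattern with the spaces $\mu'(w)$-orbits, which amounts to solving linear systems over $K$.

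If the test succeeds, we then have to \emph{produce} an equivalent deterministic (or unambiguous) automaton, not merely decide existence. Here I would follow the constructive direction of the proofs in \cite{bell-smertnig21}: from the linear hull one reads off a finite set of states for the new automaton (indexed by the subspaces, or by a transversal of them under the transition action), and the transition, initial, and terminal weights are obtained by expressing the action of each letter on these subspaces in terms of the chosen bases — again pure linear algebra over $K$. The output automaton is then deterministic (or unambiguous) by construction, and equivalent to $\cA$ because it computes the same series on every word, which one verifies by tracking how $u'\mu'(w)$ moves through the hull.

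The main obstacle is entirely contained in \cref{t:main}: computing the linear hull, i.e.\ the linear Zariski closure of the matrix semigroup $\langle \mu'(a) : a \in \Sigma\rangle$ acting on the reachability set. Everything in the present reduction is routine linear algebra plus bookkeeping once that closure is in hand — minimization, equivalence testing, checking the algebraic criteria, and the constructive read-off are all effective over a number field. So \cref{t:main-automata} follows formally from \cref{t:main} and \cite{bell-smertnig21}, and the remainder of the paper is devoted to proving \cref{t:main}; the hard part is the semigroup-closure computation, for which the paper adapts the group-case techniques of \cite{derksen-jeandel-koiran05} and \cite{hrushovski-ouaknine-pouly-worrell18} while staying within the linear setting.
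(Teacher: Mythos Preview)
Your proposal is correct and follows essentially the same route as the paper: minimize, compute the linear hull via \cref{t:main} (more precisely via \cref{c:lhull-compute}), then apply the structural criteria and constructions from \cite{bell-smertnig21}. The only places where the paper is more concrete than your sketch are (i) the deterministic criterion is simply that every component of the hull has dimension $\le 1$, and (ii) the unambiguous criterion is not a direct ``flag/arrangement'' condition on the hull but rather the decidable check that the explicitly constructed automaton $\hat{\cA}$ is \emph{semi-monomial}; for the latter the paper also uses \cite[Lemma 3.11]{bell-smertnig21} to choose the minimal representation carefully before building $\hat{\cA}$, a step your sketch absorbs into ``chosen bases''.
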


To prove \cref{t:main-automata}, we will make use of the following linear version of the Zariski topology introduced in \cite[Section 3]{bell-smertnig21}.
The same topology previously appeared in work of Colcombet and Petrisan \cite{colcombet-petrisan17} under the name of ``glued spaces'' --- their minimal cover \cite[p.6]{colcombet-petrisan17} of a set of vectors is the closure of that set in the linear Zariski topology.

\begin{definition}
  On a finite-dimensional vector space $V$ over $K$, the \defit{linear Zariski topology} is the topology in which a set is closed if and only if it is a finite union of vector subspaces. 
\end{definition}

The empty set is represented by the empty union.
By definition, a (not necessarily closed) nonempty subset $X \subseteq V$ is \emph{irreducible}, if whenever $X \subseteq Y_1 \cup Y_2$ with closed sets $Y_1$ and $Y_2$, then already $X \subseteq Y_1$ or $X \subseteq Y_2$.
Since a vector space cannot be covered by finitely many proper subspaces (due to $K$ being infinite), one sees easily that the irreducible closed sets are precisely the vector subspaces of $V$, and every closed set can be expressed uniquely as the finite union of its irreducible components (i.e., the maximal irreducible subsets).\footnote{In fact, $V$ is a noetherian topological space, background on which can be found in \cite[\S II.4.1 and \S II.4.2]{bourbaki:ca72} or \cite[Sections \href{https://stacks.math.columbia.edu/tag/004U}{004U} and \href{https://stacks.math.columbia.edu/tag/0050}{0050}]{stacks-project}.}

Most of the paper is devoted to the following.
\begin{theorem} \label{t:main}
  Let $X \subseteq M_d(K)$ be a closed subset \textup(given by a list of basis vectors\textup) and let $\cS= \langle X \rangle$ be the semigroup generated by $X$.
  Then the linear Zariski closure $\overline{\cS}$ is computable \textup(as a list of basis vectors\textup).
\end{theorem}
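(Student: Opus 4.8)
The plan is to mimic the Derksen--Jeandel--Koiran / Hrushovski--Ouaknine--Pouly--Worrell strategy for computing the Zariski closure of a matrix semigroup, but carried out entirely in the linear Zariski topology. The first step is to reduce to the case where $X$ is \emph{irreducible}, i.e.\ a single subspace $W \subseteq M_d(K)$. Since $X$ is given as a finite union of subspaces $W_1, \dots, W_m$, the semigroup $\langle X \rangle$ is the union over all finite words $(i_1, \dots, i_k)$ of the sets $W_{i_1} W_{i_2} \cdots W_{i_k}$ (the set of all products $w_1 \cdots w_k$ with $w_j \in W_{i_j}$). Each such product set is the image of the irreducible set $W_{i_1} \times \cdots \times W_{i_k}$ under a multilinear map, hence its closure is an irreducible closed set, i.e.\ a single subspace. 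So $\overline{\langle X \rangle}$ is a (possibly infinite, a priori) union of subspaces, and since the ambient space $M_d(K)$ is noetherian it is in fact a finite union. The task is therefore to enumerate the finitely many maximal subspaces arising as $\overline{W_{i_1} \cdots W_{i_k}}$, and to know when to stop.

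Next I would set up the bookkeeping. Starting from the subspaces $\overline{W_i}= W_i$, maintain a finite set $\cC$ of subspaces of $M_d(K)$, closed under the operation $(U, U') \mapsto \overline{U \cdot U'}$ (the span of all products $u u'$), and closed under passing to the subspaces already generated. Concretely, one iteratively computes $\overline{U \cdot U'}$ for all currently known pairs, discards any subspace contained in another one currently in the list, and repeats. The key termination argument is that this process stabilizes: one needs an \emph{a priori} bound on the number of maximal subspaces that can occur, or at least an effective stopping criterion. This is where the structural input is essential. In the group case one can pass to $\GL_d$ and use that the closure of a subgroup of $\GL_d(K)$ is an algebraic group; the Burnside--Schur theorem then bounds the index of the connected component, and hence bounds the number of subspaces (translates of the component) in the closure, double-exponentially in $d$ and independently of the number of generators. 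In the general semigroup case, one combines this with the recursion lemma (\cref{l:recursion-newcpr}): one identifies the ``group-like'' part of $\cS$ (roughly, elements of maximal rank, which form a group after restricting to the appropriate image subspace), computes its closure using the group case, and then recurses on the strictly-lower-rank behavior, the recursion depth being bounded by $d$. Each recursive call touches at most the originally given generators plus controlled new data, which is why the final bound depends on the number of generators.

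The main obstacle — and the step that deserves the most care — is proving that the iteration terminates with a \emph{correct} answer, i.e.\ that the finitely many subspaces one collects really do cover $\overline{\cS}$ and not just a proper closed subset of it. The subtlety is that $\overline{U \cdot U'}$ can be strictly larger than $\overline{U} \cdot \overline{U'}$ would naively suggest when these sets are not themselves subgroups or ``saturated'' in the right sense, so one must show the collected family is genuinely closed under multiplication \emph{after closure}, i.e.\ $\overline{\cS} \cdot \overline{\cS} \subseteq \overline{\cS}$ as a set of subspaces — equivalently that $\overline{\cS}$ is a \emph{closed subsemigroup}. Establishing this requires the irreducibility observation above (so that closures of product sets are again subspaces), a noetherianity argument to guarantee the ascending process of ``adding new maximal subspaces'' stabilizes, and then — crucially — the group/recursion structural results to make that stabilization \emph{effective}, i.e.\ to convert ``it stabilizes eventually'' into ``it has stabilized once the list is closed under products and no member bounded by the structural estimate is missing.'' Once $\overline{\cS}$ is known to be a closed subsemigroup and the structural bounds certify that the current list is complete, membership of further products can be checked by linear algebra, and the algorithm outputs the list of bases of the maximal subspaces.
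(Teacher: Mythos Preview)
Your proposal has a genuine gap: the ``close under products'' iteration you describe does not terminate, and the structural bounds you invoke do not repair this. Consider the simplest case $X = KA$ for a single invertible matrix, say $A = \mathrm{diag}(2,3) \in M_2(\bQ)$. Your iteration produces $\overline{KA \cdot KA} = KA^2$, then $KA^3$, and so on---infinitely many pairwise distinct one-dimensional subspaces, none of which is ever contained in a previously found one. The true closure $\overline{\langle A\rangle}$ is the single two-dimensional space of diagonal matrices, but your process never reaches it: the list is \emph{never} closed under products at any finite stage. The double-exponential bound on the number of components of the \emph{final answer} is of no help here, because the intermediate iterates are not components of $\overline{\cS}$; they are proper subspaces of a component, and there is no bound on how many such proper subspaces the iteration may visit.

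What the paper does instead is to compute $\overline{\langle A\rangle}^0$ for a single invertible $A$ \emph{directly}, not by iteration: using a number-theoretic bound $N = N(d,K)$ on the orders of roots of unity in bounded-degree extensions, one shows $\overline{\langle A\rangle}^0 = \lspan\{I, A^N, A^{2N}, \ldots, A^{(d-1)N}\}$ (\cref{t:one-matrix}). This span is computed in one step, bypassing the infinite chain entirely. Only with this building block in hand does an iterative procedure (\cref{alg:groupclosure}) make sense: one grows a candidate identity component $N$ by repeatedly adjoining such $\overline{\langle A\rangle}^0$ and conjugates, and Burnside--Schur is used not to bound the output but to prove that the transversal $T$ modulo $N$ stabilizes. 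Your sketch omits this single-matrix step entirely. Similarly, in the semigroup case, the paper's termination relies on explicitly \emph{finding} completely pseudo-regular elements (the function \textsc{FindCPR}, which searches in $X^n$ rather than $\overline{X^n}$ precisely to avoid the rank-jump in \cref{exm:generic-rank}) and bounding how many inequivalent ones can occur (\cref{l:scc}); merely knowing the recursion lemma exists does not give you an algorithm.
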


\Cref{t:main} immediately yields the following corollary, by taking $X$ to be the union of the $n$ one-dimensional spaces generated by $A_1$, $\ldots\,$,~$A_n$.
\begin{corollary} \label{c:main-fg}
  Let $A_1$, $\ldots\,$,~$A_n \in M_d(K)$.
  Then the linear Zariski closure of the semigroup $\langle A_1, \ldots, A_n \rangle$ is computable.
\end{corollary}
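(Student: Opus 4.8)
The plan is to deduce the corollary directly from \cref{t:main} by packaging the finitely many generators into a single closed set. Given $A_1, \ldots, A_n \in M_d(K)$, I would set
\[
  X \coloneqq \bigcup_{i=1}^{n} K A_i \subseteq M_d(K),
\]
the union of the (at most one-dimensional) subspaces spanned by the generators. This $X$ is closed in the linear Zariski topology, and a list of basis vectors for it is immediate: the $i$-th component is spanned by $A_i$ when $A_i \neq 0$ and is the zero subspace otherwise. Hence $X$ satisfies the hypothesis of \cref{t:main}.

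The next step is to verify that $\overline{\langle X \rangle} = \overline{\langle A_1, \ldots, A_n \rangle}$, so that computing the former also computes the latter. One inclusion follows from $\langle A_1, \ldots, A_n \rangle \subseteq \langle X \rangle$. For the reverse inclusion, observe that a typical element of $\langle X \rangle$ is a product $(c_1 A_{i_1})(c_2 A_{i_2}) \cdots (c_k A_{i_k}) = (c_1 \cdots c_k)\, A_{i_1} A_{i_2} \cdots A_{i_k}$ with $c_j \in K$ and $i_j \in [1,n]$; thus it is either $0$ or a scalar multiple of an element of $\langle A_1, \ldots, A_n \rangle$. Since $\overline{\langle A_1, \ldots, A_n \rangle}$ is a finite union of subspaces, it contains $0$ as well as the entire line through any of its points, and therefore it contains all of $\langle X \rangle$; consequently $\overline{\langle X \rangle} \subseteq \overline{\langle A_1, \ldots, A_n \rangle}$.

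Finally I would invoke \cref{t:main} with the closed set $X$ to obtain $\overline{\langle X \rangle}$ as a list of basis vectors, which by the previous step is the sought-after $\overline{\langle A_1, \ldots, A_n \rangle}$. I do not expect any genuine obstacle in this argument: all of the substance is carried by \cref{t:main}, whose proof occupies the rest of the paper. The only points needing a little care are the scalar-factor bookkeeping above and the degenerate cases ($A_i = 0$ for some $i$, or $n = 0$, in which case $\langle A_1, \ldots, A_n \rangle = \emptyset$ and its closure is the empty union), none of which cause any trouble.
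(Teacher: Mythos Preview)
Your proposal is correct and follows exactly the approach indicated in the paper, which simply states that the corollary follows from \cref{t:main} by taking $X$ to be the union of the one-dimensional spaces $KA_i$. The extra verification you give that $\overline{\langle X\rangle}=\overline{\langle A_1,\ldots,A_n\rangle}$ (via the scalar-factor bookkeeping and the fact that closed sets are unions of subspaces) is a welcome elaboration of what the paper leaves implicit.
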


We are now able to define the following crucial invariant of a weighted automaton over a field.

\begin{definition}
  Let $\cA$ be a $K$-automaton on the alphabet $\Sigma$ with linear representation $(u,\mu,v)$. The \defit{\textup{(}left\textup{)} linear hull} of $\cA$ is the set
  \[
    \overline{u \mu(\Sigma^*)} = \overline{\{\, u\mu(w) : w \in \Sigma^* \,\}},
  \]
  that is, it is the closure in the linear Zariski topology of the reachability set $\{\, u \mu(w) : w \in \Sigma^* \,\}$.
\end{definition}

The linear hulls of two equivalent $K$-automata need not coincide.
However, since $K$ is a field, there always exist minimal linear representations, and these are unique up to conjugation by an invertible matrix (corresponding to a change of basis of the vector space).
Correspondingly, the linear hulls of minimal linear representations only differ by a linear isomorphism on the ambient space.
In particular, the number of irreducible components and their dimensions are independent of the choice of minimal linear representation.
To a $K$-rational series we associate the linear hull of a minimal linear representation of the series.

The linear hull is \emph{not} left/right symmetric.
In fact the number of its irreducible components, on the left/right need not coincide, and neither need the dimensions \cite[Example 3.8]{bell-smertnig21}.

\begin{corollary} \label{c:lhull-compute}
  Let $\cA$ be a $K$-automaton.
  Then the linear hull of $\cA$ is computable.
\end{corollary}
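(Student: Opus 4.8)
The plan is to reduce \cref{c:lhull-compute} directly to \cref{c:main-fg} (equivalently \cref{t:main}) by transporting the computed closure along a linear map. Fix a linear representation $(u,\mu,v)$ of $\cA$ over $\Sigma$. Then $\mu(\Sigma^*)$ is the submonoid of $M_d(K)$ generated by the finitely many matrices $\mu(a)$, $a \in \Sigma$; writing $\cT \coloneqq \langle \mu(a) : a \in \Sigma \rangle$ for the generated subsemigroup, we have $\mu(\Sigma^*) = \{I\} \cup \cT$. By \cref{c:main-fg} the linear Zariski closure $\overline{\cT}$ is computable: it is a finite union $\overline{\cT} = W_1 \cup \dots \cup W_k$ of subspaces of $M_d(K)$, each output as a list of basis vectors.

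Next I would push this forward along the linear map $f \colon M_d(K) \to K^{1 \times d}$, $M \mapsto uM$, which sends $\mu(\Sigma^*)$ onto the reachability set $\{\, u\mu(w) : w \in \Sigma^* \,\}$. Two soft facts about the linear Zariski topology do the work. First, every linear map is continuous: the preimage of a subspace under a linear map is a subspace, so the preimage of any closed set (a finite union of subspaces) is again a finite union of subspaces, hence closed. Second, for any continuous map $f$ and any subset $S$ of its domain, $\overline{f(S)} = \overline{f(\overline{S})}$; indeed, $\overline{f(S)} \subseteq \overline{f(\overline{S})}$ because $S \subseteq \overline{S}$, while the reverse inclusion $\overline{f(\overline{S})} \subseteq \overline{f(S)}$ holds because continuity gives $f(\overline{S}) \subseteq \overline{f(S)}$ and the right-hand side is closed. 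Applying the second fact with $S = \cT$ gives
\[
  \overline{f(\cT)} \;=\; \overline{f(\overline{\cT})} \;=\; \overline{f(W_1) \cup \dots \cup f(W_k)} \;=\; \overline{uW_1 \cup \dots \cup uW_k}.
\]
Each $f(W_i) = uW_i$ is the image of a subspace under a linear map, hence itself a subspace of $K^{1 \times d}$, so the union $uW_1 \cup \dots \cup uW_k$ is already closed and equals its own closure.

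Finally, since closure commutes with finite unions, the linear hull of $\cA$ is
\[
  \overline{u\mu(\Sigma^*)} \;=\; \overline{\{u\}} \cup \overline{f(\cT)} \;=\; Ku \cup uW_1 \cup \dots \cup uW_k ,
\]
a finite union of subspaces. To produce it as a list of bases, for each $i$ I would apply $M \mapsto uM$ to a basis of $W_i$ and run Gaussian elimination on the resulting vectors to extract a basis of $uW_i$, and adjoin the line $Ku$. No minimality or trimness of $\cA$ is needed; the procedure works for an arbitrary linear representation. The only substantive ingredient is \cref{c:main-fg} (which follows from \cref{t:main}, the subject of the rest of the paper) --- the reduction itself is entirely routine, so I do not anticipate any genuine obstacle here beyond \cref{t:main}.
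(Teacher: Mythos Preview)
Your proposal is correct and follows essentially the same route as the paper: compute $\overline{\langle \mu(a) : a \in \Sigma \rangle}$ via \cref{c:main-fg} and push forward along the linear map $M \mapsto uM$. The only differences are cosmetic: you handle the identity matrix (the empty word) separately by adjoining $Ku$ at the end, and you argue via $\overline{f(S)} = \overline{f(\overline{S})}$ together with the observation that each $uW_i$ is already a subspace, whereas the paper phrases the same fact as ``$f$ is continuous and closed, hence $\overline{u\mu(\Sigma^*)} = u\,\overline{\mu(\Sigma^*)}$''.
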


\begin{proof}
  By \cref{c:main-fg} we can compute the linear Zariski closure of the finitely generated matrix semigroup $\mu(\Sigma^*)$.
  Since $\varphi\colon M_d(K) \to K^{1\times d},\ A \mapsto uA$ is $K$-linear, it is continuous in the linear Zariski topology and also closed (i.e., it maps closed sets to closed sets).
  Therefore $\overline{u\mu(\Sigma^*)} = u \overline{\mu(\Sigma^*)}$.
\end{proof}

Constructing the following automaton $\hat{\cA}$ is key in the decidability problem.

\noindent
\textbf{Construction of $\hat \cA$.} Given a $K$-automaton $\cA$, with minimal linear representation $(u,\mu,v)$, and linear hull $X = W_1 \cup \dots \cup W_k$ (where $W_1$, $\ldots\,$,~$W_k$ are irreducible components, with $m_i\coloneqq \dim W_i$), we can construct an equivalent $K$-automaton $\hat{\cA}$, with linear representation $(u',\mu',v')$, as follows (see \cite[Lemma 3.13]{bell-smertnig21} for a rigorous treatment): Renumbering the components, without restriction $u \in W_1$.
For each $a \in \Sigma$ and $i \in [1,k]$ there exists some $j \in [1,k]$ such that $W_i \mu(a) \subseteq W_j$.
Here, $j$ need not be unique, but for each $a$ we can choose a transition function $f_a\colon [1,k] \to [1,k]$ such that $W_i \mu(a) \subseteq W_{f_a(i)}$ for all $a \in \Sigma$ and $i \in [1,k]$.

Set $m = m_1 + \cdots + m_k \ge d$, so that $K^{1\times m} \cong W_1 \oplus \cdots \oplus W_k$ (typically $m > d$).
The linear representation $(u',\mu',v')$ will be constructed on this space.
Viewing $\mu(a)$ as linear endomorphisms on $K^{1 \times d}$, we can restrict $\mu(a)$ to $W_i$ to obtain linear maps $\mu(a)|_{W_i}\colon W_i \to W_{f_a(i)}$. Putting these linear endomorphisms all together, we get the endomorphism $\mu'(a)$ on $K^{1 \times m}$.
For $u'$ one puts $u$ into the $W_1$-component and zeroes everywhere else; $v'$ is constructed analogously to the $\mu(a)$ by viewing $v$ as linear functional $K^{1 \times d} \to K$.
By \cite[Lemma 3.13]{bell-smertnig21} this gives a $K$-automaton $\hat \cA$ equivalent to $\cA$.

By construction, the matrices $\mu'(a)$ have a $m_1 \times \cdots \times m_k$ block structure, with the property that every row of blocks contains at most one nonzero block.
We say that $(u',\mu',v')$ is \emph{semi-monomial} if, in addition, in every block of every $\mu'(a)$, each column has at most one nonzero entry and the analogous property holds for $v'$ (thinking of $v'$ as $k$ blocks of size $m_i \times 1$).
Clearly, whether $(u',\mu',v')$ is semi-monomial is decidable.

\begin{proof}[Proof of \cref{t:main-automata}]
  First, we compute a minimal linear representation $(u,\mu,v)$ of $\mathcal A$ \cite[p.41--42]{berstel-reutenauer11}, say of dimension $d$.
  Let $\Gamma \coloneqq \{\, u \mu(w) v : w \in \Sigma^* \,\}$ denote the set of all outputs of the automaton.
  Using \cite[Lemma 3.11]{bell-smertnig21} we can pick the minimal linear representation in such a way that $u \mu(\Sigma^*) \subseteq \Gamma^{1 \times d}$.

  Now compute the linear hull $X = \overline{u \mu(\Sigma^*)}$ (\cref{c:lhull-compute}).
  Let $W_1$, $\ldots\,$,~$W_k$ denote the irreducible components of $X$, with $\dim(W_i)=m_i$ and $m\coloneqq m_1+\cdots + m_k$.
  We now construct the linear representation $(u',\mu',v')$, of dimension $m \ge d$ and with associated automaton $\hat{\mathcal A}$, that recognizes the same series.
  Once we have $\hat{\mathcal A}$, we are able to resolve the decidability problem:
  \begin{itemize}
  \item $\mathcal A$ is equivalent to a deterministic automaton if and only if $X$ has dimension $\le 1$ (that is, $m_i=1$ for all $i)$ \cite[Theorem 1.3]{bell-smertnig21}.
    In this case $\hat{\mathcal A}$ is deterministic \cite[Proof of Proposition 3.14]{bell-smertnig21}.
  \item The proof of \cite[Proposition 5.3]{bell-smertnig21} implies that $\mathcal A$ is equivalent to an unambiguous automaton if and only if the specific automaton $\hat{\cA}$ is semi-monomial, and this can easily be checked. \qedhere
  \end{itemize}
\end{proof}

Taking $K=\bQ$, this solves Problem 1 in \cite{lombardy-sakarovitch06}.
It remains to establish \cref{t:main}.
One way to do so, is to first compute the Zariski closure using \cite{hrushovski-ouaknine-pouly-worrell18}, from which the linear Zariski closure can then be obtained (see \cite[Theorem 1]{lefaucheux-ouaknine-purser-worrell21}).

However, it seems unnecessarily complex to first compute the closure in the finer topology, both in principle as well as in terms of computational complexity.
We present an alternate approach that stays almost entirely within the realm of linear algebra.
In particular, it avoids the need of using Gröbner bases and of computing in extension fields.
We proceed in three steps, that successively build on each other: first we consider the problem for a single invertible matrix (\cref{sec:onematrix}), then for a closed set $X$ in which the invertible matrices are dense (essentially, the group case; \cref{sec:invertible}), and finally the case for general closed sets $X$ (the semigroup case; \cref{sec:semigroup}).

The linear algebraic approach can be expected to allow a more practical implementation (avoiding inefficient Gröbner bases). Unfortunately, at one point we need to leave to linear realm in an essential way (\cref{line:nonlinear} of \cref{alg:semigroupclosure}; see \cref{rem:efficiency}).
This appears to be the main obstacle to a more efficient implementation.

If one wishes to avoid computations in extension fields, while still using Gröbner bases, it would also be possible to use our computation for the single matrix case (\cref{sec:onematrix}) as ``subroutine'' in \cite{derksen-jeandel-koiran05,hrushovski-ouaknine-pouly-worrell18}. The output then lies between the Zariski closure and the linear Zariski closure, and \cite[Theorem 1]{lefaucheux-ouaknine-purser-worrell21} can be used to find the latter.

\begin{remark} \label{r:output-size}
  The linear hull can have super-exponentially many components in the dimension $d$, already in the case where the matrices form a group.
  The group of signed permutation matrices is a finite subgroup of $\GL_d(\bQ)$ of order $2^d d!$.
  By a result of Feit (\cite{feit96}; see also the introduction of \cite{berry-dubickas-elkies-poonen-smyth04} or \cite[\S6]{kuzmanovich-pavlichenkov02}), for large $d$, this order is maximal among all finite subgroups of $\GL_d(\bQ)$.
  Its linear Zariski closure consists of a union of $2^{d-1}d!$ vector spaces of dimension $1$ (a signed permutation and its negative always lie in the same vector space).
  Even worse, the group of signed permutation matrices is $2$-generated for all $d$, so that a better bound in terms of the number of generators of the group and the dimension is also also hopeless.
  Since the signed permutation matrices act faithfully on $(1,2,\ldots,d)$, this group also gives a linear hull of size $2^{d-1} d!$ for a two-letter alphabet and $d$ states.
\end{remark}

\section{A single invertible matrix} \label{sec:onematrix}

In this section, given $A \in \GL_d(K)$ we compute $\overline{\langle A \rangle}$.
Basic linear algebra, in particular generalized eigenspaces and the Jordan normal form, are sufficient to do so.
While computing the Jordan normal form usually involves computations in a finite extension of $K$ (for all the eigenvalues to be present), we get an algorithm that works over the initial field $K$.

We first need to understand the structure of the closure of a semigroup in the linear Zariski topology.
First note the following behavior of the closure with respect to products.

\begin{lemma} \label{l:closure-multiplication}
  Let $X \subseteq M_d(K)$ be a closed set, and let $D$,~$D' \subseteq X$ be arbitrary subsets.
  If $D D' \subseteq X$, then also $\overline{D} \, \overline{D'} \subseteq X$.
\end{lemma}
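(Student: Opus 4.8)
The plan is to exploit that matrix multiplication $M_d(K) \times M_d(K) \to M_d(K)$ is bilinear, so that fixing one factor yields a $K$-linear self-map of $M_d(K)$; such a map is continuous in the linear Zariski topology (as already used in the proof of \cref{c:lhull-compute}), and a continuous map sends the closure of a set into the closure of the image of that set. Applying this twice, once on each side, will do it.

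First I would fix $B \in D'$ and consider the right-multiplication map $\rho_B \colon M_d(K) \to M_d(K)$, $A \mapsto AB$, which is $K$-linear and hence continuous. Since $\rho_B(D) = DB \subseteq DD' \subseteq X$ and $X$ is closed, continuity gives $\rho_B(\overline{D}) \subseteq \overline{\rho_B(D)} \subseteq \overline{X} = X$, that is, $\overline{D}\,B \subseteq X$. As $B \in D'$ was arbitrary, this shows $\overline{D}\,D' \subseteq X$.

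Next I would run the symmetric argument on the other side: fix $A \in \overline{D}$ and consider left multiplication $\lambda_A \colon M_d(K) \to M_d(K)$, $C \mapsto AC$, again $K$-linear and continuous. By the previous step $\lambda_A(D') = A D' \subseteq X$, so continuity yields $\lambda_A(\overline{D'}) \subseteq \overline{\lambda_A(D')} \subseteq X$, i.e.\ $A\,\overline{D'} \subseteq X$. Since $A \in \overline{D}$ was arbitrary, $\overline{D}\,\overline{D'} \subseteq X$, which is the claim.

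There is no serious obstacle here; the one thing to be careful about is that multiplication is not continuous as a map out of the product space (with the product topology), so one cannot collapse this into a single step — instead one freezes one argument at a time, using that each resulting one-variable map is $K$-linear and therefore continuous in the linear Zariski topology.
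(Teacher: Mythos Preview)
Your proof is correct and follows essentially the same two-step approach as the paper: fix an element on one side, use that one-sided multiplication is linear and hence continuous to pass to the closure, then repeat on the other side. The only cosmetic difference is that the paper additionally notes the linear map is closed (writing $\overline{D}\,d' = \overline{Dd'}$), whereas you only use continuity via $\rho_B(\overline{D}) \subseteq \overline{\rho_B(D)}$; either version suffices.
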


\begin{proof}
  Let $d' \in D'$.
  Then $Dd' \subseteq X$.
  Since multiplication by $d'$ from the right is linear, hence continuous and closed, also $\overline{D} d'=\overline{Dd'} \subseteq X$.
  Now we know $\overline{D} D' \subseteq X$, and still have to show $\overline{D}\,\overline{D'} \subseteq X$.
  Let $d \in \overline{D}$.
  From $dD' \subseteq X$ we find $d \overline{D'} = \overline{dD'} \subseteq X$.
  Thus $\overline{D}\,\overline{D'} \subseteq X$.
\end{proof}

\begin{lemma} \label{l:closure-structure}
  Let $\cS \subseteq M_d(K)$ be a subsemigroup.
  \begin{enumerate}
  \item \label{closure-structure:semigroup} The closure $\overline{\cS}$ is a semigroup.
  \item \label{closure-structure:group} If $\overline{\cS} \cap \GL_d(K) \ne \emptyset$, then $\overline{\cS} \cap \GL_d(K)$ is a linear algebraic group.
  \item \label{closure-structure:monoid} If $\cS \subseteq M_d(K)$ is a closed monoid \textup(a closed semigroup containing the identity matrix\textup), there exists a unique irreducible component $\cS^{0}$ containing the identity matrix.
    Then $\cS^{0}$ is a submonoid of $\cS$.
  \end{enumerate}
\end{lemma}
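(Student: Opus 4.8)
\textbf{Part (1).} The plan is to apply \cref{l:closure-multiplication} with $X = \overline{\cS}$, $D = D' = \cS$. Since $\cS$ is a semigroup, $\cS\cS \subseteq \cS \subseteq \overline{\cS}$, and $\overline{\cS}$ is closed, so the hypotheses of \cref{l:closure-multiplication} are met. The lemma immediately gives $\overline{\cS}\,\overline{\cS} \subseteq \overline{\cS}$, i.e.\ $\overline{\cS}$ is closed under multiplication and hence a semigroup. This part is essentially a one-line deduction.

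\textbf{Part (2).} Assume $\overline{\cS} \cap \GL_d(K) \ne \emptyset$. First, $\overline{\cS}$ is Zariski-closed in $M_d(K)$ (a finite union of subspaces is a classical affine variety), so $\overline{\cS} \cap \GL_d(K)$ is a Zariski-closed subset of the algebraic group $\GL_d(K)$. By Part~(1), $\overline{\cS}$ is a semigroup, hence $G \coloneqq \overline{\cS} \cap \GL_d(K)$ is closed under multiplication in $\GL_d(K)$. To conclude that $G$ is an algebraic \emph{group}, I would argue that a Zariski-closed submonoid (or even just a nonempty Zariski-closed subsemigroup) of $\GL_d(K)$ that is closed under multiplication is automatically closed under inversion: for any $g \in G$, the descending chain of closed sets $gG \supseteq g^2G \supseteq \cdots$ stabilizes by noetherianity, say $g^{n}G = g^{n+1}G$; cancelling the invertible $g^{n}$ gives $G = gG$, so $I \in gG$ and $g^{-1} \in G$. (One first needs $G$ nonempty, which is the hypothesis; the monoid identity $I$ then lies in $G$ as the stabilized value.) Thus $G$ is a subgroup of $\GL_d(K)$ that is Zariski-closed, i.e.\ a linear algebraic group.

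\textbf{Part (3).} Now $\cS$ is a closed monoid, so by the structure of the linear Zariski topology every closed set is the finite union of its irreducible components, which are vector subspaces; let $\cS = V_1 \cup \cdots \cup V_r$ be this decomposition. Since $I \in \cS$, it lies in some component; I claim it lies in exactly one. Suppose $I \in V_i \cap V_j$. The key observation is that left multiplication by a fixed matrix is linear, hence continuous and closed, so $V_i \cdot \cS \subseteq \cS$ is a closed set containing $V_i = V_i \cdot I$; since $V_i$ is irreducible and the $V_\ell$ are the irreducible components, $V_i \cdot \cS$ is again a union of some of the $V_\ell$, and similarly for any product of components. Concretely: pick any $v \in V_i$; then $v\cdot V_j$ is an irreducible (linear image of a subspace, hence a subspace) subset of $\cS$, so it lies in a single component $V_{\sigma(v,j)}$; specializing at $I \in V_j$ shows $v \in V_{\sigma(v,j)}$, and since $v \in V_i$ was arbitrary with $V_i$ irreducible we would get $V_i \subseteq V_{\sigma}$ forcing $\sigma$ to be the index of $V_i$ (by maximality of components). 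Running the same argument with the roles of $i$ and $j$ swapped yields $V_j \subseteq V_i$, contradicting that distinct components are incomparable. Hence the component $\cS^0$ containing $I$ is unique. Finally, to see $\cS^0$ is a submonoid: $\cS^0 \cdot \cS^0$ is the image of the irreducible set $\cS^0 \times \cS^0$ under multiplication — but multiplication on $M_d(K)$ is bilinear, not linear, so I cannot directly invoke closedness of a linear map. Instead I would argue pointwise as in the uniqueness step: for $s \in \cS^0$, the set $s\cS^0$ is a subspace contained in $\cS$, hence in a single component; it contains $sI = s \in \cS^0$, so that component is $\cS^0$ itself, giving $s\cS^0 \subseteq \cS^0$. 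As $s \in \cS^0$ was arbitrary, $\cS^0\cS^0 \subseteq \cS^0$, and $I \in \cS^0$, so $\cS^0$ is a submonoid.

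\textbf{Main obstacle.} The delicate point throughout Part~(3) is that multiplication in $M_d(K)$ is \emph{bilinear}, so the clean "linear maps are closed" principle from \cref{l:closure-multiplication} only applies when one factor is held fixed. The right framing is to always fix one argument and slide the other through a single irreducible component, using that a subspace maps to a subspace and that irreducible subsets of $\cS$ are confined to one component. Part~(2) has a smaller subtlety: one must remember that $\GL_d$ is not closed in $M_d$, so one argues within $\GL_d$ using the noetherian descending-chain trick to extract inverses, rather than trying to show inversion is a morphism on all of $M_d(K)$.
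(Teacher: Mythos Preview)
Parts~(1) and~(2) are correct and match the paper: the paper applies \cref{l:closure-multiplication} for (1) and cites \cite[Lemma~10]{derksen-jeandel-koiran05} for (2), which is exactly the noetherian descending-chain argument you spell out.

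Part~(3) has a genuine gap. In your uniqueness argument the index $\sigma(v,j)$ depends on $v$, and the step ``since $v \in V_i$ was arbitrary with $V_i$ irreducible we would get $V_i \subseteq V_{\sigma}$'' only yields the tautology $V_i \subseteq V_i$: from $v \in V_{\sigma(v,j)}$ for each $v$ you recover that $V_i$ lies in a single component, namely itself, and nothing about $V_j$. Swapping $i$ and $j$ then gives $V_j \subseteq V_j$, not $V_j \subseteq V_i$. The same defect recurs in the submonoid step: from $s\cS^0 \subseteq V_k$ and $s \in V_k \cap \cS^0$ you cannot infer $V_k = \cS^0$, because distinct irreducible components may intersect nontrivially (e.g.\ two planes through the origin in $K^3$).

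The missing observation is that for each component $V_k$ the set
\[
  \{\, v \in V_i : vV_j \subseteq V_k \,\}
\]
is a \emph{subspace} of $V_i$, being the intersection of the preimages of $V_k$ under the linear maps $v \mapsto vw$ as $w$ ranges over a basis of $V_j$. These finitely many subspaces cover $V_i$, so by irreducibility one of them equals $V_i$, i.e.\ $V_iV_j \subseteq V_k$ for a \emph{single} $k$. Then $V_i = V_i I \subseteq V_iV_j \subseteq V_k$ and $V_j = I V_j \subseteq V_iV_j \subseteq V_k$, forcing $V_i = V_k = V_j$ by maximality; no role-swapping is needed. The submonoid claim follows the same way: one first gets $\cS^0\cS^0 \subseteq V_k$ for one $k$, then $I = I\cdot I \in V_k$ and the (now established) uniqueness of the component through $I$ give $V_k = \cS^0$.

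For comparison, the paper simply defers (3) to \cite[Remark~5.2]{putcha88}, whose proof hinges on the closure of a product of irreducible sets being irreducible; in the linear setting that is precisely \cref{l:prodclosure}. Your one-sided-linear approach is a pleasant way to avoid that forward reference, once the gap above is patched.
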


\begin{proof}
  \ref{closure-structure:semigroup}
  We have $\cS \cS \subseteq \cS \subseteq \overline{\cS}$.
  \cref{l:closure-multiplication} implies $\overline{\cS} \,\overline{\cS} \subseteq \overline{\cS}$.

  \ref{closure-structure:group}
  Clearly $\overline{\cS} \cap \GL_d(K)$ is a Zariski-closed subsemigroup of $\GL_d(K)$.
  Therefore it is a group \cite[Lemma 10]{derksen-jeandel-koiran05}.

  \ref{closure-structure:monoid} By \cite[Remark 5.2]{putcha88} (the proof is the same as the one for linear algebraic groups).
\end{proof}

Our main theorem in this section is the following.

\begin{theorem} \label{t:one-matrix}
  There exists a computable $N=N(d,K)$, such that for every $A \in \GL_d(K)$ we have $\overline{\langle A \rangle}^{0} = \lspan\{\, A^{Ni} \,:\, i \ge 0 \,\}$.
  In particular, $\overline{\langle A \rangle}$ is computable.
\end{theorem}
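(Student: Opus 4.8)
The plan is to reduce to the case of a single eigenvalue via the multiplicative Jordan--Chevalley decomposition, understand the closure of a cyclic semigroup in terms of the closure of the cyclic group generated by the semisimple part, and then extract the component of the identity.

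First I would pass to the Jordan normal form of $A$ over an algebraic closure, writing $A = A_s A_u$ where $A_s$ is diagonalizable and $A_u$ is unipotent, and these commute. Since $A_u - I$ is nilpotent, $A_u = \exp(N)$ for a nilpotent $N$, and $A_u^i = \exp(iN)$ is a polynomial in $i$ with matrix coefficients; hence the powers $A_u^i$ span a fixed subspace $U$ of dimension at most $d$, and in fact $\overline{\langle A_u\rangle}^0 = \lspan\{A_u^i : i \ge 0\} = U$ (the closure is a subspace by Lemma~\ref{l:closure-structure}, and it contains the polynomial curve $i \mapsto A_u^i$; a single subspace already contains all powers because of the polynomial form). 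The subtle point is the semisimple part: the eigenvalues of $A_s$ are roots of a polynomial of degree $\le d$ over $K$, so they generate a multiplicative group whose torsion has bounded order and whose torsion-free rank is bounded; one then argues that there is an integer $N_0 = N_0(d,K)$ such that raising to the $N_0$-th power kills all roots of unity among ratios of eigenvalues and makes the Zariski closure of $\langle A_s^{N_0}\rangle$ connected. This is where the quantity $N$ comes from: it is a bound depending only on $d$ and $K$ that works uniformly for all $A \in \GL_d(K)$, and its computability rests on the fact that over a number field there are only finitely many roots of unity and one can bound the degrees and heights involved.

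With $N$ in hand, the next step is to show $\overline{\langle A\rangle}^0 = \lspan\{A^{Ni} : i \ge 0\}$. One inclusion is easy: the right-hand side is a subspace (hence irreducible and closed), it lies in $\overline{\langle A\rangle}$ because it is the closure of $\{A^{Ni}\}$, and it contains $I$ as a limit point in the linear Zariski sense precisely because raising to the $N$-th power landed us in the connected part --- more carefully, $\lspan\{A^{Ni} : i\ge 0\}$ is a closed subsemigroup (by Lemma~\ref{l:closure-multiplication}, since $A^{Ni}A^{Nj} = A^{N(i+j)}$) that contains the identity of the monoid structure on this span, so it is an irreducible submonoid contained in the unique identity component $\overline{\langle A\rangle}^0$. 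For the reverse inclusion, write $\overline{\langle A\rangle} = \bigcup_{r} A^r \overline{\langle A^N\rangle}$ as a finite union over residues $r$ modulo $N$ (using that $A$ normalizes $\overline{\langle A^N\rangle}$ and multiplication by $A^r$ is a homeomorphism), deduce that $\overline{\langle A^N\rangle}$ is a union of components among which is $\overline{\langle A\rangle}^0$, and then show $\overline{\langle A^N\rangle}$ is itself irreducible by the choice of $N$ (connectedness of the closure of $\langle A_s^N\rangle$ plus the fact that the unipotent span is always irreducible, combined via the commuting decomposition). Hence $\overline{\langle A^N\rangle} = \lspan\{A^{Ni}: i \ge 0\} = \overline{\langle A\rangle}^0$.

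Finally, computability of $\overline{\langle A\rangle}$ follows: we compute $N = N(d,K)$, then compute the subspace $\lspan\{A^{Ni} : i \ge 0\}$ (a stabilizing sequence of finite-dimensional spans, so it terminates in at most $d^2$ steps), obtaining $\overline{\langle A\rangle}^0$; then $\overline{\langle A\rangle} = \bigcup_{r=0}^{N-1} A^r \cdot \overline{\langle A\rangle}^0$, a finite union of subspaces, each computed by a matrix multiplication. I expect the main obstacle to be the second step: producing an \emph{explicit computable} $N$ depending only on $d$ and $K$ that simultaneously trivializes all roots of unity arising as ratios of eigenvalues of an arbitrary $A \in \GL_d(K)$ and forces connectedness of the relevant closure --- this requires controlling the torsion in the multiplicative group generated by algebraic numbers of bounded degree over $K$, which is where a genuinely number-theoretic input (finiteness of roots of unity of bounded degree, effective bounds on their order) is needed, rather than pure linear algebra.
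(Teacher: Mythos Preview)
Your overall architecture is sound, and you correctly identify the role of $N$: it must kill all roots of unity arising as ratios of eigenvalues of $A$, and such an $N=N(d,K)$ is computable for the number-theoretic reason you indicate. The coset decomposition $\overline{\langle A\rangle}=\bigcup_{r=0}^{N-1}A^{r}\,\overline{\langle A^{N}\rangle}$ and the final computability argument are also fine.

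The genuine gap is the step ``show $\overline{\langle A^{N}\rangle}$ is itself irreducible by the choice of $N$ (connectedness of the closure of $\langle A_{s}^{N}\rangle$ plus the fact that the unipotent span is always irreducible, combined via the commuting decomposition).'' This is the heart of the matter, and the Jordan--Chevalley decomposition alone does not deliver it: even if $\{A_{s}^{Ni}\}$ and $\{A_{u}^{Ni}\}$ are each irreducible and $A_{s},A_{u}$ commute, there is no general principle saying the diagonal $\{A_{s}^{Ni}A_{u}^{Ni}\}$ is irreducible in the linear Zariski topology. You would need an additional input, for instance that a \emph{non-degenerate} linear recurrence (one whose characteristic roots have no root-of-unity ratios) has only finitely many zeros, or that the Zariski closure of $\langle A^{N}\rangle$ is a connected algebraic group and hence Zariski-irreducible, which then descends to linear-Zariski irreducibility. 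Either route works but is not what you wrote. Note too that your ``easy'' inclusion already presupposes this: you claim $\lspan\{A^{Ni}\}\subseteq\overline{\langle A\rangle}$ ``because it is the closure of $\{A^{Ni}\}$,'' but the span equals the closure only once irreducibility is known; a priori the $A^{Ni}$ could be scattered across several components of $\overline{\langle A\rangle}$, and their span need not lie in the union.

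The paper avoids this difficulty entirely. Rather than proving irreducibility of $\{A^{Ni}\}$, it argues from the other side: the identity component $Z_{0}=\overline{\langle A\rangle}^{0}$ is a vector space, and since $A$ permutes the finitely many components there is \emph{some} $n$ with $A^{n}Z_{0}=Z_{0}$. A direct linear-algebra lemma on generalized eigenspaces (\cref{l:invariant-no-roots-of-unity,l:invariant-exponent}) shows that $A^{n}$-invariance of a subspace forces $A^{N}$-invariance, whence $A^{N}\in Z_{0}$; since $Z_{0}$ is a subspace and a submonoid, $\lspan\{A^{Ni}\}\subseteq Z_{0}$, and the coset decomposition gives equality. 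This is more elementary than either the algebraic-group or the linear-recurrence route, staying entirely within explicit computations with $(A-\lambda I)^{k}$.
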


By $\mu(\QQbar)$ we denote the group of all roots of unity, where $\QQbar$ denotes the algebraic closure of $\bQ$, which is also the algebraic closure of $K$.

\begin{lemma} \label{l:invariant-no-roots-of-unity}
  Let $A \in \GL_d(K)$.
  Assume that for any two eigenvalues $\lambda$,~$\lambda' \in \QQbar$ of $A$ for which $\lambda/\lambda' \in \mu(\QQbar)$, it holds that $\lambda=\lambda'$.
  Let $n \ge 1$.
  Then a vector space $V \subseteq K^{d}$ is $A$-invariant if and only if it is $A^n$-invariant.
\end{lemma}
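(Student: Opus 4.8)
The forward implication is immediate: if $AV\subseteq V$ then $A^nV\subseteq V$. The content is the converse, and the plan is to show that $A$ lies in the commutative subalgebra $K[A^n]\subseteq M_d(K)$; granting this, $A=p(A^n)$ for some $p\in K[x]$, and then every $A^n$-invariant subspace is \emph{a fortiori} $A$-invariant. Since $K[A^n]\subseteq K[A]$, it suffices to prove $\dim_K K[A^n]=\dim_K K[A]$, i.e.\ that the minimal polynomials $\operatorname{minpoly}_K(A)$ and $\operatorname{minpoly}_K(A^n)$ have the same degree. As the degree of the minimal polynomial is unchanged under field extension, I would carry out this comparison over $\QQbar$.

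Over $\QQbar$, let $\lambda_1,\dots,\lambda_r$ be the distinct eigenvalues of $A$; since $A\in\GL_d(K)$ each $\lambda_i\neq 0$, and since $\chr K=0$ also $n\neq 0$ in $K$. The hypothesis forces $\lambda_i^n\neq\lambda_j^n$ for $i\neq j$: otherwise $\lambda_i/\lambda_j$ would be an $n$-th root of unity, hence by assumption $\lambda_i=\lambda_j$. Next I would show that $A$ and $A^n$ have the same generalized-eigenspace decomposition. Writing $V_i=\ker(A-\lambda_i I)^d$ for the generalized $\lambda_i$-eigenspace of $A$, the factorization $x^n-\lambda_i^n=(x-\lambda_i)\,q_i(x)$ with $q_i(x)=x^{n-1}+\lambda_i x^{n-2}+\dots+\lambda_i^{n-1}$ gives $A^n-\lambda_i^n I=(A-\lambda_i I)\,q_i(A)$; on $V_i$ the operator $q_i(A)$ has only the eigenvalue $q_i(\lambda_i)=n\lambda_i^{n-1}\neq 0$, so $q_i(A)|_{V_i}$ is invertible. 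Hence $(A^n-\lambda_i^n I)^k|_{V_i}=(A-\lambda_i I)^k|_{V_i}\cdot\bigl(q_i(A)|_{V_i}\bigr)^k$ vanishes exactly when $(A-\lambda_i I)^k|_{V_i}$ does, so $(A^n-\lambda_i^n I)|_{V_i}$ is nilpotent of the same index $e_i$ as $(A-\lambda_i I)|_{V_i}$, and $V_i$ is contained in the generalized $\lambda_i^n$-eigenspace of $A^n$. Summing over $i$ and comparing dimensions (the $\lambda_i^n$ being pairwise distinct) forces these inclusions to be equalities.

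Finally I would read off the degrees. By the above $\operatorname{minpoly}_{\QQbar}(A)=\prod_i(x-\lambda_i)^{e_i}$, of degree $\sum_i e_i$; and since $\lambda_1^n,\dots,\lambda_r^n$ are the distinct eigenvalues of $A^n$ with generalized eigenspaces $V_i$ and nilpotency indices $e_i$, likewise $\operatorname{minpoly}_{\QQbar}(A^n)=\prod_i(x-\lambda_i^n)^{e_i}$, again of degree $\sum_i e_i$. Thus the two minimal polynomials have equal degree, whence $A\in K[A^n]$, which completes the argument. The one delicate point is the stability of the nilpotency index under passage to $n$-th powers; this rests entirely on $q_i(\lambda_i)\neq 0$ and on the distinctness of the $\lambda_i^n$, and it is precisely here that the invertibility of $A$ (giving $\lambda_i\neq 0$) and the no-ratio-is-a-root-of-unity hypothesis (giving $\lambda_i^n\neq\lambda_j^n$) are used.
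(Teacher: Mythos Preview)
Your argument is correct and is, in fact, more elegant than the paper's own proof. The key idea---showing that $A\in K[A^n]$ by comparing the degrees of the minimal polynomials of $A$ and $A^n$---is not the route the paper takes. Both proofs use the same two underlying facts (that the $\lambda_i^n$ are pairwise distinct, and that $q_i(A)|_{V_i}$, equivalently $\prod_{j=1}^{n-1}(A-\zeta^j\lambda_i I)$, is invertible), but they deploy them differently.

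The paper works directly with the subspace $V$: after passing to $\QQbar$, it first shows $\ker(A^n-\lambda^n I)^i=\ker(A-\lambda I)^i$ via the factorization $A^n-\lambda^n I=\prod_{j=0}^{n-1}(A-\zeta^j\lambda I)$, uses this to decompose $V$ along the generalized eigenspaces of $A$, reduces to the single-eigenvalue case, and then runs an explicit downward induction showing $(A-\lambda I)^i v\in V$ for each $v\in V$. Your approach sidesteps all of that by proving the stronger statement $K[A]=K[A^n]$, from which the invariance of $V$ is immediate; it also avoids the (harmless but tacit) descent from $\QQbar$ back to $K$, since you only pass to $\QQbar$ to compute degrees. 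The paper's approach is more hands-on and perhaps makes the role of the generalized eigenspaces more visible, but yours is shorter and yields a cleaner conclusion.
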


In the following proof we make use of the identity $a^n - b^n = \prod_{j=0}^{n-1} (a-\zeta^jb)$,
if $a$, $b$ commute and $\zeta$ is an $n$-th root of unity.

\begin{proof}[Proof of \cref{l:invariant-no-roots-of-unity}]
  If $V$ is $A$-invariant, then it is $A^n$-invariant.
  It suffices to show the converse.
  Without restriction we work over $\QQbar$.
  For every $\lambda \in \QQbar$, the space $V$ is $A$-invariant if and only if it is $(A - \lambda I)$-invariant.
  If $\lambda_1$, $\ldots\,$,~$\lambda_r$ are the pairwise distinct eigenvalues of $A$, then every generalized eigenspace $\ker(A-\lambda_i I)^d$ is $A$-invariant.
  If $V$ is $A$-invariant, we can consider the generalized eigenspaces of the restriction $A|_V$ to obtain a decomposition
  \[
     V =  \bigoplus_{i=1}^r (\ker(A- \lambda_i I)^d \cap V).
  \]

  Let $\lambda$ be an eigenvalue of $A$, and let $\zeta$ be a primitive $n$-th root of unity (which exists because $\QQbar$ is algebraically closed).
  Then
  \[
    \begin{split}
      (A^n - \lambda^n I)^i &= (A-\lambda I)^i \prod_{j=1}^{n-1} (A-\zeta^j \lambda I)^i \\
      &= \prod_{j=1}^{n-1} (A-\zeta^j \lambda I)^i \cdot (A-\lambda I)^i.
    \end{split}
  \]
  for $i \ge 0$.
  By our assumption on the ratios of eigenvalues, none of the $\zeta^j \lambda$ with $j \in [1,n-1]$ are eigenvalue of $A$.
  Thus, the matrices $(A-\zeta^j \lambda I)^i$ are invertible for $j \in [1,n-1]$.
  Consequently $\ker(A^n - \lambda^n I)^i = \ker(A-\lambda I)^i$.

  Let $\lambda_1$, $\ldots\,$,~$\lambda_r$ denote the pairwise distinct eigenvalues of $A$.
  Since $V$ is $A^n$-invariant,
  \[
    V = \bigoplus_{i=1}^r (\ker(A^n - \lambda_i^n I)^d \cap V) = \bigoplus_{i=1}^r (\ker(A-\lambda_i I)^d \cap V).
  \]
  It therefore suffices to show the claim when $A$ has a single eigenvalue $\lambda$.

  Since $V$ is $A^n$-invariant, it is also $(A^n - \lambda^n I)$-invariant.
  We show that it is $(A- \lambda I)$ invariant, then it is also $A$-invariant.
  It suffices to show that for every $0 \ne v \in V$ and all $i \ge 0$ we have $(A-\lambda I)^iv \in V$.

  Let $0 \ne v \in V$.
  For all $i \ge 0$, let $v_i \coloneqq (A-\lambda I)^iv$ and $v_i' \coloneqq (A^n-\lambda^n I)^iv$.
  Let $k \ge 0$ be minimal such that $v \in \ker((A-\lambda I)^{k+1}) = \ker((A^n -\lambda^n I)^{k+1})$.
  Then $v_k$ is an eigenvector of $A$ with respect to the eigenvalue $\lambda$.
  Thus
  \begin{align*}
    0 \ne v_k' &= \Big( \sum_{j=0}^{n-1} A^j \lambda^{n-1-j} \Big)^k (A-\lambda I)^k v  \\
     &= \Big( \sum_{j=0}^{n-1} A^j \lambda^{n-1-j} \Big)^k v_k = (n \lambda^{n-1})^k v_k.
  \end{align*}
  Hence $v_k' \in V$ implies $v_k \in V$.

  Suppose now that $v_{k}$, $\ldots\,$,~$v_{i+1} \in V$; we show $v_i \in V$.
  Again
  \[
    v_i' = \Big( \sum_{j=0}^{n-1} A^j \lambda^{n-1-j} \Big)^i (A-\lambda I)^i v =  \Big( \sum_{j=0}^{n-1} A^j \lambda^{n-1-j} \Big)^i v_i.
  \]
  Now $Av_i = \lambda v_{i+1}$, and so $A^j v_i \in \lspan\{ v_k,\ldots, v_{i+1} \} \subseteq V$ for all $j \in [1,n-1]$.
  Since also $v_i' \in V$, we get $v_i \in V$.
\end{proof}

\begin{lemma} \label{l:computable-n0}
  There exists a computable $N_0=N_0(d,K)$ such that, for every finite field extension $L/K$ with $[L:K] \le d$ and every root of unity $\zeta \in L$, one has $\zeta^{N_0}=1$.
\end{lemma}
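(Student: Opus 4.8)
The plan is to reduce the statement to the elementary, effective fact that Euler's totient function grows without bound. Since $K$ is a number field, we can compute $e \coloneqq [K:\bQ]$, and by the tower law any finite extension $L/K$ with $[L:K] \le d$ satisfies $[L:\bQ] = [L:K]\,[K:\bQ] \le de$. If $\zeta \in L$ is a root of unity of multiplicative order $m$, then $\bQ(\zeta) \subseteq L$, and since the $m$-th cyclotomic polynomial is irreducible over $\bQ$ we have $[\bQ(\zeta):\bQ] = \varphi(m)$, with $\varphi$ Euler's totient function. Hence $\varphi(m) \le [L:\bQ] \le de$.

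The next step is the classical bound $\varphi(m) \ge \sqrt{m/2}$, valid for all $m \ge 1$ (by multiplicativity it suffices to check it on prime powers, where it is immediate). Consequently the set
\[
  S \coloneqq \{\, m \ge 1 : \varphi(m) \le de \,\}
\]
is contained in $\{\, m : m \le 2(de)^2 \,\}$, so it is finite and can be computed by enumerating all $m \le 2(de)^2$ and evaluating $\varphi(m)$. (One does not even need the sharp totient bound: $\varphi(m) \le de$ already restricts both the primes dividing $m$ and their exponents, which suffices for finiteness.)

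Finally I would set $N_0 = N_0(d,K) \coloneqq \operatorname{lcm}(S)$, which depends only on $d$ and $K$ and is computable by the previous paragraph (one could also take the cruder choice $N_0 \coloneqq \bigl(2(de)^2\bigr)!$). Given any $L$ and $\zeta$ as in the statement, the order $m$ of $\zeta$ lies in $S$ by the first paragraph, so $m \mid N_0$ and therefore $\zeta^{N_0} = 1$. There is no genuine obstacle here: the only input beyond routine bookkeeping is the effective finiteness of $S$, i.e.\ the elementary fact that only finitely many integers have totient below a given bound, and that such integers can be explicitly enumerated.
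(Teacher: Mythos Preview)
Your proof is correct and follows essentially the same approach as the paper: both use the tower law to bound $[L:\bQ]$ by $d\,[K:\bQ]$, observe that $[\bQ(\zeta):\bQ]=\varphi(m)$ forces $\varphi(m)\le d\,[K:\bQ]$, invoke the (effective) fact that only finitely many $m$ satisfy this, and take $N_0$ to be the lcm of those $m$. Your version is slightly more explicit in that you supply the bound $\varphi(m)\ge\sqrt{m/2}$ to make the enumeration concrete, whereas the paper simply notes $\varphi(n)\to\infty$.
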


\begin{proof}
  Let $\zeta \in L$ be a root of unity of some order $n \ge 1$.
  Then
  \[
    [L:\bQ] \ge [\bQ(\zeta):\bQ]=\phi(n),
  \]
  with $\phi(n)$ denoting the Euler-$\phi$-function.
  Since
  \[
    [L:\bQ] = [L:K][K:\bQ] \le d [K:\bQ],
  \]
  we must have $\phi(n) \le d [K:\bQ]$.
  Since $\phi(n) \to \infty$ as $n \to \infty$, but the right hand side of the inequality is constant, only finitely many values are possible for $n$.
  By taking $N_0$ to be the least common multiple of these values, the claim follows.
\end{proof}

The constant $N_0=N_0(d,K)$ in the previous lemma is explicit and does not depend on the matrix $A$.

\begin{lemma} \label{l:invariant-exponent}
  Let $N \coloneqq N(d,K) \coloneqq  N_0(d^2,K)$.
  Let $A \in \GL_d(K)$, and let $V \subseteq K^{d}$ be a vector subspace.
  If $V$ is $A^n$-invariant for some $n \ge 1$, then $V$ is $A^N$-invariant.
\end{lemma}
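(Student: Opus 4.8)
The plan is to deduce \cref{l:invariant-exponent} from \cref{l:invariant-no-roots-of-unity} applied not to $A$ itself but to the power $A^N$, using the uniform bound on orders of roots of unity provided by \cref{l:computable-n0}. The point of passing to $A^N$ is that raising to the $N$-th power collapses exactly those ratios of eigenvalues that are roots of unity, so that $A^N$ will satisfy the hypothesis of \cref{l:invariant-no-roots-of-unity} even though $A$ need not.

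First I would verify that $A^N \in \GL_d(K)$ satisfies the hypothesis of \cref{l:invariant-no-roots-of-unity}: if $\mu$,~$\mu'$ are eigenvalues of $A^N$ with $\mu/\mu' \in \mu(\QQbar)$, then $\mu = \mu'$. The eigenvalues of $A^N$ are precisely the $N$-th powers $\lambda^N$ of the eigenvalues $\lambda$ of $A$, so this amounts to the following claim: whenever $\lambda$,~$\lambda'$ are eigenvalues of $A$ with $(\lambda/\lambda')^N \in \mu(\QQbar)$, one has $(\lambda/\lambda')^N = 1$.

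The key step is a field-degree estimate. If $(\lambda/\lambda')^N$ is a root of unity, then $\eta \coloneqq \lambda/\lambda'$ is itself a root of unity. Since $\lambda$ and $\lambda'$ are roots of the characteristic polynomial of $A$, a polynomial of degree $d$ over $K$, we have $[K(\lambda):K] \le d$ and $[K(\lambda'):K] \le d$, and hence, by the tower law, $\eta \in K(\lambda,\lambda')$ with $[K(\eta):K] \le [K(\lambda,\lambda'):K] \le d^2$. Now \cref{l:computable-n0}, applied with $d^2$ in place of $d$, guarantees that every root of unity lying in an extension of $K$ of degree at most $d^2$ has order dividing $N_0(d^2,K) = N$. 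Therefore $\eta^N = 1$, i.e.\ $\lambda^N = (\lambda')^N$, which establishes the claim and hence the hypothesis of \cref{l:invariant-no-roots-of-unity} for $A^N$. Finally, since $V$ is $A^n$-invariant it is $A^{nk}$-invariant for every $k \ge 1$; taking $k = N$ shows that $V$ is $(A^N)^n$-invariant, and \cref{l:invariant-no-roots-of-unity} applied to $A^N$ (with the integer $n$) then yields that $V$ is $A^N$-invariant.

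I expect the only genuine subtlety to be the degree bound: one must account for the fact that the ratio of two eigenvalues, each of degree at most $d$ over $K$, may live in an extension of degree up to $d^2$, which is exactly the reason $N$ is defined as $N_0(d^2,K)$ rather than $N_0(d,K)$. Everything else — the reduction from $A^n$-invariance to $(A^N)^n$-invariance and the appeal to \cref{l:invariant-no-roots-of-unity} — is routine bookkeeping.
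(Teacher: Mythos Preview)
Your proposal is correct and follows essentially the same approach as the paper: verify via the degree bound $[K(\lambda,\lambda'):K]\le d^2$ and \cref{l:computable-n0} that $A^N$ satisfies the hypothesis of \cref{l:invariant-no-roots-of-unity}, then pass from $A^n$-invariance to $A^{nN}=(A^N)^n$-invariance and apply \cref{l:invariant-no-roots-of-unity} to $A^N$. Your version is in fact slightly more explicit in noting that $(\lambda/\lambda')^N\in\mu(\QQbar)$ forces $\lambda/\lambda'\in\mu(\QQbar)$, a step the paper leaves implicit.
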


\begin{proof}
  Let $\lambda$,~$\lambda' \in \QQbar$ be eigenvalues of $A$ and let $N = N_0(d^2,K)$.
  Since $\lambda$,~$\lambda'$ are both roots of the characteristic polynomial, which has degree $d$, the extension $K(\lambda,\lambda')/K$ has degree at most $d^2$.
  If there exists a root of unity $\zeta$ such that $\lambda/\lambda' = \zeta$, then $\zeta \in K(\lambda,\lambda')$ and hence $\zeta^{N} = 1$.
  Thus $A^{N}$ satisfies the assumption of \cref{l:invariant-no-roots-of-unity}.

  Suppose now that $V$ is $A^n$-invariant ($n \ge 1$).
  Then $V$ is $A^{nN}$-invariant.
  \cref{l:invariant-no-roots-of-unity} gives that $V$ is $A^N$-invariant.
\end{proof}

Let $A \in \GL_d(K)$.
We recall (\cref{l:closure-structure}), that $\overline{\langle A \rangle} \cap GL_d(K)$ is a linear algebraic group, and $\overline{\langle A \rangle}$ has a unique irreducible component containing $I$.
This component is denoted by $\overline{\langle A \rangle}^0$.

\begin{proof}[Proof of \cref{t:one-matrix}]
  Let $Z_0\coloneqq \overline{\langle A \rangle}^0$.
  Since $A$ acts by permutation on the finitely many irreducible components of $\overline{\langle A\rangle}$, there exists an $N > 0$ such that $A^N Z_0 = Z_0$. 
  \Cref{l:invariant-exponent} implies that we can take $N = N(d,K)$, which is computable without knowing $Z_0$.
  
  Now $A^N \in Z_0$ and hence $\langle A^{N} \rangle \in Z_0$, because $Z_0$ is a submonoid of $\overline{\langle A \rangle}$ (by \ref{closure-structure:monoid} of \cref{l:closure-structure}).
  Since $Z_0$ is a vector space, even $\lspan \langle A^N \rangle  \subseteq Z_0$.
  Thus $\langle A \rangle \subseteq \bigcup_{i=0}^{N-1} A^i \lspan{\langle A^N \rangle} \subseteq \bigcup_{i=0}^{N-1} A^i Z_0 \subseteq \overline{\langle A \rangle}$.
  Taking closures, we get equality throughout, so $\overline{\langle A \rangle}^{0} = \lspan{\langle A^N \rangle}$.
  
  Finally, by the Cayley-Hamilton theorem there exist (computable) $\lambda_0$, $\ldots\,$,~$\lambda_{d-1} \in K$ such that $(A^N)^d + \lambda_{d-1} (A^N)^{d-1} + \cdots + \lambda_0 I = 0$.
  Multiplying by $A^{Nm}$ for $m \ge 0$, we see inductively that $\lspan{\langle A^N \rangle} = \lspan\{I, A^N, A^{2N}, \ldots, A^{(d-1)N}\} $.
  Thus $\overline{\langle A \rangle}^{0}$ and $\overline{\langle A \rangle}$ are computable.
\end{proof}

\begin{example}
  Let
  \[
    A =
    \begin{pmatrix}
      2 & 1 & 0 & 0 & 0 & 0 \\
      0 & 2 & 0 & 0 & 0 & 0 \\
      0 & 0 & 3 & 1 & 0 & 0 \\
      0 & 0 & 0 & 3 & 1 & 0 \\
      0 & 0 & 0 & 0 & 3 & 0 \\
      0 & 0 & 0 & 0 & 0 & -2
    \end{pmatrix}.
  \]
  Since $\phi(n) > 6^2=36$ for $n > 126$, we can take $N=N(6,\bQ)=126$.
  But since the only root of unity appearing for the specific $A$ is $-1$, we can actually take $N=2$.
  Setting $B=A^2$ we find $Z_0= \lspan\{\, I, B, B^2, B^3, B^4, B^5 \,\}$ to be $Z_0 = \lspan\{ E_{11}+E_{22}+E_{66}, E_{12}, E_{33}+E_{44}+E_{55}, E_{34}+E_{45}, E_{35} \}$.
  Finally $\overline{\langle A \rangle} = Z_0 \cup -E_{66} Z_0$.
  Up to base change the same is true for any matrix with Jordan normal form $A$.
\end{example}

\begin{remark}
  Instead of using the bound $N(d,K)$ one may compute the eigenvalues of $A$ explicitly in a suitable number field.
  It is then possible to compute the pairwise ratio of the eigenvalues and check which ones are a root of unity.
  This has the disadvantage of having to perform computations in a field extension of $K$ and that the resulting $N$ depends on $A$.
  However, the resulting $N$ could be much smaller than  $N(d,K)$.
\end{remark}

\section{Invertible Matrices} \label{sec:invertible}

In this section we consider the computation of $\overline{\langle X \rangle}$ when $X \subseteq M_d(K)$ is a closed set, and each irreducible component of $X$ contains invertible matrices.
In this case, $\overline{\langle X \rangle} \cap \GL_d(K)$ is a linear algebraic group (\cref{l:closure-structure}).
The algorithm is that of \cite{derksen-jeandel-koiran05}, with the Zariski topology replaced by the linear Zariski topology.
However, care must be taken in checking the correctness of the algorithm, as the use of the linear Zariski topology introduces some subtle difficulties.
We first state the algorithm, \cref{alg:groupclosure}, and illustrate it on a short example.

\begin{algorithm}
  \caption{\small Computation of  $\overline{\langle X \rangle}$ when the invertible matrices are dense in $X$.
    The irreducible components $Z_1$, $\ldots\,$,~$Z_r$ are given by their bases.
    Throughout the algorithm, $N$ is an irreducible closed set, containing the identity matrix, that is monotonically increasing with each iteration.
  Similarly, $T$ is a finite subset of $\langle I, A_1,\ldots,A_n \rangle$ that is monotonically increasing.}
   \label{alg:groupclosure}
  \begin{algorithmic}[1]
    \Function{GroupClosure}{$X$}
    \State $Z_1$, $\ldots\,$~$Z_l \gets \textrm{Irreducible components of $X$}$ 
    \Require{$\GL_d(K) \cap Z_i \ne \emptyset$ for all $i \in [1,l]$}
    \For{$i=1,\ldots,l$}
    \State $A_i \gets \textrm{An invertible element of $Z_i$}$ \label{alg1:choice}
    \EndFor
    \State $N \gets \overline{(A_1^{-1}Z_1) \cdots (A_l^{-1}Z_l)}$ \label{alg1:prod1} 
    \State $T \gets \{I,A_1,\ldots,A_l\}$
    \Repeat
    \State $N' \gets N$
    \State $T' \gets T$
    \For{$A \in T$}
    \State $N \gets \overline{N \overline{\langle A\rangle}^0}$ \label{alg1:0comp}
    \State $N \gets \overline{N(ANA^{-1})}$ \label{alg1:conj}
    \For{$B \in T$}
    \If{$AB \not\in T N$} \label{alg1:check-t}
    \State $T \gets T \cup \{AB\}$ \label{alg1:prod-tn}
    \EndIf
    \EndFor
    \EndFor
    \Until{$N'=N$ and $T'=T$}
    \State
    \textbf{return} $T N$
    \EndFunction
  \end{algorithmic}
\end{algorithm}

\begin{example}
  Consider
  \[
    A_1 =
    \begin{pmatrix}
      2 & 0 & 0 \\
      0 & 3 & 0 \\
      0 & 0 & -3 \\
    \end{pmatrix}, \quad
    A_2 =
    \begin{pmatrix}
      1 & 1 & 0 \\
      0 & 1 & 0 \\
      0 & 0 & 1
    \end{pmatrix},
  \]
  and $X = \bQ A_1 \cup \bQ A_2$.
  After initialization, $N=\{I\}$ and $T=\{I,A_1,A_2\}$.
  Now
  \begin{align*}
    \overline{\langle A_1 \rangle}^0 &= \lspan\{ E_{11}, E_{22}+E_{33}\},\\
    \overline{\langle A_2 \rangle}^0 &= \lspan\{ E_{11}+E_{22}+E_{33}, E_{12} \}.
  \end{align*}
  So $N$ becomes $\lspan\{E_{11},E_{22}+E_{33}, E_{12}\}$ in the first iteration of the loop at line 7 (\cref{l:prodclosure} below), and $TN=N \cup A_1N$ where $A_1N = \lspan\{E_{11},E_{22}-E_{33}, E_{12}\}$, so $T$ remains the same.
  In the second iteration $T$ and $N$ do not change anymore and the algorithm terminates.
\end{example}

In \cref{alg1:0comp} we make use of the case of a single invertible matrix to compute $\overline{\langle A \rangle}^0$.
Some steps need further elaboration:
\begin{enumerate}[label=\Alph*)]
\item In \cref{alg1:choice}, we need to be able to choose $A_i \in Z_i \cap \GL_d(K)$, under the assumption that this intersection is nonempty.
\item In \cref{alg1:prod1,alg1:0comp,alg1:conj}, we need to compute the closure of the product of two (or more) irreducible closed sets.
\end{enumerate}
We first explain these steps, and then show termination and correctness of the algorithm.

\subsection{Picking elements on which a polynomial does not vanish} \label{ss:cns}

The problem of picking an element in $Z_i \cap \GL_d(K)$ is an instance of the more general problem of picking an element in $Z_i$ on which a given polynomial (in this case, the determinant) does not vanish.
We give the general result, as we need it later.

Let $V \subseteq M_d(K)$ be a vector subspace and let $R=K[x_{ij} : 1 \le i,j \le d]$ be a polynomial ring in $d^2$ indeterminates.
Let $A_0 \in M_d(R)$ be the matrix whose $ij$-th entry is $x_{ij}$.
The space $V$ is defined by a finite number of homogeneous linear equations in the variables $x_{ij}$.
We can transform this system of equations into a triangular form by Gaussian elimination, and substitute into the entries of $A_0$ to eliminate a number of variables.
This leaves us with a matrix $A \in M_d(R)$ with the following property: Substituting any elements $\alpha_{ij} \in K$ for $x_{ij}$ yields a matrix in $V$, and conversely, every element of $V$ can be obtained in this way.
We call $A$ a \defit{generic matrix} of $V$.
\footnote{A more conceptual way to think about this is that the coordinate ring of $V$ is again a polynomial ring, and $A$ represents the homomorphism of coordinate rings $K[M_d(K)] \to K[V]$.}

\begin{lemma} \label{l:prod-containment}
  Let $V_1$, $\ldots\,$,~$V_n \subseteq M_d(K)$ be irreducible closed subsets.
  If $X \subseteq M_d(K)$ is a Zariski-closed subset, then it is possible to decide whether $V_1\cdots V_n \subseteq X$, and if this is not the case, to compute an element of $V_1\cdots V_n \setminus X$.
\end{lemma}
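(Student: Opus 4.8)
The plan is to reduce the containment question to a polynomial-evaluation question using generic matrices, and then to handle the latter using the result of Section~\ref{ss:cns} on picking elements on which a polynomial does not vanish. First I would pick, for each $i \in [1,n]$, a generic matrix $A^{(i)} \in M_d(R_i)$ of the vector space $V_i$, where $R_i = K[x^{(i)}_{kl}]$ is a polynomial ring in the free parameters describing $V_i$ (after Gaussian elimination, as in the construction preceding the lemma). Let $R = R_1 \otimes_K \cdots \otimes_K R_n$ be the polynomial ring in all of these variables together. Then the product $P \coloneqq A^{(1)} A^{(2)} \cdots A^{(n)} \in M_d(R)$ is a matrix whose entries are polynomials in the combined parameters, with the property that substituting arbitrary field elements for the parameters yields exactly the elements of $V_1 V_2 \cdots V_n$, and every element of the product set arises this way. (Here I use that each $V_i$ is irreducible, hence a single vector subspace, so that its coordinate ring is a polynomial ring and the generic-matrix construction applies; this is where irreducibility of the $V_i$ enters.)

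Next, since $X$ is Zariski-closed, it is the vanishing locus of finitely many polynomials $f_1, \ldots, f_m \in K[x_{ij} : 1 \le i,j \le d]$, which we may assume given. Substituting the entries of $P$ into each $f_s$ yields polynomials $g_s \coloneqq f_s(P) \in R$. By the defining property of $P$, the containment $V_1 \cdots V_n \subseteq X$ holds if and only if every $g_s$ vanishes under every substitution of field elements for the parameters, i.e.\ if and only if each $g_s$ is the zero polynomial in $R$ (using that $K$ is infinite, so a polynomial over $K$ that vanishes at all $K$-points is identically zero). Deciding whether a given polynomial is identically zero is an effective routine computation (expand and compare coefficients), so this settles decidability: $V_1 \cdots V_n \subseteq X$ iff $g_1 = \cdots = g_m = 0$ in $R$.

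Finally, suppose the containment fails, so some $g_s$ is a nonzero polynomial in $R$. I would then invoke the effective procedure of Section~\ref{ss:cns} (the one establishing that one can pick an element on which a prescribed polynomial does not vanish) to produce a point $(\alpha^{(1)}, \ldots, \alpha^{(n)})$ in the parameter space at which $g_s$ does not vanish; concretely, one can do this by a sufficient number of evaluations at points of $K$ (possible since $K$ is infinite, and a nonzero polynomial in several variables over an infinite field has a non-root, findable by a bounded search on a large enough grid). Substituting $\alpha^{(i)}$ into $A^{(i)}$ gives matrices $M_i \in V_i$, and by construction $M_1 \cdots M_n \in V_1 \cdots V_n$ while $f_s(M_1 \cdots M_n) = g_s(\alpha) \ne 0$, so $M_1 \cdots M_n \notin X$. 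This is the desired element of $V_1 \cdots V_n \setminus X$. I expect the only genuinely delicate point to be the non-vanishing search — making precise that a nonzero multivariate polynomial over $K$ can be effectively evaluated to a nonzero value (one needs, e.g., to bound the degree of $g_s$ so as to know how large a grid suffices, cf.\ the Schwartz--Zippel-type argument) — but this is exactly what the machinery of Section~\ref{ss:cns} is designed to supply, so it slots in directly.
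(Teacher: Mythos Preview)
Your proposal is correct and follows essentially the same route as the paper: form the product of generic matrices, substitute into the defining polynomials of $X$, test whether the resulting polynomials $g_s$ are identically zero (using that $K$ is infinite), and if not, locate a non-vanishing point by a bounded grid search. Two small remarks: first, your appeal to ``the effective procedure of Section~\ref{ss:cns}'' is slightly circular, since this lemma \emph{is} the content of that section---but you then supply the needed grid-search argument yourself, so there is no actual gap. Second, where you invoke a Schwartz--Zippel-type bound to size the grid, the paper instead cites Alon's Combinatorial Nullstellensatz \cite[Theorem 1.2]{alon99}, which gives a slightly sharper per-variable bound (grid size $t_{ij}^{(k)}+1$ in each coordinate, determined by a maximal monomial of $g_l$); either device suffices.
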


\begin{proof}
  Let $X$ be defined by nonzero polynomials $f_1$, $\ldots\,$,~$f_m \in K[x_{ij} : 1 \le i,j \le d]$.
  We may assume $m \ge 1$ as the claim is trivial otherwise.
  Represent each $V_k$ by a generic matrix $A_k \in M_d(K(\mathbf{y^{(k)}}))$, where $\mathbf{y^{(k)}}=(y_{ij}^{(k)})$ is a family of $d^2$ indeterminates.
  Then
  \begin{align*}
    V_1\cdots V_n = \{\, & A_1(\alpha^{(1)}_{ij}) \cdots A_n(\alpha^{(n)}_{ij})
    : \alpha^{(k)}_{ij} \in K,\,\\ & i,j \in [1,d],\, k \in [1,n] \,\}.
  \end{align*}
  Substituting, each of the polynomials $f_l(x_{ij})$ gives rise to a polynomial $g_l(y^{(1)}_{ij}, \ldots\, y^{(n)}_{ij}) \coloneqq f_l\big(A_1(y_{ij}^{(1)})\cdots A_n(y_{ij}^{(n)})\big)$ in at most $nd^2$ indeterminates.
  Now $V_1\cdots V_n \subseteq X$ if and only if all of $g_1$, $\ldots\,$,~$g_m$ vanish on $K^{nd^2}$.
  A polynomial $g_l$ ($l \in [1,m]$) vanishes on all of $K^{nd^2}$ if and only if it is the zero polynomial,\footnote{We use that $K$ is infinite.} and one checks this by simplifying the expression for $g_l$.

  Suppose now that some $g_l$ is nonzero. Let $\prod_{i,j,k} {(y_{ij}^{(k)})}^{t_{ij}^{(k)}}$ with $t_{ij}^{(k)} \ge 0$ be a monomial of maximal total degree in the support of $g_l$.
  Let $P_{ij}^{(k)} \subseteq K$ be a set of cardinality $t_{ij}^{(k)}+1$.
  By Alon's Combinatorial Nullstellensatz \cite[Theorem 1.2]{alon99}, the finite set
  \begin{align*}
    \{\, & g_l(\alpha_{ij}^{(1)},\ldots,\alpha_{ij}^{(n)}) = f_l\big(A_1(\alpha_{ij}^{(1)})\cdots A_n(\alpha_{ij}^{(n)})\big) : \\ & (\alpha_{ij}^{(k)}) \in M_d(K) \text{ with } \alpha_{ij}^{(k)} \in P_{ij}^{(k)} \,\}
  \end{align*}
  contains a nonzero element.
  Every such element gives rise to an element of $V_1\cdots V_k \setminus X$.
\end{proof}

\begin{example} \label{ex:cns}
  Let $V_1=\lspan\{ E_{11} + E_{12}, E_{21}+E_{22} \}$ and $V_2=\lspan\{ E_{11} + E_{21}, E_{12} + E_{22} \}$, with generic matrices
  \[
    A_1=
    \begin{pmatrix}
      x & x \\
      y & y \\
    \end{pmatrix},\quad
    A_2=
    \begin{pmatrix}
  z & w \\
  z & w \\
\end{pmatrix}.
\]
Set $f_1=x_{11}x_{22} - x_{12}x_{21}$ and $f_2 = (x_{11}-x_{21})(x_{11}-x_{12})$.
Evaluating $f_1$ and $f_2$ on the product of the generic matrices,
\[
  A_1A_2 = \begin{pmatrix}
  2xz & 2xw \\
  2yz & 2yw
\end{pmatrix},
\]
we get $g_1=4xzyw-4xwyz=0$ and $g_2=4(xz-yz)(xz-xw)$.
So $g_1$ vanishes on $V_1V_2$, but $g_2$ has a leading term $4yzxw$.
The Combinatorial Nullstellensatz implies that there is an element in $V_1V_2$ with $w$,~$x$,~$y$,~$z \in \{0,1\}$ on which $g_2$ does not vanish (e.g., $x=z=1$, $y=w=0$).
\end{example}

The special case of a single polynomial $f$ follows by setting $m=1$ and taking $X$ to be the vanishing set of $f$.

\subsection{Computing the closure of a product}

For vector subspaces $V$,~$W \subseteq M_d(K)$ we distinguish the pairwise product $VW \coloneqq \{\, vw : v \in V,\, w \in W \,\}$ which in general is not a vector space, and the product of vector spaces
\[
  V\bcdot W \coloneqq \lspan{VW} =  \lspan\{\, vw : v \in V,\, w \in W \,\},
\]
which is the span of the former.
We are interested mostly in closed sets, and the next lemma simplifies this issue.

\begin{lemma} \label{l:prodclosure}
  Let $V$,~$W \subseteq M_d(K)$ be irreducible closed subsets.
  Then $\overline{VW} = V\bcdot W$.
  In particular, the set $\overline{VW}$ is irreducible.
\end{lemma}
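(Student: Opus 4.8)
The plan is to reduce everything to the single claim that $\overline{VW}$ is \emph{irreducible}. Granting this, the proof finishes quickly: an irreducible closed set in the linear Zariski topology is exactly a vector subspace, so $\overline{VW}$ is a subspace. It contains $VW$, hence it contains $\lspan VW = V \bcdot W$. Conversely $V \bcdot W$ is a subspace, hence closed, and contains $VW$, so $\overline{VW} \subseteq V \bcdot W$. Thus $\overline{VW} = V \bcdot W$, and this set is irreducible. Note that $V$ and $W$ are nonempty (irreducible sets are nonempty by convention), so $VW$ and $\overline{VW}$ are nonempty and irreducibility is meaningful.

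To prove irreducibility I would \emph{not} try to present $VW$ as a continuous image of $V \times W$: the linear Zariski topology does not interact well with products (e.g.\ $V \times W$ need not be irreducible for the product topology), so this route is delicate. Instead, slice by the right-hand factor. Suppose $\overline{VW} \subseteq Y_1 \cup Y_2$ with $Y_1$,~$Y_2$ closed; then $VW \subseteq Y_1 \cup Y_2$. For each fixed $w$, the set $Vw$ is the image of the subspace $V$ under the $K$-linear map $x \mapsto xw$, hence again a vector subspace, hence irreducible; since $Vw \subseteq Y_1 \cup Y_2$ we get $Vw \subseteq Y_1$ or $Vw \subseteq Y_2$. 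Setting $W_i \coloneqq \{\, w \in M_d(K) : Vw \subseteq Y_i \,\}$ for $i \in \{1,2\}$, we therefore have $W \subseteq W_1 \cup W_2$.

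The one point requiring care --- and the only real content of the argument --- is that each $W_i$ is closed. For a fixed $v \in V$, the map $L_v \colon w \mapsto vw$ is $K$-linear, hence continuous, so $L_v^{-1}(Y_i)$ is closed; and $W_i = \bigcap_{v \in V} L_v^{-1}(Y_i)$ is an intersection of closed sets, hence closed. Now $W$ is irreducible and $W \subseteq W_1 \cup W_2$ with $W_1$,~$W_2$ closed, so $W \subseteq W_i$ for some $i$; that is, $Vw \subseteq Y_i$ for all $w \in W$, i.e.\ $VW \subseteq Y_i$, and hence $\overline{VW} \subseteq Y_i$. This proves $\overline{VW}$ is irreducible, and by the first paragraph we are done. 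The same slicing argument, applied inductively, also handles the closure of a product of finitely many irreducible closed sets, which is what is needed when computing products of several factors later on.
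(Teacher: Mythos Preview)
Your proof is correct and takes a genuinely different route from the paper's. The paper argues via the \emph{ordinary} Zariski topology: $V$ and $W$, being vector subspaces over an infinite field, are Zariski-irreducible; matrix multiplication $M_d(K)\times M_d(K)\to M_d(K)$ is Zariski-continuous, so its image $VW$ is Zariski-irreducible; passing to the coarser linear Zariski topology preserves irreducibility, and then the closure $\overline{VW}$ is an irreducible closed set, i.e.\ a subspace, forcing $\overline{VW}=V\bcdot W$. Your slicing argument, by contrast, stays entirely inside the linear Zariski topology and never invokes the finer one --- this is exactly the alternative the paper alludes to in the remark immediately following the lemma (``It is also possible to prove the previous lemma directly, without resorting to the Zariski topology''). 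The paper's route is shorter and leans on off-the-shelf algebraic-geometry facts; yours is more self-contained and makes transparent why continuity of the \emph{bilinear} multiplication map (which fails in the linear Zariski topology, as the paper notes) is not needed --- only continuity of the one-sided linear maps $L_v\colon w\mapsto vw$ is used.
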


\begin{proof}
  The sets $V$, $W \subseteq M_d(K)$ are also closed and irreducible in the Zariski topology.\footnote{To see irreducibility, consider polynomials $f$,~$g \in K[x_{ij}]$ that vanish on proper subsets of $V$, and such that $fg$ vanishes on all of $V$.
  Using the linear homogeneous equations defining $V$, we can eliminate a number of variables in $f$ and $g$ to obtain nonzero polynomials $\hat f$, $\hat g$, in a subset of the variables $\{x_{ij}\}$, with the property that $\hat f \hat g$ vanishes everywhere. However, since $K$ is infinite, this implies $\hat f \hat g=0$, a contradiction to $\hat f$,~$\hat g \ne 0$.}
  In the Zariski topology the multiplication map $\mu \colon M_d(K) \times M_d(K) \to M_d(K), (A,B) \mapsto AB$ is continuous, and hence $\mu(V,W) = VW$ is irreducible \cite[\href{https://stacks.math.columbia.edu/tag/0379}{Lemma 0379}]{stacks-project}.
  Then $VW$ is also irreducible in the, coarser, linear Zariski topology.
  Thus the same is true for the closure $\overline{V W}$  \cite[\href{https://stacks.math.columbia.edu/tag/004W}{Lemma 004W}]{stacks-project}.
  So $\overline{V W}$ is a vector space.
  But $V\bcdot W$ is the smallest vector space containing $VW$, and thus $\overline{VW} = V\bcdot W$.
%
\end{proof}

Now it is easy to compute a generating set for $\overline{VW}$ as the pairwise products of bases of $V$ and $W$.

\begin{remark}
 The multiplication map $\mu$ is \emph{not} continuous in the linear Zariski topology.
 It is also possible to prove the previous lemma directly, without resorting to the Zariski topology, by showing $\overline{V W}=V \bcdot W$ by hand.
\end{remark}

\subsection{Termination and Correctness of \cref{alg:groupclosure}}

Recall that a group $G$ is a torsion group if every element has finite order. We need the following.

\begin{theorem}[Burnside--Schur {\cite[Theorem 2.3.5]{herstein94}}] \label{t:burnside-schur}
  If $G \le \GL_d(K)$ is a finitely generated torsion group, then $G$ is finite.
\end{theorem}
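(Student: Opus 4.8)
The statement is the classical Burnside--Schur theorem, and the plan is to give its classical proof, organized into three steps and adapted to the running convention that $K$ is a number field. Write $G = \langle g_1,\dots,g_n\rangle$ with each $g_i \in \GL_d(K)$. \textbf{Step I (finite exponent).} Every $g \in G$ has finite order, so it satisfies $x^m-1$ for some $m \ge 1$; since $\chr K = 0$ this polynomial is separable, hence $g$ is diagonalizable over $\QQbar$ with eigenvalues that are roots of unity. Each such eigenvalue $\lambda$ is a root of the characteristic polynomial of $g$, which lies in $K[x]$ and has degree $d$, so $\lambda$ lies in an extension of $K$ of degree at most $d$. By \cref{l:computable-n0} there is an integer $N = N_0(d,K)$, independent of $g$, with $\lambda^{N}=1$ for every root of unity $\lambda$ in such an extension. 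As $g$ is diagonalizable with eigenvalues $N$-th roots of unity, $g^{N}=I$. Hence $G$ has exponent dividing $N$.

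\textbf{Step II (reduction to the irreducible case).} We induct on $d$. Regard $G \le \GL_d(\QQbar)$ and fix a $G$-composition series $0 = V_0 \subsetneq V_1 \subsetneq \dots \subsetneq V_r = \QQbar^d$, so that $G$ acts irreducibly on each $V_j/V_{j-1}$. If $r \ge 2$, every quotient has dimension $< d$, the induced groups $G \to \GL(V_j/V_{j-1})$ are finitely generated torsion, hence finite by the inductive hypothesis, and the kernel $U$ of $G \to \prod_j \GL(V_j/V_{j-1})$ consists of matrices that are unipotent with respect to a basis adapted to the flag (identity plus a strictly block-upper-triangular part). A unipotent matrix $I+M$ of finite order over a field of characteristic $0$ equals $I$: if $M^{r-1}\ne 0 = M^{r}$ with $r \ge 2$, then expanding $(I+M)^{N}=I$ and multiplying by $M^{r-2}$ yields $N\,M^{r-1}=0$, which is impossible. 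Thus $U=1$, $G$ embeds into a finite group, and $G$ is finite. It therefore remains to treat the case where $G$ acts irreducibly on $\QQbar^d$ (which also covers the base case $d=1$).

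\textbf{Step III (the trace argument).} Since $\QQbar$ is algebraically closed and $G$ acts irreducibly, the Jacobson density theorem (Burnside's theorem on irreducible matrix algebras) shows that the $\QQbar$-linear span of $G$ is all of $M_d(\QQbar)$. Choose $h_1,\dots,h_{d^2}\in G$ forming a $\QQbar$-basis of $M_d(\QQbar)$. Because the bilinear form $(A,B)\mapsto\operatorname{tr}(AB)$ on $M_d(\QQbar)$ is nondegenerate and the $h_i$ span, the map
\[
  \Phi\colon G \to \QQbar^{d^2},\qquad \Phi(g)=\bigl(\operatorname{tr}(gh_1),\dots,\operatorname{tr}(gh_{d^2})\bigr)
\]
is injective. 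For every $g\in G$ and every $i$ we have $gh_i\in G$, hence $(gh_i)^{N}=I$ by Step I, so the eigenvalues of $gh_i$ are $N$-th roots of unity and $\operatorname{tr}(gh_i)$ is a sum of $d$ of them, lying in the finite set $S$ of all such sums. Therefore $\Phi(G)\subseteq S^{d^2}$ is finite, and injectivity of $\Phi$ forces $G$ to be finite.

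The crux is Step I: finite generation is used precisely there, and it is essential, since the group of all roots of unity inside $\GL_1(\QQbar)$ is an infinite torsion group that is not finitely generated, and without the exponent bound $N$ the remaining steps collapse. The boundedness of $N=N_0(d,K)$ is exactly the phenomenon already isolated in \cref{l:computable-n0}. The other genuinely nontrivial ingredient is the density theorem invoked in Step III; everything else (diagonalizability, the unipotent computation, the composition-series reduction) is routine linear algebra.
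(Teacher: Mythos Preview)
The paper does not prove this theorem; it is quoted with a citation to Herstein and used as a black box in the proof of \cref{t:alg1-correct}. Your argument is the classical proof (Schur's trace argument combined with Burnside's density theorem) and is correct, so there is nothing substantive to compare.

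One remark on your closing commentary: you mislocate where finite generation enters. Step~I does \emph{not} use it at all---the exponent bound $N=N_0(d,K)$ comes purely from $K$ being a number field, via \cref{l:computable-n0}, and applies to every torsion element of $\GL_d(K)$ regardless of how $G$ is generated. Finite generation is actually used in Step~II, when you apply the inductive hypothesis to the images in $\GL(V_j/V_{j-1})$: these are finitely generated only because $G$ is. In fact, over a number field your Steps~I and~III together prove the stronger statement that \emph{every} torsion subgroup of $\GL_d(K)$ is finite, with no finite-generation hypothesis; in the general Burnside--Schur theorem over an arbitrary field, finite generation is what allows one to pass to a finitely generated subfield, where an analogue of \cref{l:computable-n0} then applies. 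Your counterexample $\mu(\QQbar)\subseteq\GL_1(\QQbar)$ reflects this: what fails there is that $\QQbar$ is not a number field (so no uniform exponent bound), not the finite-generation hypothesis per se.
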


\begin{theorem} \label{t:alg1-correct}
  Let $X \subseteq M_d(K)$ be a closed subset \textup(given by a list of bases\textup) such that $\GL_d(K) \cap X$ is dense in $X$.
  Then $\overline{\langle X \rangle}$ is computable.
\end{theorem}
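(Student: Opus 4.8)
The plan is to establish that \cref{alg:groupclosure} terminates and returns $\overline{\langle X \rangle}$. Throughout, write $G \coloneqq \overline{\langle X \rangle} \cap \GL_d(K)$, which is a linear algebraic group by \cref{l:closure-structure}\ref{closure-structure:group}, and let $G^0$ denote its identity component. The strategy is to track the two running objects: the irreducible closed set $N$ (always a vector space containing $I$, by \cref{l:prodclosure} and the fact that each update is a closure of a product of irreducible closed sets containing $I$) and the finite set $T \subseteq \langle I, A_1, \dots, A_l \rangle$. I will argue the loop invariant that at all times $N \subseteq \overline{\langle X \rangle}$ and $T \subseteq \langle X \rangle$, so that $TN \subseteq \overline{\langle X \rangle}$; this uses \cref{l:closure-multiplication} together with \cref{closure-structure:semigroup} of \cref{l:closure-structure} to see that products of pieces of $\overline{\langle X \rangle}$ stay inside it. The choice of $A_i \in Z_i \cap \GL_d(K)$ in \cref{alg1:choice} is possible and effective by \cref{l:prod-containment} (applied with the determinant as the single polynomial), and the closures of products in \cref{alg1:prod1,alg1:0comp,alg1:conj} are computable by \cref{l:prodclosure} and \cref{t:one-matrix}; the containment test $AB \notin TN$ in \cref{alg1:check-t} is a linear-algebra computation since $TN$ is a finite union of affine (indeed, after translating by $(AB)$, cosets of) vector spaces.

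For \textbf{termination}, the key is that $N$ is monotonically increasing among subspaces of the finite-dimensional space $M_d(K)$, hence stabilizes; once $N$ is fixed, I claim $T$ stabilizes too. The content here is a Burnside--Schur argument: I will show that modulo the stabilized $N$ — more precisely, looking at the images of the $A_i$ in $G/(G \cap (\text{group generated by } N))$ or, following \cite{derksen-jeandel-koiran05}, in the component group — the generators have finite order, so the group they generate is a finitely generated torsion group and therefore finite by \cref{t:burnside-schur}. The stabilization of $N$ (in particular $N \supseteq \overline{\langle A \rangle}^0$ for every $A \in T$ by \cref{alg1:0comp}) forces each $A^m$ to eventually lie in $TN$, which is what provides the torsion; then finiteness of the component group bounds $|T|$, giving termination.

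For \textbf{correctness}, having shown $TN \subseteq \overline{\langle X \rangle}$ always, I must show that upon termination $\overline{\langle X \rangle} \subseteq TN$. Since $TN$ is closed, it suffices to show $\langle X \rangle \subseteq TN$, and since $TN$ is closed under the operations driving the loop — $N$ is $\overline{\langle A \rangle}^0$-absorbing and conjugation-stable by $A$ via \cref{alg1:0comp,alg1:conj}, and $TN$ absorbs left multiplication by each $A \in T$ via the check in \cref{alg1:check-t} having failed to add anything new — one shows $X \subseteq TN$ (because $Z_i = A_i(A_i^{-1}Z_i) \subseteq A_i N$ once $N \supseteq \overline{(A_1^{-1}Z_1)\cdots(A_l^{-1}Z_l)} \supseteq A_i^{-1}Z_i$, using \cref{alg1:prod1} and monotonicity of $N$) and that $TN$ is a semigroup, whence it contains $\langle X \rangle$. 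Closing $TN$ being a semigroup is where I expect the \textbf{main obstacle}: one needs $(TN)(TN) \subseteq TN$, which unwinds to $A (NA'N) \subseteq TN$ for $A, A' \in T$, and this requires combining $NA'N = A'(A'^{-1}NA')N \subseteq A' N N = A'N$ (conjugation-invariance and $N$ a subspace containing $I$, so $NN = N \bcdot N$ but one must check $NN$, not just $N\bcdot N$, is absorbed — here \cref{l:closure-multiplication} and the identity $\overline{N \overline{\langle A' \rangle}^0} = N$ at a fixed point do the work) with the stabilization of the check in \cref{alg1:check-t}. Making the linear Zariski topology (where multiplication is \emph{not} continuous, per the remark after \cref{l:prodclosure}) cooperate at each of these absorption steps — rather than silently importing facts that only hold in the finer Zariski topology — is the delicate point the proof must handle carefully.
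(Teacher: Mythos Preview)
Your proposal follows the same architecture as the paper's proof, and the invariant $TN \subseteq \overline{\langle X\rangle}$, the computability of the individual steps, and the final verification that $TN$ is a closed semigroup containing $X$ are all handled correctly. In particular, the semigroup property you flag as the ``main obstacle'' is actually not where the difficulty lies: once the loop has stabilised and since $I \in T$, \cref{alg1:conj} with $A=I$ gives $N = \overline{NN} = N \bcdot N$, so $NN \subseteq N$; combined with $A N_\infty A^{-1} = N_\infty$ for $A \in T$ (which you observe), the inclusion $(AN)(BN) = AB\,NN \subseteq AB\,N \subseteq TN$ follows immediately. The paper disposes of this in one line.

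The genuine gap in your sketch is the termination argument via Burnside--Schur. You propose to look at images ``in $G/(G \cap (\text{group generated by } N))$'' or ``the component group'', observe that the generators have finite order there, and invoke \cref{t:burnside-schur}. But Burnside--Schur requires the ambient group to be a subgroup of some $\GL_{d'}(K)$; an abstract quotient does not suffice. You give no reason why your quotient is linear, and --- ironically, given your closing warning about silently importing Zariski-only facts --- this is exactly the step where the paper \emph{must} leave the linear Zariski topology. The paper takes $H = N_\infty \cap \GL_d(K)$, forms the semigroup $G = \bigcup_i T_i H$, passes to the ordinary Zariski closures $\widetilde G$, $\widetilde H$ over $\QQbar$, and then invokes the structure theory of linear algebraic groups (\cite[Theorem II.6.8]{borel91}) to realise $\widetilde G/\widetilde H$ inside some $\GL_{d'}$. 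Only then can Burnside--Schur be applied to $\pi(G) \subseteq \GL_{d'}(K)$. Your reference to \cite{derksen-jeandel-koiran05} is apt in spirit, but their argument also goes through the Zariski topology; you need to make this detour explicit and check that it interacts correctly with the linear Zariski data you are carrying (in particular, that $\widetilde H \cap \GL_d(K) = H$, so distinct cosets in $T$ really map to distinct elements of the finite quotient).
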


\begin{proof}
  We show that \cref{alg:groupclosure} terminates and yields $\overline{\langle X \rangle}$.
  The intersection $\GL_d(K) \cap \overline{\langle X \rangle}$ is a linear algebraic group by \ref{closure-structure:group} of \cref{l:closure-structure}, and we are going to use this structure.
  To do so, write $\zcl{X}$ for the closure of a set in the usual Zariski topology (i.e., not the linear one), taken over the algebraic closure $\QQbar$.
  
  Denote by $(T_1,N_1)$, $(T_2,N_2)$, $\ldots\,$, the subsequent values taken by $T$ and $N$.
  Then $N_1 \subseteq N_2 \subseteq \cdots$ is an ascending chain of vector subspaces of the finite-dimensional space $M_d(K)$, and $T_1 \subseteq T_2 \subseteq \cdots$ is an ascending chain of finite subsets of $\langle I, A_1, \cdots, A_l \rangle$.
  Define $N_\infty \coloneqq \bigcup_{i \ge 1} N_i$ and $T_\infty \coloneqq \bigcup_{i \ge 1} T_i$.
  Set $\cS = \bigcup_{i \ge 1} T_i N_\infty = T_\infty N_\infty$.\footnote{The idea will be that $N_\infty \cap \GL_d(K)$ is the irreducible component containing the identity, and $T_\infty$ is in fact a finite set that contains a transversal of the group $\overline{\langle X\rangle} \cap \GL_d(K)$ with respect to $N_\infty \cap \GL_d(K)$.}
  By construction $X$ is dense in $\cS$ (this is true in the beginning of the algorithm and is preserved in each step, keeping in mind \cref{l:closure-multiplication}).

  It remains to show that the algorithm terminates and that $\cS$ is a closed semigroup.
  Since each $N_i$ is a vector subspace of $M_d(K)$, the chain of $N_i$'s stabilizes at some $N_\infty = N_m$.
  For $i \ge 0$ and $A \in T_i$ note $AN_iA^{-1} \subseteq N_{i+1}$ (by line~\ref{alg1:conj}) and so $AN_\infty A^{-1} \subseteq N_\infty$.
  These being vector spaces of the same dimension, even $AN_\infty A^{-1} = N_\infty$.
  Let $H \coloneqq N_\infty \cap \GL_d(K)$.
  For every $i \ge 0$ and $A$,~$B \in T_i$ we have $(AH)(BH) \subseteq ABHH \subseteq T_{i+1}H$ by construction (the first inclusion by $BN_\infty=N_\infty B$; the second one by lines~\ref{alg1:conj} and \ref{alg1:prod-tn}).
  Therefore $G \coloneqq \bigcup_{i \ge 1} T_iH \subseteq \GL_d(K)$ is a semigroup.
  The Zariski closure $\zcl{G} \subseteq \GL_d(\QQbar)$ is a linear algebraic group, and $\zcl{H}$ is a closed normal subgroup.
  Indeed, as $N_\infty$ is a vector subspace of $M_d(K)$, the closure $\zcl{H}$ is simply the vector subspace of $M_d(\QQbar)$ defined by the same equations as $N_\infty$, intersected with $\GL_d(\QQbar)$.
  The quotient $\zcl{G}/\zcl{H}$ is also a linear algebraic group \cite[Theorem II.6.8]{borel91}, so without restriction $\zcl{G}/\zcl{H} \subseteq \GL_{d'}$ for some $d' \ge 1$.
  Let $\pi \colon \zcl{G} \to \zcl{G}/\zcl{H}$ denote the quotient morphism; it is a $K$-morphism of algebraic $K$-groups.

  By construction of the sets $T_i$ and $H$, the set $\pi(G)$ is contained in the subsemigroup of $\GL_{d'}(K)$ generated by $\pi(I)$, $\pi(A_1)$, $\ldots\,$,~$\pi(A_l)$.
  But it also contains all these elements, so $\pi(G) = \langle \pi(I), \pi(A_1), \ldots, \pi(A_l) \rangle$.
  By line~\ref{alg1:0comp}, every element of $\pi(G)$ has finite order.
  Therefore $\pi(G)$ is a torsion group.
  As we have just argued it is also finitely generated, and thus Burnside--Schur applies to show that $\pi(G)$ is finite.
  
  We now check that finiteness of $\pi(G)=\widetilde G/\widetilde H$ implies finiteness of $T_\infty$. Note that $\zcl{H} \cap \GL_d(K) = H$.
  Thus for $A$, $B \in \GL_{d}(K)$ we have $AB^{-1} \in \zcl{H}$ if and only if $AB^{-1} \in H$ if and only if $AB^{-1} \in N_\infty$.
  Looking at lines~\ref{alg1:check-t}--\ref{alg1:prod-tn}, once the chain $N_1 \subseteq N_2 \subseteq \cdots$ has stabilized at $N_\infty$, the chain $T_1 \subseteq T_2 \subseteq \ldots$ must also stabilize, say at the finite set $T_\infty= T_n$, because we are at this point only adding elements representing different cosets of $\widetilde G$ modulo $\widetilde H$.
  Then $\cS=T_nN_m$ is closed.

  Finally, $\cS$ is a semigroup: if $A$,~$B \in T_\infty$ then $(AN_\infty)(BN_\infty) = ABN_\infty N_\infty \subseteq ABN_\infty \subseteq T_\infty N_\infty$, where the last inclusion is ensured by line~\ref{alg1:prod-tn}.
\end{proof}

\section{Non-invertible matrices} \label{sec:semigroup}

Throughout this entire section, let $X \subseteq M_d(K)$ be a closed subset (in the linear Zariski topology) and let $\cS\coloneqq \langle X \rangle$ be the subsemigroup of $M_d(K)$ generated by $X$.
In this section we show how to compute the closure $\overline{\cS}$.

Since $\overline{\cS}$ is closed in the linear Zariski topology, it is also closed in the Zariski topology.
The set $\overline{\cS}$ is therefore a linear semigroup (\cref{l:closure-structure}) and in particular strongly $\pi$-regular (every element has a power that is contained in a subgroup of $\overline{\cS}$).
Much is known about the structure of linear semigroups \cite{putcha88,renner05}, respectively strongly $\pi$-regular matrix semigroups \cite[Section 2.3.2]{renner05} \cite{okninski85}.
These structural results are reflected in the algorithmic considerations, although they are not directly applicable to $\cS$ itself.
More general structural results about matrix semigroups, applying also to $\cS$, can be found in \cite{okninski91,okninski98}.
However, we will not be making use of them.

Our approach leans heavily on an algorithm for the computation of the Zariski closure, described in \cite{hrushovski-ouaknine-pouly-worrell18}.
However, we use more semigroup-theoretic language.
A key point in \cite{hrushovski-ouaknine-pouly-worrell18} is the use of an inductive approach based on the rank: first the closure of the semigroup generated by elements of the maximal rank $r$ is computed, then the closure of all elements of rank $\ge r-1$, and so on.

\begin{definition}
  For $\emptyset \ne X \subseteq M_d(K)$ closed, the \defit{generic rank} of $X$ is $\grank(X) \coloneqq \max\{\, \rank(A) : A \in X \,\}$.
\end{definition}

A disadvantage arising from the coarseness of the linear Zariski topology compared to the Zariski topology is that the generic rank is ill-behaved with respect to products.

\begin{example} \label{exm:generic-rank}
  Consider again \cref{ex:cns}.
  Then $V_1$ and $V_2$ are $2$-dimensional vector spaces of generic rank $1$.
  However, $V_1V_2$ contains all matrices $E_{ij}$.
  Thus $\overline{V_1V_2}=M_2(K)$ has generic rank $2$.
  (In the usual Zariski topology, $V_1V_2$ is not dense in $M_2(K)$: the determinant vanishes on the entire set.)
\end{example}

\Cref{exm:generic-rank} shows that taking a closure of a product of vector spaces may introduce elements of larger rank.
Much of the difficulty in the linear Zariski topology setting revolves around ensuring termination in light of this ill-behaved nature of the generic rank (\cref{rem:efficiency}).

We call a matrix $A \in M_d(K)$ \defit{completely pseudo-regular} if it is contained in a subgroup of $M_d(K)$.%
\footnote{In semigroup theory, an element of $\cS$ is \defit{completely regular} if it is contained in a subgroup of $\cS$. Completely regular elements of $\cS$ are completely pseudo-regular, but the converse may fail if $\cS$ is not strongly $\pi$-regular: e.g., an inverse to a given matrix $A \in \cS$ may exist in $M_d(K)$ but not be contained in $\cS$.}
The main issue in computing $\overline{\cS}$, is that completely pseudo-regular elements $A$ of $\cS$ give rise to subgroups of $\overline{\cS}$, i.e., subsemigroups of $M_d(K)$ that are groups with regards to some idempotent matrix as identity.
We write $E(A)$ for this idempotent.
We will need to deal with these subgroups by reducing to the (already proven) group case.

For a subset $Y \subseteq M_d(K)$ and $n \ge 1$, we define
\[
  Y^{\le n} \coloneqq \bigcup_{k=1}^n Y^k = \{\, A_1 \cdots A_k  : k \in [1,n], A_1, \ldots, A_k \in Y \,\}
\]
and $Y^{\trianglelefteq n} \coloneqq \{I\} \cup Y^{\le n}$.
For a completely pseudo-regular element $A \in \cS$ of rank $r$ and a closed set $Y$, let $E=E(A)$, \label{def:tya}
\[
  \begin{split}
    \cT_0(Y,A)&\coloneqq \Big\{\, B \in \overline{E Y^{\trianglelefteq 2\binom{d}{r} + 5} E} : \rank(B) = r \,\Big\}, \ \text{ and }\ \\
    \cT(Y,A) &\coloneqq \overline{Y^{\trianglelefteq\binom{d}{r}+2}\ \overline{ \langle \cT_0(Y,A) \rangle}\ Y^{\trianglelefteq\binom{d}{r}+2}}.
  \end{split}
\]

We now have all the tools to state \cref{alg:semigroupclosure}.

\begin{algorithm} 
  \caption{\small Computation of  $\overline{\langle X \rangle}$ in the general case. \textsc{FindCPR} discovers a new completely pseudo-regular element of rank $>s$. \textsc{TryClose} returns a closed set, that is equal to $\overline{\langle X \rangle}$ if all necessary completely pseudo-regular elements have been discovered.}
   \label{alg:semigroupclosure}
  \begin{algorithmic}[1]
    \Function{SemigroupClosure}{$X$}
    \Require{$X$ closed set}
    \State $r \gets \grank(X)$, $R_1$, $\ldots\,$,~$R_r \gets \emptyset$
    \State $Y_i$, $T_i$ ($i \in [1,r]$) $\gets \textsc{TryClose}(X, R_1, \ldots, R_r)$
    \While{$\overline{Y_1^2} \not\subseteq Y_1$} \label{line:monoidloop}
    \State $s \gets 0$
    \Repeat \label{line:resetloop}
    \State $B \gets \textsc{FindCPR}(X, Y_i, T_i, R_i, s)$\label{line:cpr-call}
    \State $s \gets \rank(B)$,
    \State $R_s \gets R_s \cup \{B\}$; $R_1$, $\ldots\,$,~$R_{s-1} \gets \emptyset$
    \Until{$\card{R_s} \le \binom{d}{s}$}
    \State $Y_i$, $T_i$ ($i \in [1,r]$) $\gets \textsc{TryClose}(X, R_1, \ldots, R_r)$
    \EndWhile
    \textbf{return} $Y_1$
    \EndFunction
    \medskip
    \Function{TryClose}{$X$, $R_1$, $\ldots\,$,~$R_r$}
    \Require{$X$ closed set; $R_s \subseteq \langle X \rangle$ finite set of completely pseudo-regular elements of rank $s$}
    \State \label{line:ys-init}$r \gets \grank(X)$, $Y_{r+1} \gets X$
    \For{$s=r$, $\ldots\,$,~$1$}
    \State $T_s \gets \bigcup_{B \in R_s} \cT(Y_{s+1}, B)$\label{line:compute-class}
    \State\label{line:def-ys}$Y_s \gets \overline{(Y_{s+1} \cup T_s)^{\le 2\binom{d}{s}+3}}$
    \EndFor
    \textbf{return} $Y_1$, $T_1$, $\ldots\,$,~$Y_r$, $T_r$
    \EndFunction
    \medskip
    \Function{FindCPR}{$X$, $Y_1$, $T_1$, $R_1$, $\ldots\,$,~$Y_r$, $T_r$, $R_r$, $s$}
    \State $r \gets \grank(X)$
    \For{$n \ge 0$}
    \For{$s'=r$, $\ldots\,$, $s+1$}
    \State $C_{s'} \gets Y_{s'} \cup \{\, A \in M_d(K) : \rank(A) < s' \,\}$
    \If{$n \ge 2\binom{d}{s'}+4$ \text{and} $X^n \setminus C_{s'} \ne \emptyset$} \label{line:nonlinear}
    \State\label{line:pickelt}$A_1 \cdots A_n \gets$ \text{an element of $X^n \setminus C_{s'}$}
    \State\label{line:pickcpr}$A_k \cdots A_l \gets$ \text{c.p.r. subprod. $\not \in Y_{s'+1} \cup T_{s'}$}
    \State \textbf{return} \text{$A_k\cdots A_l$}
    \EndIf
    \EndFor
    \EndFor
    \EndFunction
  \end{algorithmic}
\end{algorithm}

The main idea in the algorithm is: each set $R_s$ is a (finite) set of completely pseudo-regular elements of rank $s$.
Under the assumption that each $R_s$ is actually a full set of representatives of completely pseudo-regular elements of rank $s$, we attempt to compute $\overline{\cS}$ using a recursive strategy (\textsc{TryClose}).
If this fails to yield the entire closure, then in fact some completely pseudo-regular element must be missing and we can find such an element (using \textsc{FindCPR}), add it to $R_s$, and try again.
We give an example illustrating the algorithm; afterwards we deal with the computation of $\cT(Y,A)$ (\cref{line:compute-class}) and termination and correctness of \cref{alg:semigroupclosure}.

\begin{example}
  Let
  \[
    A =
    \begin{pmatrix}
      2 & 0 & 0 \\
      0 & -2 & 0 \\
      0 & 0 & 3
    \end{pmatrix},
    B =
    \begin{pmatrix}
      0 & 0 & 0 \\
      0 & 0 & 1 \\
      1 & 1 & 0
    \end{pmatrix},
    C
    =
    \begin{pmatrix}
      0 & 0 & 0 \\
      0 & 0 & 0 \\
      0 & 0 & 5
    \end{pmatrix},
  \]
  and $X=A\bQ \cup B \bQ \cup C \bQ$. On the first iteration, in \textsc{TryClose}, all $R_s=T_s=\emptyset$ and $Y_4=X$, $Y_3=X^{\le 5}$ consists of all scalar multiples of nonempty products of at most $5$ of the matrices, $Y_2=(X^{\le 5})^{\le 9} = X^{\le 45}$, and $Y_1 =X^{\le 405}$. Now $\overline{Y_1^2} \not \subseteq Y_1$ (e.g., $A^{406}$ is not contained in $Y_1$).
  So the check on \cref{line:monoidloop} fails.

  Now \textsc{FindCPR} gets called (with $s=0$).
  It discovers $A^6 \in X^6 \setminus Y_3$, which, being invertible, is actually completely pseudo-regular with $E(A)=I$. However, to make the example more illustrative, we deviate here from the actual pseudo-code and presume that \textsc{FindCPR} would instead return the completely pseudo-regular element $C$.\footnote{Otherwise, the next call to \textsc{TryClose} already returns the entire closure, as we will see below.} Then $E(C)=E_{33}$ and $R_1=\{C\}$.

  One gets $T_1 = \cT(Y_2,C) = \lspan\{ E_{33} \} \cup \lspan \{ E_{23} \} \cup \lspan\{ E_{31}+E_{32} \} \cup \lspan\{ E_{31} - E_{32} \} \cup \lspan\{ E_{21} + E_{22} \} \cup \lspan\{ E_{21}  - E_{22} \}$ (note $T_1^2$, $XT_1$, $T_1X \subseteq T_1$).
  So, the second iteration of the loop at \cref{line:monoidloop} yields $Y_4=X$, $Y_3=X^{\le 5}$, $Y_2=X^{\le 45}$, and $Y_1 = (X^{\le 45} \cup T_1)^{\le 9} = X^{\le 405} \cup T_1$.
  However, again $\overline{Y_1^2} \not\subseteq Y_1$ and \textsc{FindCPR} gets called again.
  Let us assume that at this point \textsc{FindCPR} returns (correctly) $A^6$ (with $E(A^6)=I$).
  Then $R_3=\{A^6\}$, while now $R_2=R_1=\emptyset$ are reset.

  On the next call to \textsc{TryClose}, we get $T_3=\cT(A, I)=\lspan\{ E_{11}+E_{22}, E_{33}\} \cup \lspan\{ E_{11} - E_{22}, E_{33} \}$. Then $Y_4 = X$. Multiplying $T_3$ from the left by $B$, $B^2$, $C$, $CB$, $CB^2$, one can find
  \[
    \begin{split}
      Y_3 &= \lspan\{ E_{11}+E_{22}, E_{33}\} \cup \lspan\{ E_{11} - E_{22}, E_{33} \} \\
      &\cup \lspan\{ E_{21}+E_{22}, E_{33} \} \cup \lspan\{ E_{21}-E_{22}, E_{33} \}\\
      &\cup \lspan\{ E_{31}+E_{32}, E_{23} \} \cup \lspan\{ E_{31}-E_{32}, E_{23} \}\\
      &\cup \lspan\{ E_{33} \} \cup \lspan \{ E_{23} \} \\
      &\cup \lspan\{ E_{31}+E_{32} \} \cup \lspan\{ E_{31} - E_{32} \} \\
      &\cup \lspan\{ E_{21} + E_{22} \} \cup \lspan\{ E_{21}  - E_{22} \}.
    \end{split}
  \]
  Now one can check $Y_3^2 \subseteq Y_3$, so $Y_1=Y_2=Y_3$, and this is the closure of $\langle X \rangle$.

  Finally, if we multiply this set with $(1,1,1)=e_1+e_2+e_3$ from the left (i.e., summing the rows), we get
  \[
    (1,1,1)Y_3 = \lspan\{ e_1+e_2, e_3 \} \cup \lspan\{ e_1-e_2, e_3 \}.
  \]
  This is the linear hull of the automaton in \cite[Example 3.7]{bell-smertnig21}.
\end{example}

Before we can discuss correctness and termination of the algorithm, we show that the generic rank is computable (\cref{c:compute-generic-rank}), that $\cT(Y,A)$ is computable (\cref{l:subgroup-closure}), and that we need to consider only finitely many completely pseudo-regular elements (\cref{l:scc}), up to a certain equivalence (\cref{d:cpr-equivalence}).

\subsection{Computability of the generic rank}

To compute the generic rank, we relate it to generic matrices (\cref{ss:cns}).

\begin{lemma} \label{l:grank}
  Let $r \in \bZ_{\ge 0}$.
  For an irreducible closed subset $V \subseteq M_d(K)$, the following statements are equivalent.
  \begin{enumerate}[label=\alph*)]
  \item\label{grank:grank} $\grank(V) = r$.
  \item\label{grank:generic} Every generic matrix of $V$ has rank $r$.
  \item\label{grank:generic-one} There exists a generic matrix of $V$ with rank $r$.
  \item\label{grank:zdense} There exists a Zariski-dense Zariski-open subset $U \subseteq V$ with $\rank(A)=r$ for all $A \in U$.
  \end{enumerate}
\end{lemma}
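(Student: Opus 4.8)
The plan is to prove the cyclic chain of implications \ref{grank:grank} $\Rightarrow$ \ref{grank:generic} $\Rightarrow$ \ref{grank:generic-one} $\Rightarrow$ \ref{grank:zdense} $\Rightarrow$ \ref{grank:grank}, leaning on the dictionary between an irreducible closed set $V$ (a vector subspace) and a generic matrix $A \in M_d(K(\mathbf y))$ as set up in \cref{ss:cns}. Recall that such an $A$ realizes a surjection from $K^{(\text{number of free parameters})}$ onto $V$ by substituting field elements for the indeterminates $\mathbf y$, and conversely every element of $V$ is so obtained.

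First I would observe the key reformulation of $\rank$: for a matrix $A$ with polynomial entries, $\rank(A) \ge k$ over the function field $K(\mathbf y)$ if and only if some $k \times k$ minor of $A$ is a nonzero polynomial, and $\rank(A) \ge k$ after a substitution $\mathbf y = \boldsymbol\alpha$ if and only if that minor does not vanish at $\boldsymbol\alpha$. Since $K$ is infinite, a polynomial is nonzero iff it does not vanish on all of $K^N$; this is the same tool used in \cref{l:prod-containment,l:prodclosure}. So: let $\rho$ be the rank of the generic matrix $A$ over $K(\mathbf y)$, i.e.\ the largest $k$ for which some $k\times k$ minor of $A$ is a nonzero polynomial $g$. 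For \ref{grank:grank} $\Rightarrow$ \ref{grank:generic} and the equivalence with $\rho$: any substitution can only drop rank (minors specialize to their values), so $\rank(B) \le \rho$ for all $B \in V$; and choosing $\boldsymbol\alpha$ with $g(\boldsymbol\alpha)\ne 0$ gives a $B \in V$ of rank exactly $\rho$, whence $\grank(V) = \rho$, which forces $\rho = r$ under \ref{grank:grank} and shows every generic matrix has rank $\rho = r$ (independence of the choice of generic matrix follows since any two differ by invertible changes of the parameter coordinates, or simply because all compute the same $\grank(V)$). The implication \ref{grank:generic} $\Rightarrow$ \ref{grank:generic-one} is trivial (a generic matrix exists). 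For \ref{grank:generic-one} $\Rightarrow$ \ref{grank:zdense}: if a generic matrix $A$ has rank $r$, pick a nonzero $r\times r$ minor $g$ of $A$; then $U \coloneqq \{\, B \in V : g(B) \ne 0\,\}$ is Zariski-open in $V$, nonempty (since $g \ne 0$ and $K$ is infinite), hence Zariski-dense because $V$ is irreducible in the Zariski topology (as noted in the proof of \cref{l:prodclosure}), and every $B \in U$ has $\rank(B) \ge r$; combined with $\rank(B)\le r$ for all $B\in V$ (again minors specialize), $\rank(B) = r$ on $U$. Finally \ref{grank:zdense} $\Rightarrow$ \ref{grank:grank}: the existence of some $B \in U \subseteq V$ with $\rank(B) = r$ gives $\grank(V) \ge r$; and $U$ being dense means $V$ is not contained in the proper Zariski-closed set where all $(r+1)\times(r+1)$ minors vanish only if... more directly, if some $B' \in V$ had $\rank(B') > r$ then $B'$ lies in the Zariski-open locus where some $(r+1)$-minor is nonzero, which would then be a nonempty Zariski-open subset of the irreducible $V$, hence dense, forcing its intersection with $U$ to be nonempty — but points of $U$ have rank exactly $r$, a contradiction. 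Hence $\grank(V) = r$.

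I expect the only mildly delicate point to be the "independence of the choice of generic matrix" in \ref{grank:generic}, and the careful handling of \emph{two} Zariski topologies here (over $K$ versus over $\QQbar$) — but the rank of a matrix is insensitive to field extension, so working over $K$ throughout is harmless, and irreducibility of $V$ in the $K$-Zariski topology is exactly what was established in the footnote to \cref{l:prodclosure}. Everything else is the standard "nonzero polynomial $=$ generically nonvanishing over an infinite field" principle already in repeated use in \cref{sec:invertible}; no new ideas are needed, so I anticipate no real obstacle.
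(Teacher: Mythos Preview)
Your proof is correct and follows essentially the same logic as the paper, with one genuine simplification worth noting. For the implication \ref{grank:generic-one}$\Rightarrow$\ref{grank:zdense}, the paper performs Gaussian elimination on the generic matrix over $K(\mathbf y)$, bringing it to reduced row echelon form $B=TA$, and then takes $U$ to be the locus where a common denominator $f$ of the entries of $T$, $T^{-1}$, $B$ does not vanish. You instead use the determinantal characterization of rank directly: a single nonvanishing $r\times r$ minor $g$ gives the open set, and the vanishing of all $(r+1)\times(r+1)$ minors as polynomials gives the upper bound on the rank for every specialization. This is cleaner and avoids the bookkeeping with $T$ and $T^{-1}$; the paper's approach has the mild advantage that it makes the corollary (computability of $\grank$ via Gaussian elimination) immediate, whereas with your argument one would compute $\grank$ by checking minors. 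Your handling of \ref{grank:grank}$\Rightarrow$\ref{grank:generic} is also more direct than the paper's, which routes through \ref{grank:generic-one}$\Rightarrow$\ref{grank:zdense}; your observation that the argument ``$\grank(V)=\rho$'' applies to \emph{any} generic matrix, hence all have the same rank, disposes of the independence issue neatly. One notational quibble: when you write $g(B)$ with $g$ a minor of the generic matrix, you are implicitly identifying the minor polynomial in the parameters $\mathbf y$ with the corresponding minor polynomial in the ambient coordinates $x_{ij}$; this is harmless but could be made explicit.
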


\begin{proof}
  The implications \ref{grank:generic}$\,\Rightarrow\,$\ref{grank:generic-one} and \ref{grank:zdense}$\,\Rightarrow\,$\ref{grank:grank} are immediate from the definitions.

  \ref{grank:generic-one}$\,\Rightarrow\,$\ref{grank:zdense}
  Let $R = K[x_{ij} : 1 \le i,j \le d]$ and let $A \in M_d(R)$ be a generic matrix of $V$.
  Performing Gaussian elimination over the field of fractions $\quo{R}=K(x_{ij} : 1 \le i,j \le d)$ of $R$, we find an invertible matrix $T \in M_d(\quo{R})$ such that $B=TA$ is in reduced row echelon form.
  Let $f \in R$ be a nonzero common multiple of the denominators of the entries of $T$, $T^{-1}$, and $B$.
  Whenever $A(\alpha_{ij}) \in V$  with $f(A(\alpha_{ij})) \ne 0$, we get that $A(\alpha_{ij}) = T^{-1}(\alpha_{ij}) B(\alpha_{ij}) \in V$ is well-defined and has rank $r$ (as $B(\alpha_{ij})$ is still in reduced row echelon form and $T(\alpha_{ij})$ is invertible).
  The set $D(f) = \{\, A(\alpha_{ij}) \in V : f(A(\alpha_{ij})) \ne 0 \,\}$ is nonempty and Zariski-open in $V$.
  By Zariski-irreducibility of $V$ it is Zariski-dense in $V$.

  \ref{grank:grank}$\,\Rightarrow\,$\ref{grank:generic}
  Let $A$ be a generic matrix of $V$ with $\rank(A) = s$.
  In light of \ref{grank:generic-one}$\,\Rightarrow\,$\ref{grank:zdense} we see that $V$ contains a Zariski-dense subset $U$ of rank $s$ matrices.
  Thus $\grank(V) \ge s$.
  On the other hand, all $(s+1)\times (s+1)$ minors vanish on $U$.
  Since these minors are polynomials in the entries of the matrices, also all elements of the Zariski closure of $U$ have rank $\le s$.
  Altogether $\grank(V) = s$.
\end{proof}

The generic rank $r=\grank(V)$ can therefore be computed using Gauss elimination on a generic matrix of $V$.

\begin{corollary} \label{c:compute-generic-rank}
  Let $V \subseteq M_d(K)$ be an irreducible closed subset.
  Then $r=\grank(V)$ is computable.
\end{corollary}

\subsection{A key finiteness result}

The following will be applied in various guises.
(This observation has also been used in \cite{hrushovski-ouaknine-pouly-worrell18}. Similar considerations are used to derive the bounds in \cite{okninski85}.)

\begin{lemma} \label{l:key-finiteness}
  Let $W$ be a $d$-dimensional vector space.
  Let $r \in [0,d]$ and let $(U_1,V_1)$, $\ldots\,$,~$(U_n,V_n)$ be pairs of vector subspaces of $W$ such that $U_i \cap V_i = 0$ and $\dim V_i=r$ for $i \in [1,n]$.
  If $n > \binom{d}{r}$, then
  \begin{enumerate}
  \item there exist $i > j$ such that $U_i \cap V_j = 0$, and
  \item there exist $i < j$ such that $U_i \cap V_j = 0$.
  \end{enumerate}
\end{lemma}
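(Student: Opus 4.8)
The plan is to reduce both statements to a single combinatorial pigeonhole principle by observing that the subspaces $V_i$ of dimension $r$ all live inside the Grassmannian $\Gr(r, W)$, which has only $\binom{d}{r}$ ``cells'' relevant to the argument — more precisely, after fixing a basis of $W$, each $r$-dimensional subspace $V$ has a well-defined set of ``pivot coordinates'' of size $r$, and there are exactly $\binom{d}{r}$ possible pivot sets. Since $n > \binom{d}{r}$, two of the $V_i$ must share the same pivot set. This shared pivot set is the combinatorial gadget that lets us transfer a transversality condition from one index to another.

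**The key step.**
Suppose $V_i$ and $V_j$ have the same pivot set $P \subseteq [1,d]$, $\card{P} = r$. Let $W_P \coloneqq \lspan\{e_k : k \in P\}$ and $W_{P^c} \coloneqq \lspan\{e_k : k \notin P\}$, so $W = W_P \oplus W_{P^c}$. The pivot condition means that the projection $\pi_P \colon W \to W_P$ along $W_{P^c}$ restricts to an isomorphism on both $V_i$ and $V_j$; equivalently, $V_i$ and $V_j$ are both graphs of linear maps $W_P \to W_{P^c}$ (after swapping the roles of the two summands). I claim this forces: for any subspace $U$, we have $U \cap V_i = 0$ if and only if $U \cap V_j = 0$ — no, that is too strong. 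The correct claim is weaker but sufficient: since $U_j \cap V_j = 0$, the projection argument and a dimension count show that $U_j$ maps injectively into $W/V_j \cong W_{P^c}$; but $W/V_i \cong W_{P^c}$ as well (both via $\pi$ composed with the graph structure), and the two quotient maps agree, so $U_j \cap V_i = 0$ too. Thus taking $i$ and $j$ with $i < j$ (for part 2) or relabeling to get $i > j$ (for part 1), and using $U_j$ in place of $U_i$, we are done. Actually the cleanest formulation: the condition $U \cap V = 0$ for an $r$-codimensional... wait, $V$ need not be of codimension $r$. Let me restate: since $\dim V_i = \dim V_j = r$ and both have pivot set $P$, the map $V_i \to V_j$ sending the unique vector with $P$-coordinates $x$ in $V_i$ to the unique such vector in $V_j$ is a well-defined isomorphism, and crucially $W_{P^c} \cap V_i = W_{P^c} \cap V_j = 0$. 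Then for a subspace $U$ with $U \cap V_i = 0$: write $W = V_i \oplus W_{P^c}$ and note $U$ injects into $W_{P^c}$; but also $W = V_j \oplus W_{P^c}$, and the identity on $W_{P^c}$ intertwines the two projections modulo the fact that... this requires $U$'s image in $W_{P^c}$ to avoid the image of $V_j$, which is $0$. So $U \cap V_j = 0$. This symmetric statement gives both (1) and (2) at once.

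**The main obstacle.**
The delicate point is the claim that $U \cap V_i = 0 \Leftrightarrow U \cap V_j = 0$ when $V_i, V_j$ share a pivot set — I should double-check whether this truly holds or whether one needs the extra hypothesis $U \in \{U_i, U_j\}$ with its paired transversality $U_i \cap V_i = 0$. The honest reduction is: we are given $U_i \cap V_i = 0$ and $U_j \cap V_j = 0$, and we want (1) $U_i \cap V_j = 0$ for some $i > j$, and (2) $U_i \cap V_j = 0$ for some $i < j$. Using the symmetric decomposition $W = V_i \oplus W_{P^c} = V_j \oplus W_{P^c}$: from $U_j \cap V_j = 0$ we get that $U_j$ is carried isomorphically onto a subspace of $W_{P^c}$ by the projection $\mathrm{pr}_j$ along $V_j$; since $V_i \cap W_{P^c} = 0$ as well, the projection along $V_i$ also maps $W_{P^c}$ isomorphically onto $W/V_i$, and one checks these projections restricted to $W_{P^c}$ agree up to the canonical identifications, so $U_j \cap V_i = 0$. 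Hence for part (1) take the pair with the larger $V$-index playing the role of ``$j$'' — if $V_a, V_b$ share a pivot set with $a < b$, then $U_b \cap V_a = 0$ gives an instance with $i = b > j = a$; and $U_a \cap V_b = 0$ gives an instance with $i = a < j = b$. So a single coincident pair yields both conclusions. I expect verifying the ``projections agree'' step carefully — i.e., that $W_{P^c} \cap V_i = 0$ really follows from the pivot condition and that the induced isomorphisms are compatible — to be the only place requiring care; everything else is pigeonhole and bookkeeping.
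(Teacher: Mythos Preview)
Your approach has a genuine gap: the central claim---that if $V_i$ and $V_j$ share the same pivot set $P$ and $U_j \cap V_j = 0$, then also $U_j \cap V_i = 0$---is false. Take $W = K^2$, $r=1$, basis $e_1,e_2$, and set $V_1 = \lspan\{e_1\}$, $V_2 = \lspan\{e_1+e_2\}$, $U_1 = \lspan\{e_1+e_2\}$, $U_2 = \lspan\{e_1\}$. Both $V_1$ and $V_2$ have pivot set $\{1\}$, and indeed $U_1 \cap V_1 = 0$, $U_2 \cap V_2 = 0$; yet $U_2 \cap V_1 = U_2 \neq 0$ and $U_1 \cap V_2 = U_1 \neq 0$. (Here $n=2=\binom{2}{1}$ so the lemma's hypothesis fails, but the example still refutes your mechanism.) The flaw in your reasoning is the step ``the two projections agree'': the projections $\mathrm{pr}_i, \mathrm{pr}_j \colon W \to W_{P^c}$ along $V_i$ and $V_j$ agree on $W_{P^c}$, but they do \emph{not} agree on all of $W$, and in particular not on $U_j$. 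Knowing that $\mathrm{pr}_j$ is injective on $U_j$ tells you nothing about whether $\mathrm{pr}_i$ is.

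Sharing a pivot set only says that both $V_i$ and $V_j$ are complements of the same fixed subspace $W_{P^c}$; this is far too weak to transfer transversality with an arbitrary $U$. The paper's argument is quite different and avoids any choice of basis or cell decomposition: after enlarging each $U_i$ to dimension $d-r$, it passes to the exterior algebra, associating $\alpha_i \in \bigwedge^{d-r} W$ to $U_i$ and $\beta_j \in \bigwedge^{r} W$ to $V_j$, so that $U_i \cap V_j = 0$ iff $\alpha_i \wedge \beta_j \neq 0$. If $U_i \cap V_j \neq 0$ for all $i>j$ while $U_i \cap V_i = 0$, the resulting ``triangular'' system of pairings forces $\beta_1,\ldots,\beta_n$ to be linearly independent in $\bigwedge^r W$, whence $n \le \binom{d}{r}$. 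This is a linear-independence argument in a $\binom{d}{r}$-dimensional space, not a pigeonhole on $\binom{d}{r}$ discrete cells.
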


\begin{proof}
  Replacing the $U_i$ by larger spaces if necessary we may suppose $\dim U_i=d-r$ for $i \in [1,n]$.
  Therefore it suffices to show the first claim, the second one follows by symmetry.

  Fixing bases $u_{i,1}$, $\ldots\,$,~$u_{i,d-r}$ of $U_i$ and $v_{i,1}$, $\ldots\,$,~$v_{i,r}$ of $V_i$ we can associate to $U_i$ and $V_i$ the elements $\alpha_i \coloneqq u_{i,1} \wedge \cdots \wedge u_{i,d-r} \in \bigwedge^{d-r} W$ and $\beta_i \coloneqq v_{i,1}\wedge \cdots \wedge v_{i,r} \in \bigwedge^{r} W$. (A different choice of bases only changes the corresponding $\alpha_i$, respectively, $\beta_i$ by a nonzero scalar multiple.)
  Now $U_i \cap V_j = 0$ if and only if $\alpha_i \wedge \beta_j \ne 0$ in the exterior algebra $\bigwedge W$.

  Assume, for the sake of contradiction, $U_i \cap V_j \ne 0$ for all $i$,~$j \in [1,n]$ with $i > j$.
  Then $\alpha_i \wedge \beta_j = 0$ for $i > j$ but $\alpha_i \wedge \beta_i \ne 0$.
  Thus $\beta_i$ cannot be a linear combination of $\beta_1$, $\ldots\,$,~$\beta_{i-1}$.
  Hence the $\beta_1$, $\ldots\,$,~$\beta_n$ are linearly independent in $\bigwedge^r W$, and therefore $n \le \dim \bigwedge^r W = \binom{d}{r}$ contradicts the assumption on $n$.
\end{proof}

\subsection{Equivalence classes of completely pseudo-regular elements.}

We need an intrinsic characterization of completely pseudo-regular elements.

\begin{lemma} \label{l:cr}
  Let $A \in M_d(K)$. The following statements are equivalent.
  \begin{enumerate}[label=\alph*)]
  \item \label{cr:cr} $A$ is completely pseudo-regular.
  \item \label{cr:inv} There exists $A' \in M_d(K)$ such that $A=AA'A$ and $AA'=A'A$.
  \item \label{cr:idemp} There exist $E$,~$A' \in M_d(K)$ such that $E^2=E$, $EA=AE=A$, and $AA'=A'A=E$.
  \item \label{cr:square} $\rank A = \rank A^2$.
  \item \label{cr:intersection} $\im(A) \cap \ker(A) = 0$.
  \end{enumerate}
\end{lemma}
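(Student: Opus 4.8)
The plan is to prove the equivalences by first establishing a cycle among the "linear-algebraic" conditions \ref{cr:cr}, \ref{cr:square}, \ref{cr:intersection}, and then showing that these are equivalent to the "explicit inverse" conditions \ref{cr:inv}, \ref{cr:idemp}. The geometric heart of the argument is the Fitting decomposition: for any $A \in M_d(K)$ one has $K^d = \im(A^d) \oplus \ker(A^d)$, with $A$ restricting to an invertible map on $\im(A^d)$ and a nilpotent map on $\ker(A^d)$. I would take this as known.

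First I would prove \ref{cr:square}$\,\Leftrightarrow\,$\ref{cr:intersection}. The chain $\im(A) \supseteq \im(A^2) \supseteq \cdots$ stabilizes, and by rank-nullity $\rank A = \rank A^2$ is equivalent to $\im(A) = \im(A^2)$, which (again stabilizing the chain) is equivalent to $\im(A^k) = \im(A)$ for all $k\ge 1$, in particular $\im(A) = \im(A^d)$. Combined with $\ker(A) \subseteq \ker(A^d)$ and $\dim \ker(A) = \dim\ker(A^d)$ forcing $\ker(A)=\ker(A^d)$, the Fitting decomposition gives $\im(A)\oplus\ker(A) = K^d$, hence $\im(A)\cap\ker(A)=0$. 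Conversely, if $\im(A)\cap\ker(A)=0$ then $A$ is injective on $\im(A)$, so $\im(A^2) = A(\im A)$ has the same dimension as $\im(A)$, giving \ref{cr:square}.

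Next, assuming these hold, I would construct the idempotent and the inverse explicitly, which yields \ref{cr:idemp} (and \ref{cr:inv} is then immediate by setting $A'$ as in \ref{cr:idemp} and checking $A = AA'A = AE = A$). Under \ref{cr:intersection} we have $K^d = \im(A)\oplus\ker(A)$ with both summands $A$-invariant; let $E$ be the projection onto $\im(A)$ along $\ker(A)$. Then $E^2=E$, and since $A$ kills $\ker(A)$ and preserves $\im(A)$ one checks $AE=EA=A$. The restriction $A|_{\im(A)}$ is invertible; let $A'$ act as $(A|_{\im(A)})^{-1}$ on $\im(A)$ and as $0$ on $\ker(A)$. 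Then $AA'=A'A=E$ and $AA'=A'A$. This gives \ref{cr:idemp}$\,\Rightarrow\,$\ref{cr:inv}, while \ref{cr:idemp}$\,\Rightarrow\,$\ref{cr:cr} is essentially a tautology: $E$ is idempotent, $EM_d(K)E$ is a monoid with identity $E$, and $A$ together with $A'=A'E=EA'$ witnesses that $A$ lies in a subgroup (the unit group of $EM_d(K)E$, or more concretely $\GL$ of $\mathrm{End}(\im A)$ transported via $E$). For \ref{cr:cr}$\,\Rightarrow\,$\ref{cr:square}: if $A$ lies in a subgroup $G$ of $M_d(K)$ with identity idempotent $E$, then $A$ has a two-sided inverse $A'$ in $G$ with $AA'=A'A=E$; from $EA=AE=A$ one gets $\im(A)\subseteq\im(E)$ and from $A'A=E$ one gets $\im(E)\subseteq\im(A)$, so $\im(A)=\im(E)$, and then $\rank A = \rank E = \rank E^2$ together with $A^2 A'^2 = E$ forces $\rank A^2 = \rank E = \rank A$. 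Finally \ref{cr:inv}$\,\Rightarrow\,$\ref{cr:square} is quick: $A = AA'A$ gives $\rank A \le \rank A^2 \le \rank A$.

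I expect the main obstacle to be purely organizational: choosing an economical route through the five conditions so that no implication is proved twice and the commuting condition $AA'=A'A$ (which distinguishes \ref{cr:inv} from mere von Neumann regularity) is handled cleanly. The one genuinely substantive point is verifying that the Fitting projection $E$ commutes with $A$ and that the partial inverse $A'$ built on the Fitting decomposition commutes with $A$ — both follow from $A$-invariance of the two Fitting summands, but this is the step that must be stated carefully. Everything else is rank-nullity bookkeeping.
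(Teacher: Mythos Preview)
Your proposal is correct and follows essentially the same route as the paper: the equivalence \ref{cr:square}$\Leftrightarrow$\ref{cr:intersection} by rank--nullity, the explicit construction of $E$ and $A'$ from the splitting $K^d=\im(A)\oplus\ker(A)$ for \ref{cr:intersection}$\Rightarrow$\ref{cr:idemp}, and the observation $A=A^2A'$ for \ref{cr:inv}$\Rightarrow$\ref{cr:square}. Two small remarks: invoking the Fitting decomposition for \ref{cr:square}$\Rightarrow$\ref{cr:intersection} is heavier than necessary (the paper simply notes that $\rank A=\rank A^2$ makes $A|_{\im A}$ injective, which is exactly your converse argument run backwards), and in your \ref{cr:cr}$\Rightarrow$\ref{cr:square} step the inclusion $\im(E)\subseteq\im(A)$ comes from $AA'=E$, not from $A'A=E$ as written.
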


\begin{proof}
  The equivalence of \ref{cr:cr}, \ref{cr:inv}, and \ref{cr:idemp} holds in all semigroups. For convenience, we recall a proof.

  \ref{cr:cr}$\,\Rightarrow\,$\ref{cr:inv} Let $G \subseteq M_d(K)$ be a subgroup containing $A$, and $A'$ the inverse of $A$ in $G$.
  
  \ref{cr:inv}$\,\Rightarrow\,$\ref{cr:idemp} $E\coloneqq A'A$ is idempotent as claimed.
  
  \ref{cr:idemp}$\,\Rightarrow\,$\ref{cr:cr} The semigroup generated by $A$, $A'$, and $E$ is a group.

  \ref{cr:inv}$\,\Rightarrow\,$\ref{cr:square} Since $A=A^2A'$, we have $\im(A) \subseteq \im(A^2)$, and hence $\im(A)= \im(A^2)$.
  
  \ref{cr:square}$\,\Leftrightarrow\,$\ref{cr:intersection} Clear.
  
  \ref{cr:intersection}$\,\Rightarrow\,$\ref{cr:idemp} We have $K^d = \im(A) \oplus \ker(A)$, and therefore it is possible to construct a suitable inverse to $A$ on $\im(A)$ and extend it to $K^d$.
\end{proof}

Suppose that $A$ is completely pseudo-regular and $E$ is an idempotent as in \ref{cr:idemp}.
Then $\rank A = \rank E$.
From this rank equality and $EA=A$ and $AE=E$, one deduces $\im E=\im A$ and $\ker E=\ker A$, so that $E$ is uniquely determined by $A$ (an idempotent matrix $E$ is a projection onto the subspace $\im E$ along $\ker E$, and it is therefore uniquely determined by its image and its kernel).
Then $E=E(A)$ is the identity element of any subgroup containing $A$.

The element $A'$ with $AA'=A'A=E$ is not uniquely determined, but there is a unique such $A'$ with $A' \in EM_d(K)E$ (because $A'|_{\im E}$ is determined by $A$ and $A'|_{\ker E}=0$). We write $A^+$ for this element of $EM_d(K)E$ and call it the \defit{pseudo-inverse} of $A$.

\begin{lemma} \label{l:closed}
  If $\cS \subseteq M_d(K)$ is a Zariski-closed subsemigroup, then $\cS$ is strongly $\pi$-regular. 
  For every completely pseudo-regular $A \in \cS$, also $E(A)$, $A^+ \in \cS$.
\end{lemma}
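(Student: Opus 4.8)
The plan is to deduce both assertions from a single claim: for every completely pseudo-regular $B \in M_d(K)$, both $E(B)$ and $B^+$ lie in the Zariski closure $\overline{\langle B \rangle}$ of the cyclic semigroup generated by $B$. Granting this, the second assertion of the lemma is the special case $B = A$, since $\cS$ being Zariski-closed and $\langle A \rangle \subseteq \cS$ force $\overline{\langle A \rangle} \subseteq \cS$, hence $E(A), A^+ \in \cS$. For strong $\pi$-regularity, given $A \in \cS$ I would use that the chain $\im(A) \supseteq \im(A^2) \supseteq \cdots$ of $K$-subspaces stabilizes, so $\rank(A^m) = \rank(A^{2m})$ for some $m \geq 1$, whence $B \coloneqq A^m$ is completely pseudo-regular by criterion \ref{cr:square} of \cref{l:cr}. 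The claim then gives $E(B), B^+ \in \cS$, and by the implication \ref{cr:idemp}$\,\Rightarrow\,$\ref{cr:cr} of \cref{l:cr} the semigroup $\langle B, B^+, E(B) \rangle$ is a group; it lies in $\cS$ and contains $A^m = B$, exhibiting $A^m$ inside a subgroup of $\cS$.

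To prove the claim I would first put $B$ into a block normal form. By criterion \ref{cr:intersection} of \cref{l:cr} together with rank--nullity, $K^d = \im(B) \oplus \ker(B)$, and both summands are $B$-invariant since $B(\im B) = \im(B^2) = \im(B)$ and $B(\ker B) = 0$. Conjugating by a suitable $P \in \GL_d(K)$ (which conjugates $\overline{\langle B \rangle}$, $E(B)$, $B^+$ accordingly, and so is harmless), I may assume $B = \bigl(\begin{smallmatrix} C & 0 \\ 0 & 0 \end{smallmatrix}\bigr)$ with $C \in \GL_r(K)$, $r = \rank(B)$; then directly $E(B) = \bigl(\begin{smallmatrix} I_r & 0 \\ 0 & 0 \end{smallmatrix}\bigr)$ and $B^+ = \bigl(\begin{smallmatrix} C^{-1} & 0 \\ 0 & 0 \end{smallmatrix}\bigr)$. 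Since the block embedding $M_r(K) \hookrightarrow M_d(K)$, $Y \mapsto \bigl(\begin{smallmatrix} Y & 0 \\ 0 & 0 \end{smallmatrix}\bigr)$, is a homeomorphism onto a closed linear subspace, it remains to show that $I_r$ and $C^{-1}$ lie in $H \coloneqq \overline{\{\, C^n : n \geq 1 \,\}}$, the Zariski closure inside $M_r(K) \cong K^{r^2}$.

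For this last step I would exploit that left multiplication $\psi \colon X \mapsto CX$ is a Zariski-homeomorphism of $M_r(K)$, with inverse $X \mapsto C^{-1}X$. Then $\psi^j(H) = \overline{\{\, C^n : n \geq j+1 \,\}}$ for $j \geq 0$, and $\psi(H) \subseteq H$ because $\psi^{-1}(H)$ is closed and contains $\{\, C^n : n \geq 1 \,\}$. Hence $H \supseteq \psi(H) \supseteq \psi^2(H) \supseteq \cdots$ is a descending chain of closed subsets of the noetherian space $M_r(K)$, so it stabilizes: $H' \coloneqq \psi^k(H) = \psi^{k+1}(H)$ for some $k \geq 1$, whence $\psi(H') = H'$ and, inductively, $C^{-n}H' = H'$ for all $n \geq 0$. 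Since $C^k \in H$, we get $C^{2k} = C^k \cdot C^k \in C^k H = H'$, so $I_r = C^{-2k}\cdot C^{2k} \in C^{-2k}H' = H'$ and $C^{-1} = C^{-2k-1}\cdot C^{2k} \in H'$; and $H' = C^k H \subseteq H$ because $\psi(H) \subseteq H$. This finishes the claim, and with it the lemma.

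The hard part is precisely this stabilization argument --- extracting the idempotent $E(B)$ (equivalently, $I_r$) from the Zariski closure of a cyclic semigroup. One cannot shortcut it via ``a Zariski-closed subsemigroup of $\GL_r(K)$ is a group'', because $\overline{\langle C \rangle}$ need not be contained in $\GL_r(K)$: e.g.\ $\overline{\langle \operatorname{diag}(2,3) \rangle}$ is the full space of diagonal matrices. (Alternatively one could invoke that Zariski-closed subsemigroups of $M_d(K)$ are strongly $\pi$-regular from the structure theory of linear algebraic semigroups \cite{putcha88,renner05}, but a self-contained argument seems preferable here.) Everything else --- the block normal form, the explicit description of $E(B)$ and $B^+$ in it, and the passage from the claim to the two assertions --- is routine bookkeeping with \cref{l:cr}.
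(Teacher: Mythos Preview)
Your proof is correct. The paper, by contrast, dispatches the lemma in one line by citing \cite[Theorem 3.18]{putcha88} for strong $\pi$-regularity and saying the remaining claims follow from inspecting that proof; you instead give a fully self-contained argument via the noetherian descending-chain trick on $H \supseteq CH \supseteq C^2H \supseteq \cdots$. Your route is more elementary and keeps the paper independent of the structure theory of linear algebraic monoids, at the cost of a page of bookkeeping; the paper's route is terse but opaque to a reader without \cite{putcha88} at hand. Amusingly, you yourself flag the alternative of citing \cite{putcha88,renner05} in your final paragraph, so you were aware of the paper's approach.

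One small quibble with your commentary (not your proof): your remark that one ``cannot shortcut'' via the fact that Zariski-closed subsemigroups of $\GL_r(K)$ are groups is slightly off. You can: $H \cap \GL_r(K)$ is closed in $\GL_r(K)$, is a subsemigroup (it contains $C$ and is multiplicatively closed by \cref{l:closure-multiplication}), hence is a group by \cite[Lemma 10]{derksen-jeandel-koiran05}, so it contains $I_r$ and $C^{-1}$. But this is not a genuine saving, since the proof of that lemma is essentially the same noetherian descent you wrote out; you have simply unpacked it in situ.
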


\begin{proof}
  A Zariski-closed semigroup $\cS$ is strongly $\pi$-regular by \cite[Theorem 3.18]{putcha88} and the remaining claims follow from inspection of the proof of the cited theorem.
\end{proof}

There may be infinitely many completely pseudo-regular elements (and associated subgroups), and we need to reduce the problem to one where we only have to deal with finitely many.
To do so, we deal with equivalence classes of completely pseudo-regular elements.

\begin{definition} \label{d:cpr-equivalence}
  \begin{enumerate}
  \item  For $A$,~$B \in M_d(K)$ write $A \parallel B$ if $\im(A)=\im(B)$ and $\ker(A)=\ker(B)$.
  \item  For $A$,~$B \in \cS$ let $A \sim_{\cS} B$ if there exist $C$,~$D$,~$C'$,~$D' \in \cS \cup \{I\}$ such that $B \parallel DAC$ and $A \parallel D'BC'$.
  \end{enumerate}
\end{definition}

The relation $\sim_{\cS}$ is an equivalence relation on $\cS$.
The rank is constant on each $\sim_\cS$-equivalence class, and we may therefore speak of the \defit{rank} of an equivalence class.
We write $[A]_\cS$ for the $\sim_\cS$-equivalence class of $A \in \cS$.

The rest of the subsection is dedicated to ultimately proving that, given a completely pseudo-regular element $A \in \cS$ of rank $r$, and under the assumption that we are able to compute a closed set $Y$ containing all elements of $\cS$ of rank $>r$, it is possible to compute a closed set that contains the entire equivalence class $[A]_\cS$ (this is \ref{subgroup-closure:J} of \cref{l:subgroup-closure}).
This will allow us to compute $\cT(Y,A)$.

The following lemma replaces \cite[Propositions 9 and 10]{hrushovski-ouaknine-pouly-worrell18} in our setting.

\begin{lemma} \label{l:subproduct}
  Let $A=A_1 \cdots A_n$ with $A_1$, $\ldots\,$,~$A_n \in M_d(K)$.
  Suppose there exists $r \ge 0$ such that $\rank(A)=\rank(A_iA_{i+1})=r$ for all $i \in [1,n-1]$.
  \begin{enumerate}
  \item \label{subproduct:parallel} There exists a subproduct $A'\coloneqq A_{i_1}\cdots A_{i_k}$ with $1=i_1 < i_2 < \cdots < i_{k-1} < i_k=n$ such that $A \parallel A'$ and $k \le \binom{d}{r}+3$.
  \item \label{subproduct:cpr} If $n \ge 2\binom{d}{r} + 4$, then there are $1 \le k < l \le n$ such that $A_k\cdots A_l$ is completely pseudo-regular of rank $r$.
  \end{enumerate}
\end{lemma}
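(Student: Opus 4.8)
The plan is to analyze the chain of images and kernels along the product, using the rank hypothesis together with the key finiteness result (\cref{l:key-finiteness}). Set $B_j \coloneqq A_1 \cdots A_j$ (the prefix products) and $C_j \coloneqq A_j \cdots A_n$ (the suffix products), with $B_0 = C_{n+1} = I$. The crucial observation is that $\rank(A_i A_{i+1}) = r = \rank(A)$ forces a ``no collapse'' condition: since $A = B_{i-1} A_i A_{i+1} C_{i+2}$ factors through $A_i A_{i+1}$, which has rank $r$, and $\rank(A) = r$, no rank is lost at the junctions, i.e. $\im(A_i) \cap \ker(A_{i+1}) = 0$ and more generally $\im(C_{i+1}) \cap \ker(B_{i-1})$ behaves well. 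Concretely, I would first show $\rank(B_j) \ge r$ and $\rank(C_j) \ge r$ for all relevant $j$, and that the images $\im(B_j)$ form a decreasing-type filtration while the kernels behave dually.

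For part~\ref{subproduct:parallel}, the idea is to extract a short subproduct $A'$ with $\im(A') = \im(A)$ and $\ker(A') = \ker(A)$. First I would build the subproduct from the left: among the prefixes, track the ``effective kernel'' $\ker(A) \cap (\text{something})$; more cleanly, set $V_j \coloneqq \im(A_{j+1}\cdots A_n) = \im(C_{j+1})$, a space of dimension $\ge r$, and shrink it to dimension exactly $r$ by noting $\im(A_j C_{j+1})$ stabilizes; pair this with $U_j$ = an appropriate kernel-type complement, so that $\dim V_j = r$ and $U_j \cap V_j = 0$ witnesses the full-rank condition. By \cref{l:key-finiteness}, once we have more than $\binom{d}{r}$ such indices, two of them, say $i<j$, satisfy $U_i \cap V_j = 0$, which lets us \emph{splice out} the block $A_{i+1}\cdots A_j$ without changing the relevant image/kernel data. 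Iterating this splicing, the number of retained factors drops below $\binom{d}{r}$; adding back the first and last factor (which must be kept to preserve $\im$ and $\ker$ exactly, not just up to the algebraic-group data) and one slack factor gives the bound $k \le \binom{d}{r} + 3$. One does the analogous argument from the right to control the kernel, hence the two applications of \cref{l:key-finiteness} (parts 1 and 2 of that lemma).

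For part~\ref{subproduct:cpr}: if $n \ge 2\binom{d}{r}+4$, I would apply part~\ref{subproduct:parallel} not to the whole product but look for a \emph{repeat}. Consider the $n$ (or $n-1$) pairs $(\ker(B_{j-1}), \im(C_{j}))$ — or rather the relevant rank-$r$ subspaces $\im(C_{j})$ of dimension exactly $r$ after stabilization, together with their complementary kernel spaces — and apply \cref{l:key-finiteness} to get \emph{two} indices $k < l$ with the property that the ``state'' $(\im, \ker)$ at position $k$ matches well enough with position $l$, namely that $\im(A_k \cdots A_l)$ and $\ker(A_k\cdots A_l)$ are in the ``generic position'' $\im(A_k\cdots A_l) \cap \ker(A_k \cdots A_l) = 0$; by \cref{l:cr}\ref{cr:intersection} this is exactly complete pseudo-regularity, and the rank is $r$ since no collapse occurs on any $A_iA_{i+1}$ junction inside the block. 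The factor $2$ in $2\binom{d}{r}+4$ comes from needing room for \emph{both} a left-pass and a right-pass of \cref{l:key-finiteness} (or, equivalently, enough indices that a single pair is forced even after reserving the endpoints).

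The main obstacle I expect is bookkeeping the image/kernel dimensions precisely: one must be careful that ``$\rank(A_iA_{i+1}) = r$ for all $i$'' really does propagate to $\rank(A_i \cdots A_j) = r$ for all sub-blocks (this needs the semigroup-style argument $\im(XY) \subseteq \im(X)$ and $\rank(XYZ) \le \rank(XY)$, applied carefully so that equality $\rank(A) = r$ squeezes everything), and that when we splice out a block the $\parallel$-relation (equality of image \emph{and} kernel, not just conjugacy) is genuinely preserved — this is where keeping the first and last factors, and the precise ``$U_i \cap V_j = 0$'' conclusion of \cref{l:key-finiteness}, is essential rather than cosmetic.
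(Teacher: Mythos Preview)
Your plan for part~\ref{subproduct:parallel} is essentially the paper's approach: set $V_i=\im(A_i\cdots A_n)$ and $U_i=\ker(A_1\cdots A_{i-1})$, apply \cref{l:key-finiteness} to find $j<i$ with $U_j\cap V_i=0$, splice out $A_j\cdots A_{i-1}$, and iterate. One simplification you are missing: you do \emph{not} need two passes. Once you observe that every sub-block of length $\ge 2$ has rank exactly $r$ (squeeze: $\rank(A_k\cdots A_l)\le\rank(A_kA_{k+1})=r$ and $\ge\rank(A)=r$), the single condition $U_j\cap V_i=0$ already forces $\im(A_1\cdots A_{j-1}A_i\cdots A_n)=\im(A_1\cdots A_{j-1})=\im(A)$ and $\ker(A_1\cdots A_{j-1}A_i\cdots A_n)=\ker(A_i\cdots A_n)=\ker(A)$ simultaneously. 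The ``$+3$'' comes from restricting to $i\in[3,n-1]$ (you need $i-1\ge 2$ and $i\le n-1$ for these rank equalities), not from a second pass.

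For part~\ref{subproduct:cpr} there is a real gap. Your pairs $(\ker(B_{j-1}),\im(C_j))$ are the same $U_j,V_j$ as in part~\ref{subproduct:parallel}, and \cref{l:key-finiteness} applied to them yields the splicing condition again, not complete pseudo-regularity of a sub-block; in particular when the two indices are adjacent you get no usable block. The paper's move is different: group the factors in consecutive \emph{pairs} and set $U_i=\ker(A_{2i-1}A_{2i})$, $V_i=\im(A_{2i+1}A_{2i+2})$ for $i\in[1,\lfloor n/2\rfloor-1]$. Then $U_i\cap V_i=0$ (the $4$-block $A_{2i-1}\cdots A_{2i+2}$ has rank $r$), and one application of \cref{l:key-finiteness} gives $i<j$ with $U_j\cap V_i=0$. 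Now the block $P=A_{2i+1}\cdots A_{2j}$ has $\im(P)=V_i$ and $\ker(P)=U_j$ (again by the rank-$r$ squeeze on sub-blocks), hence $\im(P)\cap\ker(P)=0$ and $P$ is completely pseudo-regular of rank $r$. The factor $2$ in the bound is precisely the cost of this pairing (it halves the number of available indices), not two passes of \cref{l:key-finiteness}.
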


\begin{proof}
  \ref{subproduct:parallel} For $i \in [3,n-1]$ define $V_i \coloneqq \im(A_i\cdots A_n)$ and $U_i\coloneqq \ker(A_1\cdots A_{i-1})$.
  Then $V_i \cap U_i = 0$ for all $i \in [3,n-1]$.
  Suppose $n > \binom{d}{r}+ 3$.
  By \cref{l:key-finiteness}, there exist $i$,~$j \in [3,n-1]$ with $j < i$ such that $U_{j} \cap V_i = 0$.
  Then
  \[
    A_1 \cdots A_{j-1}(A_j\cdots A_{i-1})A_i\cdots A_n \parallel A_1 \cdots A_{j-1}A_i\cdots A_n,
  \]
  and the second product has fewer factors.
  The claim follows by repeating this process.

  \ref{subproduct:cpr} For $i \in [1,\lfloor n/2 \rfloor -1]$, let $U_i = \ker(A_{2i-1}A_{2i})$ and $V_i = \im(A_{2i+1}A_{2i+2})$.
  Then $U_i \cap V_i = 0$ for all $i$.
  By \cref{l:key-finiteness}, there are $i < j$ with $U_j \cap V_i = 0$.
  Then $\im(A_{2i+1}A_{2i+2}) \cap \ker(A_{2j-1}A_{2j}) = 0$ and $2i+1 < 2j$, so $k=2i+1$ and $l=2j$ works.
\end{proof}

\begin{lemma} \label{l:conn}
  Let $A \in \cS$ be completely pseudo-regular and $B \in [A]_\cS$.
  \begin{enumerate}
  \item \label{conn:element}There exist completely pseudo-regular $C$,~$D \in [A]_\cS$ such that $B=E(D)B$ and $B=BE(C)$.
  \item \label{conn:factors} Suppose $B=B_1B_2$ with $B_1$,~$B_2 \in \cS$.
    Then there exists a completely pseudo-regular element $C \in [A]_\cS$ such that $B_1B_2= B_1E(C)B_2$.
  \end{enumerate}
\end{lemma}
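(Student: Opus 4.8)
The proof will unwind the definition of $\sim_\cS$ and then do bookkeeping with images, kernels, and ranks, leaning on the rank characterization of completely pseudo-regular matrices in \cref{l:cr} (parts \ref{cr:square} and \ref{cr:intersection}) and on the fact that a completely pseudo-regular matrix restricts to an automorphism of its image.

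For \ref{conn:element} I would start from witnesses for $B \in [A]_\cS$: elements $\alpha$, $\beta$, $\gamma$, $\delta \in \cS \cup \{I\}$ with $B \parallel \alpha A \beta$ and $A \parallel \gamma B \delta$. Since $A$ is completely pseudo-regular, $\rank(A) = \rank(B) =: r$, and a dimension count through these relations pins down how the multipliers act: $\beta$ restricts to an isomorphism $\im(A) \xrightarrow{\sim} \im(B)$, $\delta$ restricts to an isomorphism $\im(B) \xrightarrow{\sim} \im(A)$, while $\alpha$, $\gamma$ are ``nondegenerate enough'' that $(\im\alpha)A = \im(A)$ and $(\im\gamma)B = \im(B)$ (with symmetric statements on the kernel side). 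Out of these multipliers one assembles a short product $D$ sandwiching a single copy of $A$ between pieces built from $\alpha$, $\beta$, $\gamma$, $\delta$, chosen so that $\im(D) = \im(B)$; running the same image chase on $D^2$ then gives $\rank(D^2) = r = \rank(D)$, so $D$ is completely pseudo-regular by \cref{l:cr}. Since $E(D)$ is the projection onto $\im(D) = \im(B)$ along $\ker(D)$, it fixes $\im(B)$, so $E(D)B = B$. The element $C$ with $B = BE(C)$ is obtained dually, working with kernels in place of images (equivalently, by transposing and applying the image statement to the closed set $X^{\textup{T}}$). Finally one checks $D$, $C \in \cS$ (each is a product of elements of $\cS \cup \{I\}$ involving $A$) and $D$, $C \in [A]_\cS$: one of the two $\parallel$-relations defining $\sim_\cS$ is immediate from the very form of $D$, $C$, and the other is spliced together from the relations above using that $\sim_\cS$ is an equivalence relation (possibly aided by \cref{l:subproduct}).

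For \ref{conn:factors} I would reduce to \ref{conn:element}. The identity $B_1 E(C) B_2 = B_1 B_2$ is equivalent to $B_1\bigl(E(C) - I\bigr)B_2 = 0$, and since $E(C)$ is the projection onto $\im(C)$ along $\ker(C)$, this amounts to saying that $\im(B_2)$ is, after applying $B_1$, sorted correctly by the decomposition $\im(C) \oplus \ker(C)$. Concretely, write $\im(B_2) = V_0 \oplus V_1$ with $V_1 := \im(B_2) \cap \ker(B_1)$ and $\dim V_0 = r$ (so $B_1$ is injective on $V_0$ with image $\im(B_1B_2) = \im(B)$); it then suffices to produce a completely pseudo-regular $C \in [A]_\cS$ with $\im(C) = V_0$ and $\ker(C) \supseteq V_1$. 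Since $B = B_1 B_2 \in [A]_\cS$, such a $C$ is obtained either by feeding a suitable product formed from $B_1$ and $B_2$ (for instance $B_2 B_1$, or $B_2 D B_1$ with $D$ the element from \ref{conn:element}) into \ref{conn:element}, or by rerunning the image/kernel bookkeeping of \ref{conn:element} directly with $B_1$ and $B_2$ in the role of the multipliers.

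I expect the main obstacle to be the membership assertion $D \in [A]_\cS$ (and its analogue for $C$, and the analogous point in the reduction of \ref{conn:factors}): the relation $\parallel$ is not preserved by multiplication, so one cannot freely substitute $\parallel$-equivalent matrices inside products, and the verification has to track precisely how each chosen multiplier acts on the four subspaces $\im(A)$, $\ker(A)$, $\im(B)$, $\ker(B)$. The reduction in \ref{conn:factors} is additionally delicate because $B_2 B_1$ need not have the same rank as $B_1 B_2 = B$, so one must take care (or use the $D$-trick) to stay inside the correct $\sim_\cS$-class. By contrast, once the relevant images are under control, the complete-pseudo-regularity of the constructed elements should follow mechanically from \cref{l:cr}\ref{cr:square}.
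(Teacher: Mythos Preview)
Your approach to \ref{conn:element} matches the paper's: it sets $D \coloneqq QAQ'$ (with witnesses $B \parallel QAP$, $A \parallel Q'BP'$), obtains $\im(D)=\im(B)$ and $\rank(D^2)=r$ by an image/kernel chase, and checks $D \in [A]_\cS$ via $A \parallel Q'DQA$.

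For \ref{conn:factors} your reduction criterion is right, but the candidates you name do not work. Take $B_1=E_{12}$, $B_2=E_{23}$ in $M_3(K)$, so $B=E_{13}$ with $\im(B)=Ke_1$ and $\ker(B)=\lspan\{e_1,e_2\}$. Any $D$ produced by \ref{conn:element} has $\im(D)=Ke_1$; since $D$ is completely pseudo-regular, $D|_{Ke_1}$ is multiplication by some $\lambda \ne 0$, so $DB_1=\lambda E_{12}$ and hence $B_2 D B_1=\lambda E_{23}E_{12}=0$. Likewise $B_2 B_1=0$. (This is realizable: with $\cS=\langle E_{12},E_{23},E_{31}\rangle$ and $A=E_{11}$ one has $B \in [A]_\cS$.) The underlying obstruction is that $B$ itself need not be completely pseudo-regular---here $\im(B)\cap\ker(B)=Ke_1$---so products that only pass through $\im(B)$ and $\ker(B)$ may collapse; indeed even $(B_2 D')(D B_1)$ vanishes in this example. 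The paper avoids this by routing through $A$: using both $D$ and the dual element $D'$ from \ref{conn:element} together with the original witnesses $P'$,~$Q'$, it takes $C \coloneqq (B_2 D' P')\,A\,(Q' D B_1)$. One then has $\im(Q'DB_1)=\im(A)$ and $\ker(B_2D'P')=\ker(A)$, and because $A$ \emph{is} completely pseudo-regular this forces $\rank(C)=r$; from there $\im(C)=\im(B_2D')$, $\ker(C)=\ker(DB_1)$, $C \in [A]_\cS$, and $B_1 E(C) B_2=B$ follow. Your fallback of ``rerunning the bookkeeping'' does not supply this detour through $A$, which is the missing idea.
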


\begin{proof}
  Let $P$, $P'$, $Q$,~$Q' \in \cS \cup \{I\}$ such that $A \parallel Q'BP'$ and $B \parallel QAP$.
  Let $r=\rank(A)$.

  \ref{conn:element}
  Since $\rank(B)=r$ as well, we have $\im(QA)=\im(B)$ and $\im(Q'B)=\im(Q'QA)=\im(A)$.
  Then $\rank(Q'QA)=r$ implies $\ker(Q'QA)=\ker(A)$, so that $Q'QA \parallel A$.
  In particular, $Q'QA$ is completely pseudo-regular.
  Now let $D\coloneqq QAQ'$.
  Since $\rank(Q'QAQ'QA)=r$, we must have $\rank(D)=r$.
  Then $\im(D)=\im(QA)=\im(B)$.
  Since $\rank(AQ'QA)=r$ we must have $\im(QA) \cap \ker(AQ')=0$, and thus $D$ is completely pseudo-regular.\
  Hence $E(D)B=B$.
  Finally, $D \parallel QAQ'$ by definition and $A \parallel Q'QAQ'QA=Q'DQA$, so that $A \sim_\cS D$.

  The symmetric claim follows analogously.

  \ref{conn:factors}
  By \ref{conn:element} there exist completely pseudo-regular elements $D$,~$D' \in [A]_\cS$ such that $B=E(D)B$ and $B=BE(D')$.
  Let $C \coloneqq (B_2D'P')A(Q'DB_1)$.
  From $\im(DB_1) \supseteq \im(DB)=\im(B)$ and $\rank(DB_1) \le \rank(B)$ we get $\im(DB_1)=\im(B)$.
  Analogously $\ker(B_2D')=\ker(BD')=\ker(B)$.
  Also $\im(Q'DB_1)=\im(Q'B)=\im(A)$ and $\ker(B_2D'P')=\ker(BP')=\ker(A)$.
  Thus $\rank(C)=r$.
  Computing $C^2=(B_2D'P')A(Q'DBD'P')A(Q'DB_1)$, we see that $C$ is completely pseudo-regular.
  From $A \parallel (Q'DB_1)C(B_2D'P')A$ we get $C \sim_\cS A$.

  From $\ker(DB_1)=\ker(C)$ we have $DB_1=DB_1E(C)$, and from $\im(B_2D')=\im(C)$ we have $E(C)B_2D'=B_2D'$.
  Thus $DB_1E(C)B_2=DB$ and $B_1E(C)B_2D'=BD'$.
  We deduce $B_1E(C)B_2|_{\im(D')} =B|_{\im(D')}$.
  Next $\ker (E(C)B_2) \subseteq \ker(DB) = \ker(B)$ implies $\ker(B_1E(C)B_2)=\ker(B)=\ker(D')$.
  So $K^d \!\!=\!\! \im(D')\!\oplus\! \ker(D')$, so $B_1E(C)B_2=B$.
\end{proof}

\begin{proposition} \label{p:closure-he}
  Let $E \in M_d(K)$ be idempotent of rank $r$ and let $H \subseteq E\overline{\cS}E$ be a closed subset.
  Then $\{\, A \in H : \rank(A) = r \,\}$ is contained in a subgroup of $\overline{\cS}$ \textup{(}with neutral element $E$\textup{)}, and it is possible to compute $\overline{\langle \{\, A \in H : \rank(A) = r \,\} \rangle}$.
\end{proposition}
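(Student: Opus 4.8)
The plan is to reduce the statement to the group case (\cref{t:alg1-correct}) by transporting the situation from the corner semigroup $E M_d(K) E$ to a genuine matrix semigroup over a smaller matrix ring. First I would observe that $E M_d(K) E$ is, as a semigroup, isomorphic to $M_r(K)$: picking a basis adapted to the decomposition $K^d = \im(E) \oplus \ker(E)$, an element of $E M_d(K) E$ is determined by its action on $\im(E) \cong K^r$, and composition corresponds to matrix multiplication in $M_r(K)$. Moreover this isomorphism is linear, so it is a homeomorphism for the linear Zariski topology, it carries closed sets to closed sets, and it sends the idempotent $E$ to the identity of $M_r(K)$. Under it, an element of $E M_d(K) E$ has rank $r$ in $M_d(K)$ if and only if its image in $M_r(K)$ is invertible.

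Next I would verify that $\{\,A \in H : \rank(A)=r\,\}$ is contained in a subgroup of $\overline{\cS}$ with identity $E$. Since $H \subseteq E\overline{\cS}E$ and $\overline{\cS}$ is a closed subsemigroup, $E\overline{\cS}E$ is a closed subsemigroup of $\overline{\cS}$ containing $E$ as a (two-sided) identity (using $E \in \overline{\cS}$, which holds by \cref{l:closed} as soon as $H$ contains a rank-$r$ element, and otherwise there is nothing to prove). Transporting to $M_r(K)$, the image of $E\overline{\cS}E$ is a closed submonoid of $M_r(K)$, and the rank-$r$ elements of $H$ map into its group of units, i.e.\ into $\mathrm{GL}_r(K) \cap (\text{image of } E\overline{\cS}E)$; pulling back, this is exactly a subgroup of $\overline{\cS}$ with neutral element $E$. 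For computability: let $H' \subseteq M_r(K)$ be the image of $H$ under the above isomorphism (computable from a basis of $H$, by restricting the defining equations to the corner), and let $H'_{\mathrm{inv}} = \{\,A \in H' : \det A \ne 0\,\}$. The closure $\overline{H'_{\mathrm{inv}}}$ is a closed subset of $M_r(K)$ in which the invertible matrices are dense: indeed the non-invertible locus $\{\det = 0\}$ is a proper Zariski-closed subset, and on each irreducible (= linear) component $W$ of $\overline{H'_{\mathrm{inv}}}$ either $\det$ vanishes identically on $W$ — impossible, since $W$ is spanned by points of $H'_{\mathrm{inv}}$ where $\det \ne 0$, by the usual fact that a polynomial vanishing on a spanning set of a subspace vanishes on the subspace only if\ldots\ wait, $\det$ is not linear; instead argue directly that $W \cap \{\det \ne 0\} \ne \emptyset$ because $W$ contains a point of $H'_{\mathrm{inv}}$ — so $\mathrm{GL}_r(K) \cap W$ is dense in $W$ for every component $W$. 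Hence \cref{t:alg1-correct} applies to $X := \overline{H'_{\mathrm{inv}}}$ and computes $\overline{\langle \overline{H'_{\mathrm{inv}}}\rangle}$.

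Finally I would reconcile this output with the claimed $\overline{\langle\{\,A \in H : \rank A = r\,\}\rangle}$. Write $R = \{\,A \in H : \rank A = r\,\}$, whose image in $M_r(K)$ is precisely $H'_{\mathrm{inv}}$. Since $R$ generates the same semigroup as its closure $\overline{\langle R\rangle} = \overline{\langle \overline{R}\rangle}$ (by \cref{l:closure-multiplication} applied to $\overline{\cS}$, $\overline{R}\,\overline{R} \subseteq \overline{\langle R\rangle}$, etc.), and transporting back, $\overline{\langle R\rangle}$ corresponds to $\overline{\langle \overline{H'_{\mathrm{inv}}}\rangle}$ under the isomorphism $E M_d(K) E \cong M_r(K)$; applying the inverse isomorphism to the basis output of \cref{alg:groupclosure} gives a basis for $\overline{\langle R\rangle}$. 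The main obstacle I expect is the careful bookkeeping in the density argument: ensuring that every irreducible component of $\overline{H'_{\mathrm{inv}}}$ really does meet $\mathrm{GL}_r(K)$ (so that the hypothesis of \cref{t:alg1-correct} is met), and that passing between $H$, $\overline{R}$, and their images commutes with closures and with taking the generated semigroup — here one leans on \cref{l:closure-multiplication} and on the fact that the corner embedding is linear, hence continuous and closed in the linear Zariski topology.
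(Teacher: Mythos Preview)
Your approach is correct and essentially identical to the paper's: both identify $EM_d(K)E$ with $M_r(K)$ via a basis adapted to $\im(E)\oplus\ker(E)$, observe that rank-$r$ elements correspond exactly to invertible matrices, and then reduce to the group case of \cref{sec:invertible}. Your write-up is considerably more careful than the paper's terse proof (in particular your handling of the density hypothesis for \cref{t:alg1-correct} is fine, since every component of $\overline{H'_{\mathrm{inv}}}$ meets $H'_{\mathrm{inv}}$); the one point to watch is that your justification of $E\in\overline{\cS}$ via \cref{l:closed} is circular---a rank-$r$ element of $E\overline{\cS}E$ need not lie in $\overline{\cS}$ without already knowing $E\in\overline{\cS}$---so the proposition really needs $E\in\overline{\cS}$ as an implicit hypothesis, which in all of the paper's applications holds because $E=E(A)$ for some completely pseudo-regular $A\in\cS$.
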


\begin{proof}
  Let $V=\im E$.
  By a suitable change of basis, the endomorphisms of $V$ correspond to matrices with arbitrary entries in the upper left $r \times r$-block and zeroes everywhere else.
  The matrices $A \in H$ with $\rank A=r$ correspond to those matrices where the upper left $r\times r$-block is invertible, and all entries outside this block are zero.
  We may therefore compute $\overline{\langle \{\, A \in H : \rank(A) = r \,\} \rangle}$ by reducing to the invertible case (see \cref{sec:invertible}).
\end{proof}

In the following lemma keep in mind that if $A \in \cS$ is completely pseudo-regular, then the associated idempotent $E=E(A)$ may not be contained in $\cS$ but is always contained in $\overline{\cS}$ by \cref{l:closed}.

The, somewhat technical, statement \ref{subgroup-closure:H} ``connects'' the idempotent $F=F(B)$ of any completely pseudo-regular to $E$ in way that is needed for proving \ref{subgroup-closure:J}.
Statement \ref{subgroup-closure:H} will not be needed later on.

\begin{lemma} \label{l:subgroup-closure}
  Let $r \ge 0$ and let $A$ be a completely pseudo-regular element of $\cS$ of rank $r$.
  Suppose $Y \subseteq M_d(K)$ is a closed set with $X \cup \{\, B \in \cS : \rank(B) > r \,\} \subseteq Y$.
  Let $E \coloneqq E(A)$ and $H \coloneqq  \{\, B \in E Y^{\trianglelefteq 2 \binom{d}{r}+5} E : \rank(B)=r \,\} $.
  \begin{enumerate}
  \item \label{subgroup-closure:H} 
    If $F=E(B)$ for some completely pseudo-regular $B \in [A]_\cS$, then there exist $D \in Y^{\trianglelefteq \binom{d}{r}+2}E$ and $D^+ \in E\overline{\langle H \rangle} Y^{\trianglelefteq \binom{d}{r}+2}$ such that $D^+D=E$ and $DD^+=F$.
  \item \label{subgroup-closure:J} The set $\overline{\langle H \rangle}$ is computable and $Y^{\trianglelefteq \binom{d}{r}+2} \overline{\langle H\rangle}Y^{\trianglelefteq \binom{d}{r}+2}$ contains $[A]_\cS$.
  \end{enumerate}
\end{lemma}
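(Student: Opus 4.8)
The plan is to prove part \ref{subgroup-closure:H} first and then deduce part \ref{subgroup-closure:J}, using \cref{l:conn} (to connect elements through idempotents), \cref{l:subproduct} (to shorten products and extract completely pseudo-regular subproducts) and \cref{p:closure-he} (to reduce the group at $E$ to the invertible case). Throughout I use the standing hypothesis that every element of $\cS$ of rank $>r$ lies in $Y$: hence, whenever an element is written as a product of factors from $X$ (or $Y$) and some consecutive subproduct has rank $>r$, that subproduct may be \emph{absorbed} into a single element of $Y$; since the number of factors strictly decreases, after finitely many absorptions every consecutive pair of factors of a rank-$r$ product has rank exactly $r$ (any consecutive subproduct of a rank-$r$ product has rank $\ge r$), so that \cref{l:subproduct} applies.

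For part \ref{subgroup-closure:H}: since $A \sim_\cS B$, pick $P,Q,P',Q' \in \cS\cup\{I\}$ with $B \parallel QAP$ and $A \parallel Q'BP'$. Using $AE=A$, $\ker A = \ker E$, $BF=B$ and the rank equalities forced by $\parallel$ (together with $\rank(QA)\le \rank A = r \le \rank(QAP)$, and symmetrically), one checks that $D_0 := QA$ has $\im D_0 = \im F$, $\ker D_0 = \ker E$, rank $r$, while $D_0'' := Q'B$ has $\im D_0'' = \im E$, $\ker D_0'' = \ker F$, rank $r$. After absorbing rank-$>r$ subproducts and noting $D_0 E = D_0$, $E D_0'' = D_0''$, an application of part \ref{subproduct:parallel} of \cref{l:subproduct} (to the products with a factor $E$ appended on the appropriate side) produces $D \parallel D_0$ with $D \in Y^{\trianglelefteq\binom{d}{r}+2}E$ and $D'' \parallel D_0''$ with $D'' \in \{E\}\cdot Y^{\trianglelefteq\binom{d}{r}+2}$; as $\parallel$ preserves image and kernel, $D$ and $D''$ carry the same image/kernel data as $D_0,D_0''$. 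Put $g := D''D$. Tracking images and kernels, $g$ is completely pseudo-regular with $E(g)=E$ and rank $r$, and $g = EgE$ with $g$ a product of at most $2\binom{d}{r}+4$ elements of $Y$ sandwiched by $E$, so $g \in H$. By \cref{p:closure-he} the rank-$r$ elements of $H$ generate a subgroup of $\overline{\cS}$ with identity $E$, and after the change of basis of \cref{p:closure-he} the rank-$r$ matrices in $\overline{\langle H\rangle}$ form (the underlying set of) a linear algebraic group by part \ref{closure-structure:group} of \cref{l:closure-structure}, hence are closed under taking the inverse in the group at $E$; so $g^+ \in \overline{\langle H\rangle}$. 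A direct image/kernel computation shows that $g^+D''$ is the pseudo-inverse of $D$ in $EM_d(K)E$, so $D^+ := g^+D'' \in E\overline{\langle H\rangle}Y^{\trianglelefteq\binom{d}{r}+2}$ satisfies $D^+D = E$ and $DD^+ = F$.

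For part \ref{subgroup-closure:J}, first the computability: as $Y$ is closed, $\overline{Y^{\trianglelefteq 2\binom{d}{r}+5}}$ is computable by iterating \cref{l:prodclosure} over the irreducible components, and applying the linear closed map $M \mapsto EME$ yields the closed set $\overline{EY^{\trianglelefteq 2\binom{d}{r}+5}E}$, contained in $E\overline{\cS}E$. Its rank-$r$ locus is Zariski-dense in the relevant components and coincides up to Zariski closure with that of $EY^{\trianglelefteq 2\binom{d}{r}+5}E$; hence $H$ and $\{\,B \in \overline{EY^{\trianglelefteq 2\binom{d}{r}+5}E} : \rank B = r\,\}$ generate subsemigroups with the same linear Zariski closure, which is computable by \cref{p:closure-he}. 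Now the containment: let $B \in [A]_\cS$, so $\rank B = r$, and by part \ref{conn:element} of \cref{l:conn} there are completely pseudo-regular $D_1, D_2 \in [A]_\cS$ with $B = E(D_1)\,B\,E(D_2)$. Applying part \ref{subgroup-closure:H} to $D_1$ and $D_2$ gives $G_i \in Y^{\trianglelefteq\binom{d}{r}+2}E$ and $G_i^+ \in E\overline{\langle H\rangle}Y^{\trianglelefteq\binom{d}{r}+2}$ with $G_i^+G_i = E$ and $G_iG_i^+ = E(D_i)$, so that
\[
  B = (G_1G_1^+)\,B\,(G_2G_2^+) = G_1\,\bigl(G_1^+BG_2\bigr)\,G_2^+ .
\]
Writing $M := G_1^+BG_2$, and using that $B$ restricts to an isomorphism $\im E(D_2)\to\im E(D_1)$, an image/kernel computation shows that $M$ is completely pseudo-regular with $E(M)=E$ and rank $r$. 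The crux is to show $M \in \overline{\langle H\rangle}$: write $M = c\,w\,E$ with $c \in \overline{\langle H\rangle}$ and $w$ a product of factors from $X$ and bounded-length $Y$-products, so that $c^+M = EwE$ is a completely pseudo-regular rank-$r$ element of the group at $E$; after absorbing rank-$>r$ subproducts, repeatedly use part \ref{subproduct:cpr} of \cref{l:subproduct} to split off completely pseudo-regular rank-$r$ subproducts, reconnect each of their idempotents to $E$ via part \ref{subgroup-closure:H}, and — by an induction on the number of factors — one is left with $c^+M$ expressed as a product of elements of $H$ times a short ($\le 2\binom{d}{r}+5$) product sandwiched by $E$, which again lies in $H$; hence $c^+M \in \overline{\langle H\rangle}$ and so $M \in \overline{\langle H\rangle}$. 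Finally $\overline{\langle H\rangle} \subseteq EM_d(K)E$ (a subspace), so $EM = M$ and $E\overline{\langle H\rangle} = \overline{\langle H\rangle}$; writing $G_1 = P_1E$ and $G_2^+ = c_2R_2$ with $P_1, R_2 \in Y^{\trianglelefteq\binom{d}{r}+2}$, $c_2 \in \overline{\langle H\rangle}$, and using $Mc_2 \in \overline{\langle H\rangle}$ (a semigroup), we get $B = G_1MG_2^+ = P_1(Mc_2)R_2 \in Y^{\trianglelefteq\binom{d}{r}+2}\,\overline{\langle H\rangle}\,Y^{\trianglelefteq\binom{d}{r}+2}$.

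The main obstacle is that part \ref{subproduct:parallel} of \cref{l:subproduct} only returns a short subproduct that is \emph{parallel} to the given element, not the element itself; one must therefore argue that the discrepancy between a completely pseudo-regular element of the group at $E$ and its short parallel representative again lies in $\overline{\langle H\rangle}$. This is what forces the induction on the number of factors in the proof of $M \in \overline{\langle H\rangle}$ and the appeal to \cref{p:closure-he} (the rank-$r$ part of $\overline{\langle H\rangle}$ being a linear algebraic group, hence closed under inversion), and it is the reason the definition of $H$ allows products of length up to $2\binom{d}{r}+5$.
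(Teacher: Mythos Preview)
Your argument for part~\ref{subgroup-closure:H} is essentially the paper's: construct a short $D$ on one side and a short $D''$ on the other, set $g=D''D\in H$, and take $D^+=g^+D''$. Fine.

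The real divergence is in part~\ref{subgroup-closure:J}, and there your induction has a genuine gap. The paper does \emph{not} sandwich $B$ between just two idempotents and then try to break up the long middle product. Instead it writes $B=B_1\cdots B_n$ with $B_i\in X$ and uses \cref{l:conn}\ref{conn:factors} to insert an idempotent $E_i=E(C_i)$ with $C_i\in[A]_\cS$ between \emph{every} consecutive pair, obtaining $B=E_0B_1E_1\cdots E_{n-1}B_nE_n$. Applying part~\ref{subgroup-closure:H} to each $C_i$ gives $A_i,A_i^+$, and the product telescopes as $B=A_0(A_0^+B_1A_1)\cdots(A_{n-1}^+B_nA_n)A_n^+$. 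Each bracket lies in $E\overline{\langle H\rangle}\,Y^{\trianglelefteq\binom{d}{r}+2}\,X\,Y^{\trianglelefteq\binom{d}{r}+2}E$, hence (being rank~$r$) in $\overline{\langle H\rangle}\cdot H\subseteq\overline{\langle H\rangle}$ by one containment check. No induction, no appeal to \cref{l:subproduct}\ref{subproduct:cpr}.

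Your route instead produces $c^+M=EwE$ with $w$ a long product of factors from $\cS$, extracts a completely pseudo-regular subproduct $A'=W_k\cdots W_l$ via \cref{l:subproduct}\ref{subproduct:cpr}, and then tries to ``reconnect'' $E(A')$ to $E$ via part~\ref{subgroup-closure:H}. But part~\ref{subgroup-closure:H} requires $A'\in[A]_\cS$, and you never establish this. All you know is that $A'\in\cS$ is completely pseudo-regular of rank~$r$; nothing in your setup forces $A'\sim_\cS A$. (One direction, $A'\parallel W_1\cdots W_{k-1}\,A'\,I$, is trivial, but you need $\cS$-elements $D',C'$ with $A\parallel D'A'C'$, and the available witnesses involve $E$ and elements of $\overline{\langle H\rangle}$ that need not lie in $\cS$.) Without $A'\in[A]_\cS$ you cannot invoke part~\ref{subgroup-closure:H}, and the induction collapses. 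Your closing paragraph identifies an obstacle, but not this one; the real issue is not that \cref{l:subproduct}\ref{subproduct:parallel} only yields a parallel element, it is that \cref{l:subproduct}\ref{subproduct:cpr} gives no control over the $\sim_\cS$-class of what it extracts. Using \cref{l:conn}\ref{conn:factors} from the start, as the paper does, sidesteps the problem entirely: the inserted idempotents are \emph{by construction} associated to elements of $[A]_\cS$.
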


\begin{proof}
  \ref{subgroup-closure:H}
  Recall $A=EA=AE$ and $\rank A=\rank B = \rank E = r$.
  Let $P$,~$Q \in \cS \cup \{I\}$ be such that $B \parallel QAP$.
  Then $\im(B)=\im(QAP)=\im(QA)=\im(QEA)$, with the middle equality holding because of $\rank(QA) \le r$.
  Also because of the ranks, therefore $\im(B)=\im(QE)$ and $\rank(QE)=r$.
  Analogously one finds $\ker(B) = \ker(EP)$ and $\rank(EP)=r$.
  Now $(EP)F=EP$ and $F(QE)=QE$.
  Write $P=P_1\cdots P_m$ and $Q=Q_1\cdots Q_n$ with $m$, $n \ge 0$ and $P_i$,~$Q_i \in X \cup \{\, B \in \cS : \rank(B) > r\,\}$.
  Choosing $m$, $n$ minimal, we get $\rank(P_iP_{i+1})=r$ for $i \in [1,m-1]$ and $\rank(Q_iQ_{i+1})=r$ for $i \in [1,n-1]$.
  Consider $EP_1\cdots P_m$ and $Q_1\cdots Q_nE$.
  Applying \ref{subproduct:parallel} of \cref{l:subproduct}, we find subproducts $D=Q_{i_1}\cdots Q_{i_k}E$ and $C=EP_{j_1}\cdots P_{j_l}$ with $k$,~$l \le \binom{d}{r}+2$ and such that $\im(D)=\im(F)$ and $\ker(C)=\ker(F)$.

  Now set $R\coloneqq CD$.
  Then $R \in H$.
  Therefore $\overline{\langle H \rangle}$ contains the pseudo-inverse $R^+=ER^+=R^+E$ satisfying $RR^+=R^+R=E$ by \cref{l:closed}.
  Define $D^+\coloneqq R^+C=R^+EC$.
  Then $D^+D = R^+CD=R^+R=E$.
  Furthermore $DD^+$ is idempotent with $\im(DD^+)=\im F$ and $\ker(DD^+)=\ker F$.
  Thus $DD^+=F$.

  \ref{subgroup-closure:J}
  One first computes $\overline{H}$ and then, using \cref{p:closure-he}, one can compute $\overline{\langle H \rangle}=\overline{\langle \{ C \in \overline H : \rank(C) = r \}\rangle}$ as a subset of $EM_d(K)E$.
  Note $E \overline{\langle  H \rangle} Y^{\trianglelefteq \binom{d}{r} + 2}X Y^{\trianglelefteq \binom{d}{r}+2} E \subseteq E \overline{\langle H \rangle} Y^{\trianglelefteq \binom{d}{r}+5} E = E \overline{\langle H \rangle} E  Y^{\trianglelefteq \binom{d}{r}+5} E$.
  Every element of this set having rank $r$ is also contained in $E \overline{\langle H \rangle} H E \subseteq \overline{\langle H \rangle}$.
  
  Let $B=B_1\cdots B_n \in [A]_\cS$ with $B_1$, $\ldots\,$,~$B_n \in X$.
  By \cref{l:conn}, there exist completely pseudo-regular elements $C_0$, $\ldots\,$,~$C_n \in [A]_\cS$ such that $B=E_0B_1E_1B_2\cdots E_{n-1}B_nE_n$ with idempotents $E_i=E(C_i)$.
  For each $E_i$, let $A_i \in Y^{\trianglelefteq \binom{d}{r}+2}E$ and $A_i^+ \in E\overline{\langle H\rangle}Y^{\trianglelefteq \binom{d}{r}+2}$ be such that $A_i^+A_i=E$ and $A_iA_i^+=E_i$ (these exist by \ref{subgroup-closure:H}).
  Then
  \[
    B = A_0(A_0^+ B_1 A_{1})(A_{1}^+ B_{2}A_{2}) \cdots (A_{n-1}^+B_nA_n)A_n^+.
  \]
  Each $A_i^+B_iA_{i-1}$ is contained in $\overline{\langle H \rangle}$ and $A_n^+ \in  E\overline{\langle H \rangle} Y^{\trianglelefteq \binom{d}{r}+2} \subseteq \overline{\langle H\rangle} Y^{\trianglelefteq \binom{d}{r}+2}$, so that we obtain $B \in Y^{\trianglelefteq \binom{d}{r}+2} \overline{\langle H \rangle} Y^{\trianglelefteq \binom{d}{r}+2}$.
\end{proof}

\subsection{Termination and correctness of \cref{alg:semigroupclosure}}

The following lemma forms the basis of the recursive strategy in \cref{alg:semigroupclosure}.
It reduces the problem of computing $\overline{\cS}$ to the computation of a suitable set of representatives of the completely pseudo-regular elements.

\begin{lemma} \label{l:strategy}
  Let $Y \subseteq M_d(K)$ be closed, $r \ge 0$, and suppose $Y$ contains $X \cup \{\, A \in \cS : \rank(A) > r\,\}$.
  \begin{enumerate}
  \item \label{strategy:class} If $B$ is completely pseudo-regular of rank $r$, then $[B]_\cS \subseteq \cT(Y,B)$.
  \item \label{strategy:recursion} If $T \subseteq M_d(K)$ is closed such that $Y \cup T$ contains every completely pseudo-regular $B \in \cS$ with $\rank(B) \ge r$, then
    \vspace{-0.75em}
    \[
      \{\, B \in \cS : \rank(B) \ge r \,\} \subseteq \overline{\big(Y \cup T\big)^{\le 2\binom{d}{r}+3}}.
    \]
  \end{enumerate}
\end{lemma}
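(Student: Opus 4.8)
The plan is to obtain \ref{strategy:class} as an immediate consequence of \cref{l:subgroup-closure}, and to prove \ref{strategy:recursion} by an iterative ``merge-and-extract'' reduction of a factorization of $B$, driven by \cref{l:subproduct}.

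For \ref{strategy:class}, I would apply \cref{l:subgroup-closure} with $A=B$; its hypotheses on $Y$ coincide with those of \cref{l:strategy}. Set $E=E(B)$ and $H=\{\, C \in E Y^{\trianglelefteq 2\binom{d}{r}+5} E : \rank(C)=r\,\}$. Part \ref{subgroup-closure:J} of \cref{l:subgroup-closure} gives $[B]_\cS \subseteq Y^{\trianglelefteq\binom{d}{r}+2}\,\overline{\langle H\rangle}\,Y^{\trianglelefteq\binom{d}{r}+2}$. Since $E Y^{\trianglelefteq 2\binom{d}{r}+5} E \subseteq \overline{E Y^{\trianglelefteq 2\binom{d}{r}+5} E}$, we have $H \subseteq \cT_0(Y,B)$, hence $\overline{\langle H\rangle} \subseteq \overline{\langle \cT_0(Y,B)\rangle}$, and therefore $[B]_\cS \subseteq Y^{\trianglelefteq\binom{d}{r}+2}\,\overline{\langle \cT_0(Y,B)\rangle}\,Y^{\trianglelefteq\binom{d}{r}+2} \subseteq \cT(Y,B)$, the last inclusion holding because $\cT(Y,B)$ is by definition the closure of that very set.

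For \ref{strategy:recursion}, the case $\rank(B)>r$ is immediate, as then $B \in \cS$ has rank $>r$ and hence $B \in Y$. So fix $B \in \cS$ with $\rank(B)=r$ and write $B=B_1\cdots B_n$ with $B_i \in X$. I would maintain a factorization $B=N_1\cdots N_m$ with every $N_j \in \cS \cap (Y\cup T)$ and repeatedly do the following. First, as long as some consecutive product $N_jN_{j+1}$ has rank $>r$, replace those two factors by their product: it lies in $\cS$ and has rank $>r$, hence lies in $Y$, and $m$ strictly decreases. Once this is no longer possible, $\rank(N_jN_{j+1})=r$ for every $j$ (each such product has rank $\ge r$ because $B=P\,N_jN_{j+1}\,Q$ factors through it, and it is no longer $>r$). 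If at this point $m \le 2\binom{d}{r}+3$ we stop; otherwise $m \ge 2\binom{d}{r}+4$, so \ref{subproduct:cpr} of \cref{l:subproduct} produces $k<l$ with $N_k\cdots N_l$ completely pseudo-regular of rank $r$. As a contiguous subproduct of elements of $\cS$ it lies in $\cS$, hence in $Y\cup T$ by the hypothesis on completely pseudo-regular elements of rank $\ge r$; replacing the block $N_k\cdots N_l$ by it again strictly decreases $m$. Iterating, $m$ strictly decreases each round until it is $\le 2\binom{d}{r}+3$, which gives $B \in (Y\cup T)^{\le 2\binom{d}{r}+3} \subseteq \overline{(Y\cup T)^{\le 2\binom{d}{r}+3}}$.

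The part requiring care is the bookkeeping in \ref{strategy:recursion}: one must verify that every factor ever produced remains in $\cS$ — so that ``rank $>r$'' genuinely forces membership in $Y$, and so that the extracted completely pseudo-regular factor is covered by the hypothesis on $Y\cup T$ — and that the rank equalities required by \cref{l:subproduct} are re-established after each merging phase. Both are straightforward: products and contiguous subproducts of elements of $\cS$ stay in $\cS$, and since $\rank(B)=r$ exactly, $B$ factors through every contiguous sub-block, forcing all two-factor contiguous products to have rank $\ge r$. Note that no recursion on the left or right halves of the factorization is needed; the reduction is applied to the whole word at once, with $m$ strictly decreasing each round.
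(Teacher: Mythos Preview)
Your proof is correct and follows essentially the same approach as the paper. For \ref{strategy:recursion}, the paper packages the factorization reduction into a separate \cref{l:recursion-newcpr} (which it reuses later in \textsc{FindCPR}) and argues by contradiction, whereas you carry out the same merge-then-apply-\cref{l:subproduct}\ref{subproduct:cpr} reduction directly and iteratively; the underlying combinatorics are identical.
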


The claim \ref{strategy:class} follows immediately from \ref{subgroup-closure:J} of \cref{l:subgroup-closure}. If some element of rank $> r$ is missing from $Y$, perhaps $[B]_\cS \not\subseteq \cT(Y,B)$, but $\cT(Y,B)$ is still computable.
We prove \ref{strategy:recursion} of \cref{l:strategy} after \cref{l:recursion-newcpr}.

Several things remain to check; in particular that \textsc{TryClose} will indeed succeed to compute the closure under certain assumptions on the sets $R_s$, that \textsc{FindCPR} will discover new completely pseudo-regular elements, and finally, that loops that increase the size of $R_s$ eventually terminate.

We need two final preparatory lemmas.
The first one allows us to find completely pseudo-regular elements.
This will be the key ingredient to make \textsc{FindCPR} work.

\begin{lemma} \label{l:recursion-newcpr}
  Let $r \ge 0$ and let $Y$, $T \subseteq M_d(K)$ be closed such that $X \cup \{\, B \in \cS : \rank(B) > r \,\} \subseteq Y$, and set
  \[
    Y' \coloneqq \overline{\big(Y \cup T\big)^{\le 2\binom{d}{r}+3}}.
  \]

  If there exists $A=A_1\cdots A_n \in \cS \setminus Y'$ with $A_1$, $\ldots\,$,~$A_n \in X$ and $\rank(A) \ge r$, then there exist $k < l$ such that the subproduct $A'=A_k\cdots A_l$ is completely pseudo-regular of rank $r$ and not contained in $Y \cup T$.
\end{lemma}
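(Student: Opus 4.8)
The plan is to argue by contradiction. Suppose that \emph{no} contiguous subproduct $A_a\cdots A_b$ with $a<b$ is at the same time completely pseudo-regular of rank $r$ and outside $Y\cup T$; I will derive $A\in Y'$, contradicting $A\notin Y'$.

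First I would observe that $\rank(A)=r$: we are given $\rank(A)\ge r$, but $\rank(A)>r$ would put $A\in\{\,B\in\cS:\rank(B)>r\,\}\subseteq Y\subseteq Y'$. Since $A$ is a product containing every contiguous subproduct $A_a\cdots A_b$, this also gives $\rank(A_a\cdots A_b)\ge r$ for all $a\le b$, with $A_a\cdots A_b\in Y$ whenever $\rank(A_a\cdots A_b)>r$. Then I would coarsen the factorization so that neighbouring factors multiply to rank exactly $r$: starting from $A=A_1\cdots A_n$, repeatedly replace two adjacent current factors by their product whenever that product has rank $>r$. The number of factors strictly decreases, so this terminates in a factorization $A=C_1\cdots C_p$ with: (i) each $C_j$ a product of consecutive $A_i$'s; (ii) each $C_j\in Y$ (being either some $A_i\in X\subseteq Y$, or a merged product of rank $>r$); and (iii) $\rank(C_jC_{j+1})=r$ for all $j$ (it is $\ge r$ by the above and $\le r$ since no mergeable adjacent pair remains), while of course $\rank(C_1\cdots C_p)=\rank(A)=r$.

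Next I would run the following loop on the list $C_1,\dots,C_p$. While $p\ge 2\binom{d}{r}+4$: since $\rank(C_1\cdots C_p)=\rank(C_jC_{j+1})=r$ for all $j$, \ref{subproduct:cpr} of \cref{l:subproduct} yields $1\le k<l\le p$ with $C_k\cdots C_l$ completely pseudo-regular of rank $r$; as the $C_j$ are consecutive blocks of the $A_i$, $C_k\cdots C_l=A_a\cdots A_b$ for some $a<b$. By the standing assumption $A_a\cdots A_b\in Y\cup T$, so replace the block $C_k,\dots,C_l$ by the single factor $C_k\cdots C_l$: the new list is still a consecutive-block factorization with all factors in $Y\cup T$ and all $\rank(C_jC_{j+1})=r$ — the new factor has rank $r$, and its products with its neighbours are squeezed between $\rank(A)=r$ and $\rank(C_{k-1}C_k)=r$, resp.~$\rank(C_lC_{l+1})=r$ — and it has strictly fewer factors. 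Since the factor count cannot decrease forever, the loop ends with $p\le 2\binom{d}{r}+3$; but then $A=C_1\cdots C_p\in(Y\cup T)^{\le 2\binom{d}{r}+3}\subseteq Y'$, contradicting $A\notin Y'$. Hence the standing assumption fails, which is exactly the assertion of the lemma.

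The main obstacle is that the generic rank is ill-behaved under products in the linear Zariski topology (\cref{exm:generic-rank}), so a priori a subproduct of $A$ could have rank larger than $r$; this is circumvented by the observation in the second paragraph that any contiguous subproduct of $A$ of rank $>r$ automatically lies in $Y$, which is what makes both the reduction to $\rank(A)=r$ and the coarsening step (keeping all factors in $Y\cup T$) go through. Beyond that, the only care needed is the routine bookkeeping that the rank conditions required to invoke \cref{l:subproduct} survive each merge, which is the squeezing remark above; I expect this bookkeeping, rather than any genuine difficulty, to be the thing to get right.
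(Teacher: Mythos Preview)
Your argument is correct, but it takes a somewhat different route from the paper's proof. The paper does not argue by contradiction and does not iterate: it simply takes a factorization $A=C_1\cdots C_t$ into consecutive blocks with each $C_i\in Y\cup T$ and $t$ \emph{minimal}. Minimality then does all the work at once: every subproduct $C_k\cdots C_l$ with $k<l$ is automatically outside $Y\cup T$, hence (since rank $>r$ would force it into $Y$) has rank exactly $r$; this gives both the hypothesis of \cref{l:subproduct} and the fact that whatever completely pseudo-regular subproduct \cref{l:subproduct} produces is already outside $Y\cup T$. Since $A\notin Y'$ forces $t\ge 2\binom{d}{r}+4$, a single application of \ref{subproduct:cpr} of \cref{l:subproduct} finishes the proof.

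By contrast, you separate the two concerns: a first coarsening handles only the rank condition (merging adjacent factors of product rank $>r$), and then a loop repeatedly invokes \cref{l:subproduct} and merges the resulting completely pseudo-regular block, using the contradiction hypothesis to keep every factor in $Y\cup T$. Your invariant maintenance (the squeezing argument for $\rank(C_{k-1}D)=\rank(DC_{l+1})=r$) is correct and is precisely the bookkeeping you flagged. The paper's one-shot minimality argument is slicker; your approach is more algorithmic in flavor and perhaps makes the mechanism more visible. One small remark: your reference to \cref{exm:generic-rank} is a bit off---that example concerns generic ranks of \emph{closures}, whereas here the issue is simply that an honest subproduct of a rank-$r$ matrix product can have rank $>r$, which is elementary and has nothing to do with the linear Zariski topology.
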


\begin{proof}
  Successively grouping together subproducts contained in $Y \cup T$, we find a representation $A_1\cdots A_n = C_1\cdots C_t$ with $C_i \in \langle A_1,\ldots,A_n \rangle \cap (Y \cup T)$ and $t$ minimal.
  By minimality of $t$, necessarily $C_k\cdots C_l \not \in Y \cup T$ for $k < l$.
  In particular, $\rank(C_k\cdots C_l)=r$.
  Since $A \not \in Y'$, necessarily $t \ge 2\binom{d}{r} + 4$.
 
  Now \ref{subproduct:cpr} of \cref{l:subproduct} implies that there exist $k < l$ such that $A'\coloneqq C_k\cdots C_l$ is completely pseudo-regular.
\end{proof}

\begin{proof}[Proof of \cref{l:strategy}, \ref{strategy:recursion}]
  Suppose the claim is false. Then there exists some $A \in \cS\setminus Y'$ with $\rank(A) \ge r$.
  Then \cref{l:recursion-newcpr} implies that there exists a completely pseudo-regular $B \in \cS \setminus (Y \cup T)$ with $\rank(B) \ge r$, contradicting our assumption.
\end{proof}

A second lemma allows us to bound the sizes of the sets $R_s$, and will ultimately yield termination of the algorithm.
Let $R \subseteq M_d(K)$ be a set of completely pseudo-regular matrices.
We define a directed graph $G(R)$, whose vertex set is $R$ and having a directed edge $A \to B$ if $\ker(B) \cap \im(A) = 0$.
(Loops are permitted, but this shall not make a difference in our considerations.)
In the following, \ref{scc:bound} should be compared to \cite[Proposition 8]{hrushovski-ouaknine-pouly-worrell18}.

\begin{lemma} \label{l:scc}
  \begin{enumerate}
  \item \label{scc:equivalent}
      If $A$,~$B \in G(R)$ are contained in the same strongly connected component \textup{(}SCC\textup{)}, then $A \sim_{\cS} B$.

    \item \label{scc:bound}
        The graph $G(R)$ has at most $\binom{d}{r}$ SCCs of rank $r$.
  \end{enumerate}
\end{lemma}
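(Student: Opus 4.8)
The plan is to prove the two parts in order, since \ref{scc:bound} will rely on \ref{scc:equivalent} together with \cref{l:key-finiteness}. For \ref{scc:equivalent}, the key observation is that an edge $A \to B$ in $G(R)$ — meaning $\ker(B)\cap\im(A)=0$ — should translate into a $\parallel$-relation after suitable multiplication. Concretely, I would first note that since $A$ is completely pseudo-regular of rank $r$, so is $E(A)$, and $\im(A)=\im(E(A))$, $\ker(A)=\ker(E(A))$. Given an edge $A\to B$ in $G(R)$, the condition $\ker(B)\cap\im(A)=0$ together with the rank equality $\rank(A)=\rank(B)=r$ should force $\rank(BA)=r$, hence $\im(BA)=\im(B)$ and $\ker(BA)=\ker(A)$; in particular $BA$ is completely pseudo-regular and $BA\parallel$-connects $A$ and $B$ in the sense needed for $\sim_\cS$. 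Iterating along a directed path and then, since $A$ and $B$ lie in the same SCC, going back along a path from $B$ to $A$, one assembles elements $C,D,C',D'\in\cS\cup\{I\}$ witnessing $A\sim_\cS B$. The only subtlety is making sure the intermediate products stay in $\cS\cup\{I\}$ (they do, as products of elements of $R\subseteq M_d(K)$, but one should be a little careful about whether $R\subseteq\cS$; I would state the lemma's hypothesis as $R$ being a set of completely pseudo-regular matrices and note that in all applications $R\subseteq\cS$, or more robustly, phrase $\sim_\cS$ using that the relevant products land in $\cS$ by context).

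For \ref{scc:bound}, I would argue by contradiction: suppose $G(R)$ has more than $\binom{d}{r}$ SCCs of rank $r$, and pick one representative $A_1,\dots,A_n$ from each such SCC, $n>\binom{d}{r}$. Associate to $A_i$ the pair $(U_i,V_i)\coloneqq(\ker(A_i),\im(A_i))$; then $U_i\cap V_i=0$ (because $A_i$ is completely pseudo-regular, by \ref{cr:intersection} of \cref{l:cr}) and $\dim V_i=r$. Apply \cref{l:key-finiteness}: there exist $i\ne j$ — in fact both an $i>j$ and an $i<j$ with $U_i\cap V_j=0$ — meaning $\ker(A_i)\cap\im(A_j)=0$ and, symmetrically, $\ker(A_j)\cap\im(A_i)=0$ for an appropriate ordering. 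This gives edges $A_j\to A_i$ and $A_i\to A_j$ in $G(R)$, so $A_i$ and $A_j$ lie in a common SCC, contradicting that they were chosen from distinct SCCs.

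The main obstacle I anticipate is in part \ref{scc:equivalent}: verifying that an edge, or a directed path, really yields the $\parallel$-relations needed to conclude $\sim_\cS$, and doing the rank bookkeeping cleanly. Specifically, I want that from $\ker(B)\cap\im(A)=0$ and $\rank A=\rank B=r$ one gets $\rank(BA)=r$ — this is because $\im(BA)=B(\im A)$ has dimension $\dim\im A-\dim(\ker B\cap\im A)=r-0=r$; and then $\im(BA)\subseteq\im(B)$ with equal dimension forces $\im(BA)=\im(B)$, while $\ker(BA)\supseteq\ker(A)$ with equal dimension forces $\ker(BA)=\ker(A)$. Thus $BA\parallel$ interpolates between $A$ (via kernel) and $B$ (via image). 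Chaining this along a path $A=B_0\to B_1\to\cdots\to B_m=B$ gives a single product $P=B_m B_{m-1}\cdots B_1$ (applied in the right order) with $P\cdot(\text{stuff})\parallel$-connecting the endpoints; combined with the return path within the SCC, this produces the four witnessing elements. The rest — transitivity/symmetry of $\sim_\cS$, constancy of rank on classes — is routine and already recorded in the paragraph following \cref{d:cpr-equivalence}.
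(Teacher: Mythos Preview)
Your approach to part~\ref{scc:equivalent} is correct and essentially the same as the paper's: an edge $C \to D$ forces $\rank(DC)=r$, $\im(DC)=\im(D)$, and $\ker(DC)=\ker(C)$, and chaining this along the two paths inside the SCC yields elements $P$,~$Q \in \cS$ with $B \parallel QAP$ and $A \parallel PBQ$. Your caution about $R \subseteq \cS$ is fine; in all uses $R \subseteq \cS$, so the products lie in $\cS$.

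Your argument for part~\ref{scc:bound}, however, has a genuine gap. \Cref{l:key-finiteness} provides two \emph{separate} existence statements: one pair $(i_1,j_1)$ with $i_1 > j_1$ and $U_{i_1} \cap V_{j_1} = 0$, and another pair $(i_2,j_2)$ with $i_2 < j_2$ and $U_{i_2} \cap V_{j_2} = 0$. Nothing says these coincide with the roles swapped, so you cannot conclude that some \emph{single} pair of representatives has edges in both directions. A single edge $A_j \to A_i$ does not put $A_i$ and $A_j$ in the same SCC, so there is no contradiction.

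The fix --- which is what the paper does --- is to first linearly order the representatives compatibly with reachability. The relation ``$A_i \ge A_j$ if there is a path from $A_i$ to $A_j$'' is a partial order on representatives of distinct SCCs; after a topological sort one may assume there is no path from $A_j$ to $A_i$ whenever $j > i$. In particular there is no edge $A_j \to A_i$ for $j > i$, hence $\ker(A_i) \cap \im(A_j) \ne 0$ for all $j > i$. Now one conclusion of \cref{l:key-finiteness} alone (the existence of $i<j$ with $U_i \cap V_j = 0$) already contradicts this if $k > \binom{d}{r}$. Your idea of applying \cref{l:key-finiteness} to $(U_i,V_i)=(\ker A_i,\im A_i)$ is exactly right; the missing ingredient is this preliminary ordering step.
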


\begin{proof}
  \ref{scc:equivalent} Observe: if there is an edge $C \to D$ in $G(R)$, then $\ker(DC)=\ker(C)$ and $\rank(D) \ge \rank(DC) = \rank(C)$.
  So if $C$, $D$ are two elements of the same SCC, then $\rank(C)=\rank(D)$;
  if $C \to D$ is an edge, then also $\im(DC)=\im(D)$.

  Now let there be paths $A \to C_1 \to \cdots \to C_k \to B$ and $B \to D_1 \to \cdots \to D_l \to A$.
  Set $Q \coloneqq BC_k\cdots C_1A$ and $P\coloneqq AD_l\cdots D_1B$.
  Then $\im(QAP)=\im(B)$ and $\ker(QAP)=\ker(B)$, so that $B \parallel QAP$.
  Symmetrically, $A \parallel PBQ$.

  \ref{scc:bound}
  Let $A_1$, $\ldots\,$,~$A_k$ be vertices in distinct SCCs of rank $r$.
  Define $A_i \ge A_j$ if there is a path from $A_i$ to $A_j$.
  This relation is reflexive, transitive, and, since $A_i$ and $A_j$ are in distinct SCCs, anti-symmetric.
  Thus it is an order relation and we may reindex the matrices in such a way that there is no path from $A_j$ to $A_i$ if $j > i$.
  In particular, $\ker(A_i) \cap \im(A_j) \ne 0$ for $j > i$ and $\ker A_i \cap \im A_i = 0$.
  By \cref{l:key-finiteness}, $k \le \binom{d}{r}$.
\end{proof}

\begin{theorem} \label{t:main-hull}
  For a closed set $X \subseteq M_d(K)$ and $\cS = \langle X \rangle$, it is possible to compute $\overline{\cS}$.
\end{theorem}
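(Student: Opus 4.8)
The plan is to verify that \cref{alg:semigroupclosure} halts and returns $\overline{\cS}$. The first task is to check that every step is effective. Computing $\grank$ of an irreducible closed set is \cref{c:compute-generic-rank} (for a finite union one takes the maximum), and closures of products of irreducible closed sets are computed componentwise via \cref{l:prodclosure}. The emptiness test in \cref{line:nonlinear}, $X^n\setminus C_{s'}\ne\emptyset$, is decidable because $C_{s'}$ is the union of the (linear-Zariski-)closed set $Y_{s'}$ with the rank-deficiency locus $\{A:\rank A<s'\}$, which is Zariski-closed; hence $C_{s'}$ is Zariski-closed and $X^n$ is a finite union of products of irreducible closed sets, so \cref{l:prod-containment} applies and also furnishes an explicit witness for \cref{line:pickelt}. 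Extracting a completely pseudo-regular subproduct in \cref{line:pickcpr} is made effective by \ref{subproduct:cpr} of \cref{l:subproduct} (equivalently \cref{l:recursion-newcpr}); the auxiliary membership tests (rank, the relation $\parallel$, membership in a union of subspaces) are all decidable. Finally, $\cT(Y,A)$ in \cref{line:compute-class} is computable by \ref{subgroup-closure:J} of \cref{l:subgroup-closure}, which invokes the already-settled group case (\cref{t:alg1-correct}) through \cref{p:closure-he}.

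Next I would record the loop invariants. Throughout a run one has $X\subseteq Y_1\subseteq\overline{\cS}$ (indeed $Y_1\supseteq Y_2\supseteq\cdots\supseteq Y_{r+1}=X$), $Y_i,T_i\subseteq\overline{\cS}$, and each $R_s$ is a finite set of completely pseudo-regular elements of $\cS$ of rank $s$ that are pairwise inequivalent under $\sim_{\cS}$. The containments in $\overline{\cS}$ follow from $\overline{\cS}$ being a semigroup (\cref{l:closure-structure}) containing $E(A)$ and $A^{+}$ for every completely pseudo-regular $A\in\cS$ (\cref{l:closed}), together with \cref{l:closure-multiplication}. Partial correctness is then immediate: if the algorithm returns $Y_1$, the guard of \cref{line:monoidloop} gives $\overline{Y_1^2}\subseteq Y_1$, so $Y_1$ is a closed subsemigroup of $M_d(K)$ containing $X$; hence $\overline{\cS}\subseteq Y_1$, and with $Y_1\subseteq\overline{\cS}$ we conclude $Y_1=\overline{\cS}$.

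The core of the argument is that, once every $R_s$ is \emph{saturated} — meets every $\sim_{\cS}$-class of completely pseudo-regular elements of $\cS$ of rank $s$ — the call \textsc{TryClose} returns $\overline{\cS}$, and that this state is reached after finitely many rounds (or the algorithm has already halted). The first part is a downward induction on $s$ using \cref{l:strategy}: assuming $Y_{s+1}\supseteq X\cup\{A\in\cS:\rank A>s\}$, saturation of $R_s$ with \ref{strategy:class} of \cref{l:strategy} shows that $Y_{s+1}\cup T_s$ contains every completely pseudo-regular element of $\cS$ of rank $\ge s$, whence \ref{strategy:recursion} of \cref{l:strategy} gives $Y_s\supseteq\{A\in\cS:\rank A\ge s\}$; the base case $Y_{r+1}=X$ is trivial and at $s=1$ one gets $\cS\subseteq Y_1$, forcing $Y_1=\overline{\cS}$ and exiting the \texttt{while}-loop. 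For the finiteness part: when $\overline{Y_1^2}\not\subseteq Y_1$ one has $\cS\not\subseteq Y_1$, and since $Y_1$ is the largest of the $Y_i$ a missing product of rank $t$ is also absent from $Y_t$, which is what \textsc{FindCPR}, scanning ranks from $r$ downward, tests; by \cref{l:recursion-newcpr} it returns a completely pseudo-regular element of some rank $s'$ lying outside $Y_{s'+1}\cup T_{s'}$, and one argues this element represents a \emph{new} $\sim_{\cS}$-class of rank $s'$ (it avoids $T_{s'}=\bigcup_{B\in R_{s'}}\cT(Y_{s'+1},B)$, while \ref{strategy:class} of \cref{l:strategy} places every already known rank-$s'$ class inside $T_{s'}$ once everything of rank $>s'$ has been captured, which holds because \textsc{FindCPR} reports the highest missing rank). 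By \ref{scc:bound} of \cref{l:scc} together with \ref{scc:equivalent} of \cref{l:scc}, any family of pairwise $\sim_{\cS}$-inequivalent completely pseudo-regular matrices of rank $s'$ lies in distinct strongly connected components of the associated graph and so has size at most $\binom{d}{s'}$; hence there are at most $\binom{d}{s'}$ such classes, so $R_{s'}$ can be enlarged at most $\binom{d}{s'}$ times. Consequently the inner \texttt{repeat}-loop terminates ($s$ increases strictly on each pass, so at most $r$ passes), and the outer \texttt{while}-loop terminates by a lexicographic argument on $(\card{R_r},\card{R_{r-1}},\ldots,\card{R_1})$: enlarging $R_{s'}$ and resetting $R_1,\ldots,R_{s'-1}$ raises $\card{R_{s'}}$ while leaving $\card{R_r},\ldots,\card{R_{s'+1}}$ unchanged, so the tuple strictly increases in the lexicographic order, which is bounded above by $(\binom{d}{r},\ldots,\binom{d}{1})$.

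The main obstacle, as the discussion around \cref{exm:generic-rank} signals, is exactly this termination argument: the generic rank is not monotone under the product-closure operation, so the recursively built sets $Y_s$ may contain matrices of rank $>s$, allowing \textsc{TryClose} to ``uncover'' higher-rank behaviour and send the algorithm back up. Making rigorous (i) the ``highest missing rank'' property that guarantees each \textsc{FindCPR} output is a genuinely new $\sim_{\cS}$-class — which needs a careful use of \cref{l:subproduct} to bound the lengths of the products witnessing missing elements, so that \textsc{FindCPR} actually meets them at a controlled iteration $n$ — and (ii) the bookkeeping that converts the rank-wise bounds $\card{R_s}\le\binom{d}{s}$ into the lexicographic termination above, is the delicate part of the proof; the rest is the (lengthy but routine) verification of the invariants sketched in the second paragraph.
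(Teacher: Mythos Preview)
Your overall architecture matches the paper's: verify effectivity of each step, establish $X\subseteq Y_1\subseteq\overline{\cS}$ for partial correctness, and prove termination by tracking the tuple $(|R_r|,\ldots,|R_1|)$. The paper phrases the last part as ``$R_r$ stabilizes, then $R_{r-1}$, \ldots'', which is exactly your lexicographic argument.

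The genuine gap is your claim that \textsc{FindCPR} ``reports the highest missing rank'', on which you hang the invariant that each $R_s$ consists of pairwise $\sim_{\cS}$-inequivalent elements. This claim is false: \textsc{FindCPR}'s outer loop is over the product length $n$, not over $s'$. A rank-$(s'-1)$ witness present in $X^n$ for small $n$ will be returned \emph{before} a rank-$s'$ witness that only appears in $X^n$ for much larger $n$. When that happens, the returned element is added to $R_{s'-1}$ while $Y_{s'}$ still fails to contain all rank-$\ge s'$ elements of $\cS$; then \ref{strategy:class} of \cref{l:strategy} no longer guarantees $[B]_{\cS}\subseteq\cT(Y_{s'},B)$ for $B\in R_{s'-1}$, so the new element may well be $\sim_{\cS}$-equivalent to one already present. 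Pairwise inequivalence of $R_s$ is therefore \emph{not} a loop invariant, and cannot be ``made rigorous'' along the line you suggest.

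The paper avoids this by making the inequivalence statement conditional (its observation~(b)): \emph{if} $\{A\in\cS:\rank A>s\}\subseteq Y_{s+1}$, \emph{then} the elements of $R_s$ are pairwise inequivalent and $|R_s|\le\binom{d}{s}$. The algorithm then exploits the \emph{contrapositive}: whenever the \texttt{repeat}-loop continues because $|R_s|>\binom{d}{s}$, the hypothesis must fail, so some rank-$(>s)$ element of $\cS$ lies outside $Y_{s+1}$, and that is precisely why \textsc{FindCPR} called with parameter $s$ is guaranteed to return --- a point your sketch handles only for the first pass $s=0$. The boundedness of the lex tuple then comes not from a global inequivalence invariant but from the exit condition itself: at the end of each inner loop one has $R_j=\emptyset$ for $j<s_k$, $|R_{s_k}|\le\binom{d}{s_k}$, and $R_j$ unchanged for $j>s_k$, so $|R_s|\le\binom{d}{s}$ for all $s$ by induction over outer iterations. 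Only the base case $|R_r|\le\binom{d}{r}$ uses inequivalence directly, and there it is unconditional because $\{A\in\cS:\rank A>r\}=\emptyset\subseteq Y_{r+1}$.
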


\begin{proof}
  We show that \cref{alg:semigroupclosure} terminates and outputs $\overline{\cS}$.

  First note, in \textsc{TryClose}, the inclusions $X \subseteq Y_s \subseteq \overline{\cS}$ and $T_s \subseteq \overline{\cS}$ hold for all $s$.
  In particular $X \subseteq Y_1 \subseteq \overline{\cS}$.
  If \cref{alg:semigroupclosure} terminates, then $Y_1^2 \subseteq Y_1$, and so $Y_1 \subseteq \overline{\cS}$ is a closed overmonoid of $X$ contained in $\overline{\cS}$, so $\cS \subseteq Y_1 \subseteq \overline{\cS}$ and thus $Y_1 = \overline{\cS}$.
  Thus only the termination of the algorithm remains to be shown.
  We start with two observations.
  \begin{enumerate}[label=\alph*)]
  \item \label{obs:1} In \textsc{TryClose}, if $Y_{s+1}$ contains $\{\, B \in \cS : \rank(B) \ge s+1 \,\}$ and $Y_{s+1} \cup T_s$ contains all completely pseudo-regular elements of rank $s$, then $Y_s$ contains $\{\, B \in \cS : \rank(B) \ge s \,\}$ by \ref{strategy:recursion} of \cref{l:strategy}.
  Since this condition trivially holds for $s=\grank(X)$ (as $\{\, B \in \cS : \rank(B) \ge r+1 \,\} = \emptyset$), it suffices to construct the sets  $T_s$ so that $Y_{s+1} \cup T_s$ covers the completely pseudo-regular elements of rank $\ge s$, to obtain $\cS \subseteq Y_1$ inductively.

\item \label{obs:2}
  Throughout the algorithm, $R_s$ is a finite set of completely pseudo-regular elements of rank $s$.
  Further, if $\{\, A \in \cS : \rank(A) \ge s+1 \,\} \subseteq Y_{s+1}$, then the elements of $R_s$ are pairwise $\sim_\cS$-inequivalent.
  (This follows because any element added to $R_s$ is chosen outside of $T_s$ and \ref{strategy:class} of \cref{l:strategy}.)
  Then $\card{R_s} \le \binom{d}{s}$ by \cref{l:scc}.

  Conversely, if we ever end up with $\card{R_s} > \binom{d}{s}$ in the algorithm, we must have missed a completely pseudo-regular element of rank $> s$, and we search for such an element (loop at \cref{line:resetloop}).
  \end{enumerate}

  To show that the algorithm terminates, we now show:
  \begin{enumerate}
    \item \label{term:cpr}
  in \cref{line:cpr-call}, the call to \textsc{FindCPR} always returns a completely pseudo-regular element $B$ of $\cS$ of some rank $s' > s$, with $B$ not contained in $Y_{s'+1} \cup T_{s'}$;
  \item \label{term:loops} the loops in \cref{line:monoidloop,line:resetloop} terminate.
  \end{enumerate}

  \ref{term:cpr} When we call \textsc{FindCPR} there always exists $s' >s$ and $A \in \cS \setminus Y_{s'}$ with $\rank(A)\ge s'$:
  for the first iteration ($s=0$), the failed check on \cref{line:monoidloop} implies $\cS \not \subseteq Y_1$.
  In any other iteration, we have $\card{R_s} > \binom{d}{s}$, so $\{\, A \in \cS : \rank(A) \ge s+1 \,\} \not \subseteq Y_{s+1}$.

  Thus, in \textsc{FindCPR}, there exists $n \ge 0$ and $s' > s$ such that $X^n \setminus C_{s'} \ne \emptyset$, and the loop will eventually discover such a pair $(n,s')$.
  Then $n \ge 2\binom{d}{s'}+4$, as $X^{\le 2\binom{d}{s'}+3} \subseteq Y_{s'}$ (\cref{line:ys-init,line:def-ys}).
  We can pick such an element $A=A_1\cdots A_n \in X \setminus C_{s'}$ (on \cref{line:pickelt}) using \cref{l:prod-containment}.
  \Cref{l:recursion-newcpr} gives the existence of a completely pseudo-regular subproduct (chosen on \cref{line:pickcpr}).

  \ref{term:loops}
  Consider first the loop on \cref{line:resetloop}.
  In each iteration $s$ increases by at least $1$ (the rank of $B$ is larger then the value of $s$ passed to \textsc{FindCPR}).
  But at latest when $s=r$, we always have $\card{R_r} \le \binom{d}{r}$, by observation \ref{obs:2}, and the loop terminates.

  Consider now the outer loop, on \cref{line:monoidloop}.
  Outside of the loop on \cref{line:resetloop}, always $\card{R_s} \le \binom{d}{s}$ for all $s$ (inside the loop still $\card{R_s} \le \binom{d}{s}+1$).
  In each iteration we are increasing the size of some $R_s$ by one, while resetting all $R_{s'}$ with $s' < s$ to the empty set.
  Since $\card{R_r} \le \binom{d}{r}$ and $R_r$ is only ever growing, eventually $R_r$ must stabilize.
  Once this is the case, the algorithm does not modify $R_r$ any more and only touches the sets $R_{r-1}$, $\ldots\,$,~$R_1$.
  At this point $R_{r-1}$ can only ever grow.
  Thus, eventually, $R_{r-1}$ will also stabilize at $\card{R_{r-1}} \le \binom{d}{r-1}$.
  Inductively we conclude that eventually all the sets $R_{r-1}$, $\ldots\,$,~$R_1$ stabilize (there are no more new completely pseudo-regular elements to discover), and the algorithm stops.
\end{proof}

\begin{remark}[Efficiency] \label{rem:efficiency}
  \begin{enumerate}
  \item 
  While the algorithm largely works with linear algebra, and avoids the use of Gröbner bases (which can be computationally inefficient), the function \textsc{FindCPR} appears to be an obstacle to a reasonably efficient implementation.
  In particular, in the computation of elements in $X^n \setminus C_{s'}$, the exponent $n$ may become very large (there is no upper bound) and one needs to consider very long products of (generic) matrices.
  An obvious way of improving the algorithm, is therefore to find a better way of discovering the completely pseudo-regular elements.
\item In \textsc{FindCPR}, crucially, we choose the elements in $X^n \setminus C_{s'}$ instead of $\overline{X^n}\setminus C_{s'}$ (which would be nicer computationally), to avoid higher rank elements that may potentially appear in the closure (\cref{exm:generic-rank}).
\item We do not get runtime bounds.
  The problem is a lack of a bound for $n$ in \textsc{FindCPR}, and the lack of bounds on the number of steps in \cref{alg:groupclosure}.
\end{enumerate}
\end{remark}

\begin{remark}[Output size] \label{rem:upper-bound-output}
  For $X \subseteq M_d(K)$ closed, let $\comp{X}$ be the number of irreducible components of $X$.
  Let $\cS=\overline{\langle X \rangle}$.
  We sketch a double-exponential upper bound for $\comp{\cS}$ (and therefore also for the linear hull).
  We only consider $K=\bQ$.

  First consider the group case (i.e., $\GL_d(\bQ)$ is dense in $X$).
  In this case, we get a double-exponential bound in $d$ that does not depend on $X$:
  let $G \coloneqq \cS \cap \GL_d(\bQ)$ and let $G^0$ be the irreducible component containing $I$.
  We need to bound $\card{G/G^0}$.
  In \cref{t:alg1-correct}, we saw that $G/G^0$ is a subgroup of $\GL_{d'}(\bQ)$ for some $d'$.
  The embedding arises from applying \cite[Theorem II.6.8]{borel91}.
  Tracing through \cite{borel91}, in our linear setting, gives
  \[
    d' \le \left( \binom{d^2}{r}+d\right)^2 \le \big(2^{d^2}+d\big)^2 \le 4\cdot 4^{d^2}
  \]
  (for some $r$, using that the binomial coefficients sum to $2^{d^2}$).
  Finite subgroups of $\GL_{d'}(\bQ)$ have cardinality at most $2^{d'}d'!$ if $d'>10$ and for smaller $d'$ the maximal sizes are also known (\cite[Table 1]{berry-dubickas-elkies-poonen-smyth04})\footnote{This theorem of Feit depends on unpublished work. Friedland \cite{friedland97} gives a proof for large $d$. This yields \emph{some} double-exponential bound.}.
  So $\comp{\cS} \le 2^{4 \cdot 4^{d^2}} (4 \cdot 4^{d^2})!$ for all $d$.
  In general, one gets a bound that is double-exponential in $d$, by combining the group case with induction on the recursive strategy \cref{l:strategy} (the bound depends on $\comp{X}$).
\end{remark}

\bibliography{linear_hull}

%



\clearpage
\appendix

\subsection{Other fields}

In the main text we restricted the field $K$ to be a number field (that is, a finite field extension of $\bQ$).
This restriction was made for simplicity of exposition.
In truth our approach does not impose restrictions on the nature of the field, except for the obvious necessity of the field being \emph{computable}, by which we mean (informally) that elements of the field can be represented exactly with finite memory, and equality comparisons between elements as well as the operations $+$, $\cdot$, $-$, $/$ can be computed exactly and in finite time.
This allows us to carry out linear algebra (Gaussian elimination) and computations with polynomials over such a field.

The rational numbers and finite fields of prime order (fields of the form $\mathbb F_p=\bZ/p\bZ$ with $p$ a prime number) are computable.
Finite-dimensional field extensions of computable fields are again computable when given by, e.g., generators and relations, or by a basis together with structure coefficients explaining the multiplication of basis elements.
Fields such as $\mathbb R$ or $\bC$ are not computable in this sense, however the field of algebraic numbers $\QQbar$ is computable (and implemented, for instance, in the SageMath computer algebra system).
The field $\bQ(\pi)$ is computable, because $\pi$ is transcendental and therefore $\bQ(\pi) \cong \bQ(x)$ is a rational function field.
The field $\bQ(\pi, e)$ is not known to be computable, because it is an open question in transcendence theory whether $\pi$ and $e$ are algebraically independent over $\bQ$.

When considering a weighted automaton, we may always work over fields that are \emph{finitely generated} (but not necessarily finite-dimensional) over their prime field ($\bQ$ or $\mathbb F_p$, depending on the characteristic).
Namely, we can take the field generated by all the entries of the vectors and matrices appearing in a linear representation of the automaton.
Let $K$ be a finitely generated field.
Then $K$ is a finite field, a number field, or a finitely generated extension of a finite or a number field $K_0$.
In the latter case, $K$ is the field of fractions of an affine $K_0$-algebra $R$.
We shall assume that $R$ is given by specifying generators and relations for $R$ over $K_0$.
That makes $R$, and therefore $K$, computable.

We now outline, section by section, which changes need to be made to deal with finitely generated fields.

\subsubsection{\Cref{sec:main}}
If $K$ is a finite field, then every vector space can be covered by a finite number of one-dimensional spaces (lines through the origin).
In this case, the irreducible closed sets in the linear Zariski topology are the vector spaces of dimension $\le 1$.
It follows that the linear hull always has dimension $\le 1$, and it becomes trivial to compute it.
As a consequence, one recovers the well-known result that a weighted automaton over a finite field is always determinizable.

If $K$ is an infinite field, the results in \cref{sec:main} remain valid as stated.

\subsubsection{\Cref{sec:onematrix}}

The algebraic closure $\QQbar$ of $\bQ$ has to be replaced by the algebraic closure $\algc{K}$ of $K$ throughout.
While the conclusion of \cref{t:one-matrix} remains true, several of the lemmas leading up to it, as well as the proof of \cref{t:one-matrix} itself, have to be adapted for the general case.

Write $\mu(\alg{K})$ for the group of all roots of unity.
If $\chr K = p > 0$, then $p$ does not divide the order of any root of unity.
For an integer $0 \ne n \in \bZ$, let $\val_p(n) \in \mathbb N_0$ denote the $p$-adic valuation, i.e., the number of times that $p$ divides $n$.

\begin{lemma} \label{l:general-invariant-no-roots-of-unity}
  Let $K$ be a field. Let $A \in \GL_d(K)$.
  \begin{enumerate}
  \item \label{l-g-inv1:coprime}
  Assume that for any two eigenvalues $\lambda$,~$\lambda' \in \alg{K}$ of $A$ for which $\lambda/\lambda' \in \mu(\algc{K})$, it holds that $\lambda=\lambda'$.
  Then a vector space $V \subseteq K^{d}$ is $A$-invariant if and only if it is $A^n$-invariant for all $n \ge 1$ with $\chr K \nmid n$.
\item \label{l-g-inv1:p}
    If $\chr K=p > 0$, and $V \subseteq K^d$ is $A^{p^n}$-invariant for some $n \ge 0$, then $V$ is $A^{p^e}$-invariant for $e=\val_p((d-1)!)$.
  \end{enumerate}
\end{lemma}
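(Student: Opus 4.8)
The plan rests on two elementary facts about the Frobenius in characteristic $p$: for commuting matrices $(B-\mu I)^{p^k} = B^{p^k}-\mu^{p^k}I$, and $x\mapsto x^{p^k}$ is injective on $\algc K$, so distinct scalars have distinct $p$-power. Throughout one works over $\algc K$. The two parts handle, respectively, the prime-to-$p$ and the $p$-part of an exponent.

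For the first part, the plan is to run the proof of \cref{l:invariant-no-roots-of-unity} essentially verbatim, with $\QQbar$ replaced by $\algc K$, keeping track of where the hypothesis $\chr K \nmid n$ is used. The generalized eigenspace decomposition of $V$ and the reduction to a single eigenvalue $\lambda$ are unaffected. The characteristic enters in precisely two places. First, the factorization $x^n-\lambda^n = \prod_{j=0}^{n-1}(x-\zeta^j\lambda)$ needs a primitive $n$-th root of unity $\zeta \in \algc K$, which exists exactly when $\chr K \nmid n$; then, as before, the $\zeta^j\lambda$ with $j\in[1,n-1]$ are not eigenvalues of $A$ (by the hypothesis on ratios, since $\zeta^j\neq 1$), so $(A-\zeta^j\lambda I)^i$ is invertible and $\ker(A^n-\lambda^n I)^i = \ker(A-\lambda I)^i$. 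Second, in the concluding downward induction one computes $v_k' = \bigl(\sum_j A^j\lambda^{n-1-j}\bigr)^k v_k = (n\lambda^{n-1})^k v_k$ on an eigenvector $v_k$, and one needs $n\lambda^{n-1}\neq 0$, i.e. $n\neq 0$ in $K$, i.e. $\chr K \nmid n$, to conclude $v_k'\neq 0$ and hence $v_k\in V \Leftrightarrow v_k'\in V$. With these two provisos the original argument gives, for each fixed $n$ with $\chr K\nmid n$, that $V$ is $A$-invariant if and only if it is $A^n$-invariant.

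For the second part the crux is that, unlike in part one, raising to a $p$-power is irreversible on invariant subspaces — one cannot descend all the way to $A^1$ — so the plan is to descend only to a power $A^{p^e}$ that has already become semisimple. First I would show $A$ and $A^{p^n}$ have the same generalized eigenspaces $E_{\lambda_i} = \ker(A-\lambda_i I)^d$: the eigenvalues $\lambda_i^{p^n}$ of $A^{p^n}$ are pairwise distinct by injectivity of Frobenius, and $\ker(A^{p^n}-\lambda_i^{p^n}I)^d = \ker(A-\lambda_i I)^{dp^n} = \ker(A-\lambda_i I)^d$, since $A-\lambda_i I$ is nilpotent of index $\le d$ on $E_{\lambda_i}$ and invertible off it. Since any $A^{p^n}$-invariant subspace is the sum of its intersections with the generalized eigenspaces of $A^{p^n}$, any such $V$ satisfies $V = \bigoplus_i (V\cap E_{\lambda_i})$, i.e. respects the generalized eigenspace decomposition of $A$. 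Now choose $e$ so that $p^e$ exceeds every Jordan block length of $A$; then on each $E_{\lambda_i}$ one has $A^{p^e} = \lambda_i^{p^e}I + (A-\lambda_i I)^{p^e} = \lambda_i^{p^e}I$, a scalar, whence $VA^{p^e} = \bigoplus_i \lambda_i^{p^e}(V\cap E_{\lambda_i}) = V$, so $V$ is $A^{p^e}$-invariant. It then remains to verify that $e = \val_p((d-1)!)$ is large enough for the displayed identity — equivalently that $\binom{p^e}{j}\equiv 0 \pmod p$ for the relevant $j$ — by a Kummer/Legendre computation; this also makes $e$ explicitly computable, in the spirit of \cref{l:computable-n0}. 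The main obstacle, and the only genuinely new ingredient over the characteristic-zero case, is exactly this last step: pinning down how far one may descend and checking that at $p^e$ the matrix $A^{p^e}$ is already scalar on each generalized eigenspace.
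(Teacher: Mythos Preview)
Your plan is essentially the paper's own proof. For part~\ref{l-g-inv1:coprime} the paper simply says to rerun the argument of \cref{l:invariant-no-roots-of-unity}, noting that $\chr K \nmid n$ is precisely what guarantees a primitive $n$-th root of unity in $\algc{K}$; your analysis is the same, just with a bit more detail on where the hypothesis enters. For part~\ref{l-g-inv1:p} the paper also decomposes along generalized eigenspaces (using that Frobenius is injective on scalars, so the eigenvalues of $A^{p^n}$ stay distinct) and then argues that $A^{p^e}$ acts as the scalar $\lambda^{p^e}$ on each one via the binomial expansion of $(\lambda+N)^{p^e}$. Your use of the freshman's dream $(\lambda+N)^{p^e}=\lambda^{p^e}+N^{p^e}$ is a cleaner phrasing of the same computation.

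There is, however, a genuine problem at exactly the spot you singled out as ``the main obstacle'': the value $e=\val_p((d-1)!)$ is \emph{not} large enough, and the verification you propose will fail. What one actually needs is $p^e\ge d$, so that $N^{p^e}=0$; equivalently, so that the index $i=p^e$ does not land in the range $[1,d-1]$ (where $\binom{p^e}{p^e}=1$ would survive). But $p^{\val_p((d-1)!)}\ge d$ is false already for small parameters: with $p=2$ and $d=3$ one gets $e=\val_2(2!)=1$ and $p^e=2<3$. Concretely, let $A=I+N$ with $N$ a $3\times 3$ nilpotent Jordan block over a field of characteristic~$2$. Then $A^4=(I+N^2)^2=I$, so every subspace is $A^{p^2}$-invariant, whereas $A^{p^e}=A^2=I+N^2$ and the line through the first standard basis vector is not $A^2$-invariant. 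So the statement as written is actually false; the paper's proof has the same slip. Replacing $e$ by any value with $p^e\ge d$ (for instance $e=\lceil\log_p d\rceil$, or simply $e=d$) fixes both the statement and the argument, and then your plan goes through verbatim.
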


\begin{proof}
  Without restriction, assume $K=\algc{K}$.
  
  \ref{l-g-inv1:coprime}
  The proof is the same as the one of \cref{l:invariant-no-roots-of-unity}.
  The extra assumption $\chr K \nmid n$ (which is automatically satisfied if $\chr K=0)$ is necessary and sufficient for a primitive $n$-th root of unity $\zeta \in \algc{K}$ to exist.
  
  \ref{l-g-inv1:p}
  Using the direct-sum decomposition of $V$ along generalized eigenspaces, we can, as in the proof of \cref{l:invariant-no-roots-of-unity}, restrict to the case where $A$ has a single eigenvalue $\lambda$ (if $(\lambda/\lambda')^{p^n}=1$, then $\lambda=\lambda')$.
  Then $A - \lambda = N$ for some matrix $N$ with $N^d = 0$.
  Now
  \[
    A^{p^k} = (\lambda+N)^{p^k} = \sum_{i=0}^{\min\{d-1,p^k\}} \binom{p^k}{i} \lambda^{p^k-i}N^i.
  \]
  So, if $k \ge e \coloneqq \val_p((d-1)!)$, then  $\binom{p^k}{i}=0$ for $i \in [1,d-1]$ and $A^{p^k} = \lambda^{p^k}$.
  Thus, if $V$ is $A^{p^k}$-invariant for some $k \ge 0$, then it is $A^{p^e}$-invariant.
\end{proof}

\begin{lemma}
  Let $K$ be a finitely generated field.
  There exists a computable $N_0=N_0(d,K)$ such that, for every finite field extension $L/K$ with $[L:K] \le d$ and every root of unity $\zeta \in L$, one has $\zeta^{N_0}=1$ and moreover $\chr K \nmid N_0$.
\end{lemma}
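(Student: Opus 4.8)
The statement generalizes \cref{l:computable-n0} to finitely generated fields, adding the extra requirement $\chr K \nmid N_0$.

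\medskip

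\noindent\textbf{Proof plan.}

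The plan is to mimic the proof of \cref{l:computable-n0}, replacing the input from algebraic number theory (the fact that $\phi(n) \to \infty$, which bounds the degree of a cyclotomic extension) with its function-field analogue, and then to handle the characteristic-$p$ case separately.

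First I would dispose of the case $\chr K = p > 0$. Here $K$ contains $\mathbb F_p$, and every root of unity in $\algc K$ has order coprime to $p$, so the condition $\chr K \nmid N_0$ will be automatic once we ensure $p \nmid N_0$. Write $K$ as a finitely generated extension of $\mathbb F_p$, say of transcendence degree $t$; if $t = 0$ then $K$ is a finite field $\mathbb F_q$ and any $L/K$ of degree $\le d$ is contained in $\mathbb F_{q^{d!}}$ (or more crudely $\mathbb F_{q^{\,d!}}$), whose multiplicative group is cyclic of known order, so $N_0 = q^{d!}-1$ works and is coprime to $p$. If $t \ge 1$, the key point is that a root of unity $\zeta$ lies in the algebraic closure of $\mathbb F_p$ inside $L$; that algebraic closure is a finite field $\mathbb F_{p^f}$, and $[\mathbb F_{p^f} : \mathbb F_p] \le [L : \mathbb F_p(\text{transcendence basis})]$ is bounded in terms of $[L:K]$ and the (computable) data describing $K$ over $\mathbb F_p$. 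So again $\zeta$ has order dividing $p^f - 1$ for a computable $f$, and we take $N_0$ to be the lcm of these.

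Next, the characteristic-zero case. Now $K$ is a number field or a finitely generated extension $K/K_0$ of a number field $K_0$, presented as $K = \operatorname{Frac}(R)$ with $R$ an affine $K_0$-algebra given by generators and relations. The number field case is exactly \cref{l:computable-n0}. In the transcendental case, a root of unity $\zeta \in L$ with $[L:K] \le d$ is algebraic over $\bQ$, hence lies in the algebraic closure $\overline{K_0}^{\,L}$ of $K_0$ inside $L$; this is a number field with $[\overline{K_0}^{\,L} : K_0] \le [L:K_0] \le d\,[K:K_0]$. Here $[K:K_0]$ is \emph{not} finite, but what we need is a bound on $[\overline{K_0}^{\,L} : K_0]$: since $\overline{K_0}^{\,L}$ is algebraic over $K_0$ and contained in $L$, and $L$ is generated over $K$ by one element of degree $\le d$, a standard argument (specialize a transcendence basis of $K/K_0$, or use that $\overline{K_0}^{\,L} \otimes_{K_0} K \hookrightarrow L$) bounds $[\overline{K_0}^{\,L}:K_0]$ by $[L:K] \le d$. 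Then $\phi(\operatorname{ord}\zeta) = [K_0(\zeta):K_0]\,[\text{a divisor}] \le$ a computable bound depending on $d$ and $[K_0:\bQ]$, so only finitely many orders occur and $N_0$ is their lcm; since $\chr K = 0$ the divisibility condition is vacuous.

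\medskip

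\noindent The main obstacle I anticipate is making rigorous and \emph{effective} the claim that $[\overline{K_0}^{\,L} : K_0]$ is bounded by $[L:K]$ when $K/K_0$ is transcendental — one must argue that adjoining a root of unity (which is algebraic over $K_0$) to $K$ cannot raise the degree by more than adjoining it to $K_0$ would, i.e. that $K_0(\zeta)$ and $K$ are linearly disjoint over $K_0$ up to the gcd of degrees, and that the relevant constants are computable from the given presentation of $R$. The characteristic-$p$ bookkeeping (isolating the constant field of $L$) is routine by comparison, and the number-field case is already done.
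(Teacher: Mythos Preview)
Your overall strategy---reduce to the prime/number-field case by bounding $[K_0(\zeta):K_0]$---matches the paper's, but the step you flag as the main obstacle is indeed a genuine gap, and your proposed fix fails. The bound $[\overline{K_0}^{\,L}:K_0]\le [L:K]$ is false: take $K_0=\bQ$, $K=\bQ(\sqrt{2})(t)$ with $t$ transcendental, and $L=K$; then $[L:K]=1$ but $\overline{K_0}^{\,L}=\bQ(\sqrt{2})$ has degree $2$. The tensor-product map $\overline{K_0}^{\,L}\otimes_{K_0}K\to L$ is not injective here, because $K$ already contains nontrivial algebraic elements over $K_0$. Your positive-characteristic argument has the same gap: you assert that the constant field of $L$ has degree over $\mathbb F_p$ bounded ``in terms of $[L:K]$ and the computable data describing $K$'', but never say how.

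The paper resolves this uniformly in both characteristics by inserting a purely transcendental intermediate field. Using effective Noether normalization on the given presentation $R$, one computes a transcendence basis $x_1,\ldots,x_n$ of $K/K_0$ together with a bound $m$ for $[K:K_0(x_1,\ldots,x_n)]$. Now if $\zeta\in L$ is algebraic over $K_0$, then $K_0(\zeta)$ and the \emph{purely transcendental} extension $K_0(x_1,\ldots,x_n)$ are linearly disjoint over $K_0$, so
\[
[K_0(\zeta):K_0]=[K_0(x_1,\ldots,x_n)(\zeta):K_0(x_1,\ldots,x_n)]\le [L:K_0(x_1,\ldots,x_n)]=md.
\]
Hence $N_0(d,K)=N_0(md,K_0)$ works, reducing to the number-field/finite-field base case you already handled. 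The missing ingredient in your sketch is precisely this Noether-normalization step, which absorbs the algebraic part of $K/K_0$ into the computable constant $m$ rather than trying to bound it by $[L:K]$ alone.
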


\begin{proof}
  This makes essential use of the fact that $K$ is a finitely generated field.
  Suppose first $K=R=K_0$.
  Then either $K$ is a finite field, in which case the claim is trivial, or a number field, in which case the claim follows from \cref{l:computable-n0}.

  Now consider the general case.
  By effective Noether normalization \cite[Chapter 3.4]{greuel-pfister08}, we can compute transcendental $x_1$,~$\ldots\,$,~$x_n$ over $K_0$, such that $R$ is a finite module over $K_0[x_1,\ldots,x_n]$.
  Then $x_1$,~$\ldots\,$,~$x_n$ is a transcendence basis for $K/K_0$.
  From the generating set of $R$ as a $K_0[x_1,\ldots,x_n]$-algebra, we can compute a bound $m$ for the degree $[K:K_0(x_1,\ldots,x_n)]$.
  If $L$ is an extension of degree $d$ of $K$, then every element of $L$ that is algebraic over $K_0$ has degree $\le md$ over $K_0$.
  Thus we can take $N_0(d,K) = N_0(md,K_0)$.
\end{proof}

\begin{lemma} \label{l:general-invariant-exponent}
  Let $K$ be a finitely generated field.
  Let $p = \chr K$.
  Let $N \coloneqq N(d,K) \coloneqq  p^e N_0(d^2,K)$ with $e = \val_p((d-1)!)$.
  Let $A \in \GL_d(K)$, and let $V \subseteq K^{d}$ be a vector subspace.
  If $V$ is $A^n$-invariant for some $n \ge 1$, then $V$ is $A^N$-invariant.
\end{lemma}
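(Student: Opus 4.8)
The plan is to combine the two parts of \Cref{l:general-invariant-no-roots-of-unity} with the choice of exponent $N_0(d^2,K)$, exactly as in the number-field proof of \Cref{l:invariant-exponent}, but with an extra $p^e$ factor to absorb the wild part in positive characteristic. Write $p = \chr K$ and $N = p^e N_0(d^2, K)$ with $e = \val_p((d-1)!)$; note that by construction $\chr K \nmid N_0(d^2,K)$, so $\val_p(N) = e$ when $p > 0$, while if $p = 0$ we simply have $N = N_0(d^2,K)$ and $e$ plays no role.

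First I would check that $A^{N_0(d^2,K)}$ satisfies the eigenvalue hypothesis \ref{l-g-inv1:coprime} of \Cref{l:general-invariant-no-roots-of-unity}. Let $\lambda$, $\lambda' \in \algc{K}$ be eigenvalues of $A$: each is a root of the degree-$d$ characteristic polynomial, so $[K(\lambda,\lambda'):K] \le d^2$. If $(\lambda^{N_0}) / (\lambda'^{N_0}) = (\lambda/\lambda')^{N_0}$ is a root of unity $\zeta$, then $\zeta \in K(\lambda,\lambda')$, an extension of degree $\le d^2$, so $\zeta^{N_0(d^2,K)} = 1$ by the defining property of $N_0$; hence $(\lambda/\lambda')^{N_0^2} = 1$, which forces $\lambda^{N_0} = \lambda'^{N_0}$ after relabelling (the eigenvalues of $A^{N_0}$ are exactly the $N_0$-th powers of the eigenvalues of $A$). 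Thus the eigenvalues of $B \coloneqq A^{N_0(d^2,K)}$ satisfy the hypothesis of \ref{l-g-inv1:coprime}.

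Now suppose $V$ is $A^n$-invariant for some $n \ge 1$. Write $n = p^a n'$ with $\chr K \nmid n'$ (if $p = 0$ take $a = 0$, $n' = n$). Then $V$ is invariant under $(A^{N p^{\,?}})$ — more precisely, the trick is: $V$ is $A^n$-invariant, hence also $A^{n N_0(d^2,K)}$-invariant. Applying \ref{l-g-inv1:coprime} to $B = A^{N_0(d^2,K)}$ with the exponent $n$ (legitimate when $\chr K \nmid n'$; for the $p$-part we first pass through \ref{l-g-inv1:p}) shows $V$ is $B$-invariant, i.e.\ $A^{N_0(d^2,K)}$-invariant, after first using \ref{l-g-inv1:p} applied to the matrix $A^{N_0(d^2,K) n'}$ to replace the exponent $p^a$ by $p^e$. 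Concretely: from $A^n$-invariance we get $A^{n N_0}$-invariance; since $n N_0 = p^a (n' N_0)$ with $\chr K \nmid n' N_0$, part \ref{l-g-inv1:p} (applied to the power $A^{n' N_0}$, whose $p^a$-th power leaves $V$ invariant) gives $A^{p^e n' N_0}$-invariance; then part \ref{l-g-inv1:coprime} applied to $B = A^{p^e N_0} = A^N$ (which still satisfies the eigenvalue hypothesis, since raising to the $p^e$ power does not change ratios of eigenvalues in characteristic $p$, as $(\lambda/\lambda')^{p^e} = 1 \iff \lambda = \lambda'$) with the coprime exponent $n'$ yields $A^N$-invariance. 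This is the desired conclusion, and $N = N(d,K)$ is computable by the preceding lemma.

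The main obstacle is bookkeeping the interleaving of the tame and wild parts of $n$: one must be careful that \ref{l-g-inv1:p} is invoked on a suitable power of $A$ so that the residual exponent is genuinely coprime to $p$ before \ref{l-g-inv1:coprime} is applied, and that the matrix to which \ref{l-g-inv1:coprime} is applied (namely $A^N$, not $A$) still satisfies the eigenvalue hypothesis — which it does precisely because $N_0(d^2,K)$ was engineered to kill all relevant roots of unity and because $p$-th powers do not create new coincidences among eigenvalue ratios in characteristic $p$.
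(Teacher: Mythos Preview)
Your proposal is correct and follows essentially the same route as the paper: verify that $A^{N_0}$ (and hence $A^N$) satisfies the eigenvalue hypothesis of part~\ref{l-g-inv1:coprime}, use part~\ref{l-g-inv1:p} to convert the $p$-part of the exponent to $p^e$, and then apply part~\ref{l-g-inv1:coprime} to strip off the remaining coprime factor. The only cosmetic differences are that the paper phrases the first reduction as ``replace $n$ by a multiple so that $k\ge e$ and $N_0\mid m$'' rather than explicitly multiplying by $N_0$ as you do, and that the paper verifies the eigenvalue hypothesis for $A^{N_0}$ more directly (observing that if $(\lambda/\lambda')^{N_0}$ is a root of unity then so is $\lambda/\lambda'\in K(\lambda,\lambda')$, whence $(\lambda/\lambda')^{N_0}=1$ immediately---your detour through $(\lambda/\lambda')^{N_0^2}=1$ is unnecessary, since the same $N_0$-property applied to $\lambda/\lambda'$ itself closes the loop).
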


\begin{proof}
  Let $\lambda$,~$\lambda' \in \alg{K}$ be eigenvalues of $A$ and let $N_0 = N_0(d^2,K)$.
  Since $\lambda$,~$\lambda'$ are both roots of the characteristic polynomial, which has degree $d$, the extension $K(\lambda,\lambda')/K$ has degree at most $d^2$.
  If there exists a root of unity $\zeta$ such that $\lambda/\lambda' = \zeta$, then $\zeta \in K(\lambda,\lambda')$ and hence $\zeta^{N_0} = 1$.
  Thus $A^{N_0}$ satisfies the assumption of \ref{l-g-inv1:coprime} of \cref{l:general-invariant-no-roots-of-unity}.

  Now suppose that $V$ is $A^n$-invariant with $n \ge 1$ and let $n = p^km$ with $k \ge 0$ and $m$ coprime to $p$.
  Replacing $n$ by a multiple of itself if necessary, we may assume $k \ge e$ and $N_0 \mid m$.
  Applying \ref{l-g-inv1:p} of \cref{l:general-invariant-no-roots-of-unity} to the matrix $A^m$ raised to the power $p^k$, the space $V$ is $(A^m)^{p^e}$-invariant.
  Using $(A^m)^{p^e}=(A^{p^e})^m$ and $N_0 \mid m$, we can now apply \ref{l-g-inv1:coprime} of \cref{l:general-invariant-no-roots-of-unity} to deduce that $V$ is $A^{p^eN_0}$-invariant.
\end{proof}

Now the proof of \cref{t:one-matrix} goes through as in the number field case, with \cref{l:invariant-exponent} replaced by \cref{l:general-invariant-exponent}.

\subsubsection{\Cref{sec:invertible}}

The proof of \cref{l:prod-containment} uses that $K$ is infinite, on the one hand to be able to find arbitrarily large subsets, and on the other to ensure that a nonzero polynomial does not vanish everywhere.
However, the conclusion of \cref{l:prod-containment} remains trivially true for finite fields.

The conclusion of \cref{l:prodclosure} is true over any field, but the stated proof requires the field to be infinite, to ensure that $V$, $W$ are also irreducible in the Zariski topology.
If $K$ is a finite field, and $V$ and $W$ are closed and irreducible subsets in the linear Zariski topology, then $V$ and $W$ are the zero space or one-dimensional vector spaces.
In the latter case, they are not irreducible in the Zariski topology (being a finite union of their finitely many points).
However, clearly $VW$ is again the zero space (if one of $V$ and $W$ is zero) or a one-dimensional space (if $V$ and $W$ are one-dimensional), so the conclusion of \cref{l:prodclosure} holds trivially.

\subsubsection{\Cref{sec:semigroup}} No changes are necessary.

\subsection{Integral domains that are not fields}

Suppose that $R$ is not a field but only a (commutative) domain (such as $\bZ$) and consider the problem of deciding determinizability and ambiguity for $R$-automata.
Of course, one can carry out the procedure over the quotient field $K=\quo R$ of $R$.
However the existence of a deterministic $K$-automaton equivalent to the initial one, may not imply the existence of a deterministic $R$-automaton.
Similar considerations apply for unambiguous automata.
Luckily, if $R$ is completely integrally closed we obtain the following.

\begin{corollary} \label{cor:main-unambiguous}
  Let $R$ be a finitely generated completely integrally closed domain and $\mathcal A$ an $R$-automaton.
  Then it is decidable if $\mathcal A$ is equivalent to an unambiguous $R$-automaton.
  In this case a corresponding unambiguous $R$-automaton is computable.
\end{corollary}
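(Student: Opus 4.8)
The plan is to reduce everything to the fraction field $K \coloneqq \operatorname{Frac}(R)$, which is again a finitely generated field, so that \cref{t:main-automata} applies. Viewing $\mathcal A$ as a $K$-automaton, first decide whether it is equivalent to an unambiguous $K$-automaton. If it is not, then it is not equivalent to an unambiguous $R$-automaton either, since every $R$-automaton is in particular a $K$-automaton, and we answer ``no''. So from now on suppose $\mathcal A$ is equivalent to an unambiguous $K$-automaton; it remains to \emph{produce} an equivalent unambiguous $R$-automaton, and this is where the hypothesis that $R$ is completely integrally closed is used.

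\textbf{A convenient unambiguous $K$-automaton with integral data.} Compute a minimal linear representation $(u,\mu,v)$ of $\mathcal A$; exactly as in the proof of \cref{t:main-automata} one may arrange $u\mu(\Sigma^*) \subseteq \Gamma^{1\times d}$, where $\Gamma = \{\, \mathcal A(w) : w \in \Sigma^* \,\} \subseteq R$ is the set of coefficients of the behaviour. In particular $u \in R^{1\times d}$ and all reachability vectors lie in $R^{1\times d}$, so every irreducible component $W_i$ of the linear hull is spanned by vectors in $R^{1\times d}$. Picking for each $i$ a basis of $W_i$ inside $R^{1\times d}$, with $u$ among the basis vectors of the component containing it, the semi-monomial automaton $\hat{\mathcal A} = (u',\mu',v')$ constructed before \cref{t:main-automata} (and then trimmed) is a trim unambiguous $K$-automaton equivalent to $\mathcal A$; its initial vector $u'$ is a standard basis vector, while the entries of the $\mu'(a)$ and of $v'$ lie in $K$ and admit a common denominator in $R\setminus\{0\}$.

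\textbf{The key step: controlling cycle weights.} Let $\delta \in K$ be the weight of a cycle $q \to q$ in $\hat{\mathcal A}$ reading a word $w$. Trimness provides words $x$, $y$ with the unique path $q_0 \to q$ reading $x$ of nonzero weight $\alpha$, and a path from $q$ to a terminal state reading $y$ of nonzero weight $\gamma$; unambiguity gives $\mathcal A(x w^n y) = \alpha \delta^n \gamma$ for all $n \ge 0$. Since the behaviour of $\hat{\mathcal A}$ agrees with that of the $R$-automaton $\mathcal A$, we obtain $\alpha\gamma \in R\setminus\{0\}$ (case $n=0$) and $(\alpha\gamma)\delta^n \in R$ for all $n \ge 0$, whence $\delta \in R$ because $R$ is completely integrally closed. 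Thus every cycle of $\hat{\mathcal A}$ has weight in $R$; the same argument with $w$ empty shows that every path $q_0 \to (\text{terminal})$ has weight in $R$.

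\textbf{Clearing denominators — the main obstacle.} It remains to clear the denominators of $\mu'$ and $v'$ by a diagonal change of basis $q \mapsto \rho_q \in K^\times$, keeping $\rho_{q_0}=1$ so that $u'$ stays integral; this preserves the behaviour and semi-monomiality. Adjoining a single sink $\top$ with edges $q \to \top$ of weight $v'_q$, a valid $\rho$ is the same thing as, for each height-one prime $\mathfrak p$ of $R$, a system of vertex potentials $\phi^{\mathfrak p}_q = \val_{\mathfrak p}(\rho_q)$ with $\val_{\mathfrak p}(c) + \phi^{\mathfrak p}_x - \phi^{\mathfrak p}_y \ge 0$ on every edge $x \xrightarrow{c} y$. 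By the previous paragraph no cycle and no $q_0$-to-$\top$ path is $\mathfrak p$-negative, so the standard feasibility criterion for potentials produces such $\phi^{\mathfrak p}$, and only finitely many $\mathfrak p$ are relevant. The hard part is to realise these potentials \emph{simultaneously} by genuine elements $\rho_q \in K^\times$ (so one must cope with the divisor class group of $R$) and to carry this out effectively; for $R$ a principal ideal domain this is immediate, and in general one leans on the fact that a finitely generated completely integrally closed domain is a Krull domain with a computable divisor theory (height-one primes, their valuations, and principality of divisors can be handled via primary decomposition over the affine algebra presenting $R$). This yields the desired unambiguous $R$-automaton and, together with \cref{t:main-automata}, the decision procedure.
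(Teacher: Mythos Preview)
Your overall strategy---reduce to $K=\operatorname{Frac}(R)$, invoke \cref{t:main-automata}, and then push the resulting unambiguous $K$-automaton down to $R$---is the right shape, and your Step~3 (cycle weights lie in $R$ via complete integral closedness) is a clean self-contained argument. But Step~4 is a genuine gap, not just a technicality.

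The problem is exactly the one you name: having, for each height-one prime $\mathfrak p$, a feasible potential $(\phi^{\mathfrak p}_q)_q$ does \emph{not} produce elements $\rho_q\in K^\times$ with $\val_{\mathfrak p}(\rho_q)=\phi^{\mathfrak p}_q$ unless the divisor $\sum_{\mathfrak p}\phi^{\mathfrak p}_q[\mathfrak p]$ is principal for each $q$. A ``computable divisor theory'' lets you \emph{test} principality, but it does not make a non-principal divisor principal; and there is no reason the shortest-path potentials (or any other canonical choice) should yield principal divisors when $\operatorname{Cl}(R)\neq 0$. You have not explained what to do when they do not, and it is not clear that a diagonal conjugation of this particular $\hat{\mathcal A}$ can always succeed---the existence of \emph{some} equivalent unambiguous $R$-automaton (which is what the corollary asserts) does not imply that it arises from $\hat{\mathcal A}$ by rescaling states. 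So as written, the argument is complete only for PIDs, which is strictly weaker than the stated hypothesis.

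For comparison, the paper does not attempt a direct conjugation at all. It quotes \cite[Theorem~1.2]{bell-smertnig21} for the equivalence ``unambiguous over $K$ $\Leftrightarrow$ unambiguous over $R$'' (this is where complete integral closedness enters), and for the constructive part it passes through unambiguous \emph{rational series}: \cite[Proposition~6.1]{bell-smertnig21} turns the unambiguous $K$-automaton into an unambiguous $K$-rational expression for the series, and \cite[Proposition~9.1]{bell-smertnig21} then produces an unambiguous rational expression over $R$, from which an unambiguous $R$-automaton is read off. That route sidesteps the class-group obstruction entirely, at the cost of relying on the machinery of \cite{bell-smertnig21} as a black box. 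Your approach is more hands-on and would be attractive if the last step could be completed, but as it stands the missing piece is substantive.
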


\begin{proof}[Sketch of proof]
  By \cite[Theorem 1.2]{bell-smertnig21}, the $R$-automaton $\cA$ is equivalent to an unambiguous $R$-automaton, if and only if $\cA$ is equivalent to an unambiguous $K$-automaton over $K$.
  The latter property can be decided by \cref{t:main-automata}.

  Suppose $\mathcal A'$ is an unambiguous $K$-automaton that is equivalent to $\cA$ (over $K$) and let $S \in R\langle\!\langle X \rangle\!\rangle$ be the corresponding rational series.
  Using \cite[Proposition 6.1]{bell-smertnig21} we get a representation of $S$ as an unambiguous $K$-rational series, and by \cite[Proposition 9.1]{bell-smertnig21} we obtain a representation as an unambiguous rational series over $R$, which yields an $R$-automaton.
\end{proof}

Unfortunately, passing through an unambiguous rational series as in the previous corollary, and back to an unambiguous automaton, it does not seem to be clear how to preserve the deterministic property.
However, if $R$ is a principal ideal domain (PID) there is a way to pass to $R$.

\begin{corollary} \label{cor:main-deterministic}
  Let $R$ be a finitely generated PID and $\mathcal A$ a $R$-automaton.
  Then it is decidable if $\mathcal A$ is equivalent to a deterministic $R$-automaton.
  In this case a corresponding deterministic $R$-automaton is computable.
\end{corollary}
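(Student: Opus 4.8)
The plan is to decide the property over the quotient field $K = \quo{R}$ and then descend to $R$, using that $R$ is a PID. Since $\cA$ is an $R$-automaton, its behavior $S$ has all coefficients $S(w)$ in $R$. By \cref{t:main-automata} (in the finitely generated field version discussed above) it is decidable whether $\cA$ is equivalent to a deterministic $K$-automaton and, if so, such an automaton $\cB$ is computable; we may take $\cB$ to be trim. So the task reduces to the following constructive claim: from a trim deterministic $K$-automaton $\cB$ whose behavior $S$ satisfies $S(w)\in R$ for all $w$, and assuming $R$ is a PID, one can compute an equivalent deterministic $R$-automaton. The reverse implication is trivial, so this also yields the stated decidability.

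For the construction I would keep the transition structure of $\cB$ fixed and only rescale states: conjugate by a diagonal matrix $D=\operatorname{diag}(r_q)_{q\in Q}$ with $r_q\in K^\times$. Writing $\cB$ with initial state $q_0$, initial weight $s$, transition function $\delta$, transition weights $c(q,a)$, and terminal weights $t(q)$, conjugation changes $s$ to $s/r_{q_0}$, each $c(q,a)$ to $r_q c(q,a)/r_{\delta(q,a)}$, and each $t(q)$ to $r_q t(q)$, leaving the behavior unchanged. Putting $n_q=\val_p(r_q)$, forcing all these weights to be $p$-integral for a prime $p$ of $R$ amounts to the system
\[
  n_{q_0}\le \val_p(s),\qquad n_{\delta(q,a)}\le n_q+\val_p(c(q,a)),\qquad n_q\ge -\val_p(t(q))
\]
over all states $q$ and letters $a$ (the last condition vacuous when $t(q)=0$). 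I would solve it by taking $n_q \coloneqq M_q^{(p)}\coloneqq\inf\{\,\val_p(\alpha_q(w)) : \delta(q_0,w)=q\,\}$, where $\alpha_q(w)$ is the product of $s$ with the transition weights along $w$, so that $S(w)=\alpha_q(w)\,t(q)$; the three inequalities then follow from the empty walk at $q_0$, from $\alpha_{\delta(q,a)}(wa)=\alpha_q(w)c(q,a)$, and from $S(w)\in R$ together with trimness. The crucial point is that $M_q^{(p)}$ is a finite integer: it is finite because $q$ is reachable, and bounded below because, choosing by trimness a word $v$ with $t(\delta(q,v))\ne 0$, the relation $S(wv)\in R$ bounds $\val_p(\alpha_q(w))$ from below uniformly in $w$. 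Hence $M_q^{(p)}$ is computable by a single-source shortest-walk (Bellman--Ford) computation with edge weights $\val_p(c(q,a))$ and source offset $\val_p(s)$, which terminates with a finite answer precisely because the optimum is finite (equivalently, no negative cycle is reachable from $q_0$).

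To globalize, I note that only the finitely many primes $p$ dividing the numerator or denominator of some weight $s$, $c(q,a)$, $t(q)$ can have $M_q^{(p)}\ne 0$; for all other primes every $\alpha_q(w)$ is a $p$-unit. Since $R$ is a PID, I can realize, for each $q$, the prescribed divisor $\sum_p M_q^{(p)}[p]$ by an explicit $r_q=\prod_p p^{M_q^{(p)}}\in K^\times$ (a finite product over the relevant primes). With this choice every weight of the conjugated automaton has nonnegative valuation at every prime of $R$, hence lies in $R=\bigcap_p R_p$, giving a deterministic $R$-automaton equivalent to $\cA$; all steps are effective. The main obstacle I anticipate is the finiteness of $M_q^{(p)}$ — this is exactly where trimness and the integrality $S(w)\in R$ must both be used, and without it the conjugation could fail to terminate or to land in $R$. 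The hypothesis that $R$ is a PID (rather than merely completely integrally closed, as in \cref{cor:main-unambiguous}) enters only in the final step, to realize the divisors $\sum_p M_q^{(p)}[p]$ by honest elements of $K^\times$; over a general Dedekind domain this step can obstruct. One must also remember to invoke the finitely generated field version of \cref{t:main-automata}, since $K=\quo{R}$ need not be a number field (e.g.\ $R=\bQ[x]$).
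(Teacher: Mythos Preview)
Your argument is correct and takes a genuinely different route from the paper's. The paper does not rescale the deterministic $K$-automaton produced by \cref{t:main-automata}; instead it passes to a \emph{minimal} linear representation $(u,\mu,v)$ of $\cA$ over $K$, invokes \cite[Theorem 7.1.1]{berstel-reutenauer11} to take its entries in $R$, and uses that the linear hull has one-dimensional components (by \cite[Theorem 1.3]{bell-smertnig21}) to conclude that the reachability set $\Omega=\{u\mu(w):w\in\Sigma^*\}$ lies in $(Ka_1\cup\cdots\cup Ka_n)\cap R^{1\times d}$. Choosing each $a_i$ primitive in $R^{1\times d}$ (coprime coordinates, using the PID hypothesis) then gives $\Omega\subseteq Ra_1\cup\cdots\cup Ra_n$, from which a deterministic $R$-automaton on $n$ states follows via \cite[Proposition 5]{lombardy-sakarovitch06}. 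Your approach instead keeps the state set of the computed deterministic $K$-automaton and adjusts weights prime by prime via a shortest-walk computation, which is more self-contained and explicitly algorithmic (no appeal to the cited external results). The paper's route is shorter and ties the construction directly to the linear hull; yours makes the role of the PID hypothesis and the integrality assumption $S(w)\in R$ more transparent, and would still go through over any Dedekind domain with trivial class group. Both arguments use the PID property at the same conceptual point: realizing, for each state, a prescribed divisor by an honest element of $K^\times$ (your $\prod_p p^{M_q^{(p)}}$; their primitive $a_i$).
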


\begin{proof}[Sketch of Proof]
  We claim that this again reduces to the same question over $K$.
  Clearly, if $\mathcal A$ is equivalent to a deterministic $R$-automaton, it is equivalent to a deterministic $K$-automaton.
  Suppose conversely that $\mathcal A$ is equivalent to a deterministic $K$-automaton.
  Then the linear hull of every minimal $K$-automaton is at most one-dimensional \cite[Theorem 1.3]{bell-smertnig21}.
  
  Let $(u,\mu,v)$ be a minimal linear representation of $\cA$ over $K$.
  By \cite[Theorem 7.1.1]{berstel-reutenauer11} we may assume that in fact $u \in R^{1 \times d}$, $\mu(w) \in M_d(R)$, and $v \in R^{d}$ for all $w \in \Sigma^*$.
  Let $\Omega \coloneqq \{\, u \mu(w): w \in \Sigma^* \,\}$.
  Now there are $a_1$, $\ldots\,$,~$a_n \in K^{1 \times d}$ such that $\Omega \subseteq (Ka_1 \cup \cdots \cup K a_n) \cap R^{ 1 \times d}$.
  We may take the coordinates of each $a_i$ to be in $R$ and to be coprime.
  Then $\Omega \subseteq Ra_1 \cup \cdots \cup R a_n$.
  This yields an $R$-deterministic automaton equivalent to $\cA$ (on $n$ states) \cite[Proposition 5]{lombardy-sakarovitch06}.
\end{proof}

\Cref{cor:main-unambiguous,cor:main-deterministic} apply to the ring of integers $\bZ$, and so Problem 1 of \cite{lombardy-sakarovitch06} also has a positive answer in this case.
The restriction to finitely generated domains is again so that basic computations (and the linear algebra in \cref{cor:main-deterministic}) can indeed be carried out.

\subsection{Derivation of bounds}

We sketch how to derive the bounds on the output size (\cref{rem:upper-bound-output}) in the case $K=\bQ$.
As a first step, we know that a quotient of a linear algebraic subgroup of $\GL_d$ by a normal subgroup is again linear algebraic (so, it can be be embedded in some $\GL_{d'}$).
This is a standard result in the theory of algebraic groups \cite[Theorem II.6.8]{borel91}, but unfortunately, making $d'$ explicit requires tracing through the proofs.
We sketch how to do this, following the proof in Borel's book \cite{borel91}.

We are ultimately interested in the $K$-rational points $G(K)$ of a linear algebraic group $G$, but to obtain the desired result, it is necessary to work in the language of algebraic geometry.
In a sense, this also means to consider the points $G(\algc{K})$ over the algebraic closure $\algc{K}$ of $K$.
However, the varieties will be defined over $K$ and morphisms will be $K$-morphisms (see \cite[\S 11]{borel91} for the precise definitions), so the results then descend to the group of $K$-rational points.

Consider the linear algebraic group $\GL_d$.
It is defined over our base field $K$, having the coordinate ring
\[
  K[\GL_d]=K[x_{ij}, \det(x_{ij})^{-1}] \cong K[x_{ij}, t] / (t\det(x_{ij})-1).
\]
(here $i$,~$j$ range over $[1,d]$).

The group $\GL_d(\algc{K})$ acts on $\algc{K}[\GL_d]$ by left translation, that is, for $g \in \GL_d(\algc{K})$ and $f \in \algc{K}[\GL_d]$, the action is defined by $(g,f) \mapsto \lambda_g f$ with $\lambda_g f(x) = f(g^{-1} x)$ for all $x \in \GL_d(\algc{K})$ \cite[\S II.1.9]{borel91}.
Here $g^{-1} x$ is just the usual the matrix product.

Let $L \subseteq \algc{K}[\GL_d]$ be the $d^2$-dimensional $\algc{K}$-vector space spanned by $\{\, x_{ij} : i, j \in [1,d] \,\}$.
(This vector space is \emph{defined over $K$}, in the sense of \cite[\S 11.1]{borel91}.)
Then $L$ is invariant under the $\GL_d(\algc{K})$-action: taking $f=x_{ij}$ and $g$, $y \in \GL_d(\algc{K})$ with $g^{-1}=(g_{ij}')$, $y=(y_{ij})$, we have
\[
  \lambda_g x_{ij} (y) = x_{ij}(g^{-1}y) = \sum_{\nu=1}^d g_{i\nu}' y_{\nu j} = \sum_{\nu=1}^d g_{i \nu}' x_{\nu j}(y),
\]
so $\lambda_g x_{ij} \in \lspan\{\, x_{1j}, \ldots, x_{dj} \,\} \subseteq L$.
By definition, the space $L$ contains all homogeneous linear polynomials in $x_{ij}$.

Now consider the case where $G \le \GL_d$ is a subgroup which is defined as the vanishing set of a set of homogeneous linear polynomials in the $x_{ij}$ and with coefficients in $K$ (this is the situation we are dealing with in \cref{sec:invertible}).
Solving the linear system, we obtain a subset $I \subseteq [1,d]^2$ such that the $\{\, x_{ij} : (i,j) \in I \,\}$ form a set of free variables for the system.

The coordinate ring $K[G]$ is he quotient of $K[\GL_d]$ by these equations.
We may think of it as
\[
  K[G] = K[x_{ij}, f(x_{ij})^{-1}] \cong K[x_{ij}, t]/(t f(x_{ij}) - 1), 
\]
where now $(i,j) \in I$ and $f(x_{ij})$ is a polynomial in $x_{ij}$ with $(i,j) \in I$, obtained from the determinant by substituting the solution of the linear system.
As in the case $G=\GL_d$ before, the group $G(\algc{K})$ acts on $\algc{K}[G]$ by left translation, and we have a $G(\algc{K})$-invariant vector subspace $L_G$ spanned by $\{\, x_{ij} : (i,j) \in I \,\}$ and with $\dim(L_G) = \card{I} \le d^2$.

The following is a version of \cite[Theorem II.6.8]{borel91}, restricted to our setting, that gives an explicit bound on the dimension of a matrix group that $G/N$ can be embedded in.

\begin{proposition}
  Let $G \le \GL_d$ be a $K$-subgroup.
  Let $N \subseteq G$ be a closed normal $K$-subgroup, defined as a $K$-variety in $G$ by homogeneous linear polynomials in the matrix entries $\{\, x_{ij} : (i,j) \in I \,\}$, with coefficients in $K$.
  Then there exists $r \in [1,d]$ such that $G/N$ is an affine $K$-subgroup of $\GL_{d'}$ and
  \[
    d' \le \left( \binom{d^2}{r} + d \right)^2 \le (2^{d^2} + d)^2.
  \]
\end{proposition}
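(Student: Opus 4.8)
The plan is to follow Borel's proof of \cite[Theorem II.6.8]{borel91} (which rests on Chevalley's theorem \cite[Theorem II.5.1]{borel91}) and to keep explicit track of the dimensions of the auxiliary representations that appear. The data are already set up above: $W\coloneqq L_G=\lspan\{\, x_{ij} : (i,j)\in I\,\}\subseteq K[G]$ is a submodule for the left-translation action of $G$, of dimension $\card{I}\le d^2$, and it contains every homogeneous linear polynomial in the free variables. Let $\mathfrak a\subseteq K[G]$ be the ideal of $N$. Since $N$ is cut out inside $G$ by homogeneous linear polynomials in the $x_{ij}$ with $(i,j)\in I$, the subspace $W_0\coloneqq W\cap\mathfrak a$ is exactly the span of these defining forms, and therefore generates $\mathfrak a$ as an ideal; set $r\coloneqq\dim W_0$. (If $r=0$ then $N=G$ and there is nothing to prove, so assume $r\ge1$; in any case $r\le\dim W\le d^2$.)

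First I would run the Chevalley construction on this data. The left-translation action on $W$ induces actions on $\bigwedge^{r}W$, a space of dimension $\binom{\dim W}{r}\le\binom{d^2}{r}$, and on the line $D\coloneqq\bigwedge^{r}W_0\subseteq\bigwedge^{r}W$. For $g\in G$ one has $gD=D$ iff $gW_0=W_0$ (the Pl\"ucker map is injective on $r$-dimensional subspaces), and since $g$ acts on $K[G]$ by an algebra automorphism fixing $W$ setwise, $gW_0=W_0$ is equivalent to $\lambda_g\mathfrak a=\mathfrak a$; finally $\lambda_g\mathfrak a=I(gN)$ equals $\mathfrak a=I(N)$ precisely when $g\in N$. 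Hence $N=\{\,g\in G : gD=D\,\}$ is exactly the $G$-stabilizer of the line $D$.

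The hard part is to pass from ``$N$ is the stabilizer of a line'' to ``$N$ is the kernel of a rational representation''. This is where normality of $N$ is indispensable — for a non-normal closed subgroup $H$, Chevalley's theorem only yields a quasi-projective (not affine) homogeneous space $G/H$ — and it is the one place where we must leave routine linear algebra for the structure theory of algebraic groups. Following the proof of \cite[Theorem II.6.8]{borel91}, one builds, out of the Chevalley module $V_0\coloneqq\bigwedge^{r}W$ (of dimension $\le\binom{d^2}{r}$, which so far detects $N$ only as a stabilizer) together with the faithful standard representation $K^{d}$ of $G\subseteq\GL_d$, a rational representation $\rho\colon G\to\GL(V)$ with $\ker\rho=N$, where $V$ arises from a tensor/endomorphism construction on a space of dimension at most $\binom{d^2}{r}+d$; normality of $N$ enters to trivialize the character by which $N$ acts on $D$ and to force the kernel to be exactly $N$ rather than merely the stabilizer of $D$. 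Tracing the dimensions through this construction gives $\dim V\le\bigl(\binom{d^2}{r}+d\bigr)^{2}$, so that $G/N\cong\rho(G)$ is a closed $K$-subgroup of $\GL_{d'}$ with $d'\le\bigl(\binom{d^2}{r}+d\bigr)^{2}$. The second inequality then follows from $\binom{d^2}{r}\le\sum_{k}\binom{d^2}{k}=2^{d^2}$, namely $d'\le(2^{d^2}+d)^2$.

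Everything outside the third step is mechanical — the dimension count for $\bigwedge^{r}W$, the identification of $\mathrm{Stab}_G(D)$ with $N$, and the crude estimate of the binomial coefficient. The genuine obstacle, and the reason this appendix is needed at all, is making the dimension $d'$ of Borel's quotient representation explicit while checking that its kernel really is $N$ and not some larger group; I expect this bookkeeping through the proof of \cite[Theorem II.6.8]{borel91} to be the bulk of the work.
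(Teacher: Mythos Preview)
Your proposal is correct and follows essentially the same route as the paper: both trace through Borel's proofs of \cite[Theorems II.5.1, II.5.6, II.6.8]{borel91}, applying Chevalley's construction to the space $L_G$ (of dimension $\le d^2$) to obtain a line in $\bigwedge^r L_G$ whose stabilizer is $N$, adjoining the standard $d$-dimensional representation to make the map immersive (yielding a module $E$ of dimension $\le\binom{d^2}{r}+d$), and then passing to a subspace of $\GL(E)$ to turn the stabilizer into a kernel, which is exactly the squaring step that produces the bound $d'\le\dim(E)^2$. The paper's sketch is terser and cites II.5.6 explicitly for the stabilizer-to-kernel passage, but the ingredients and the dimension bookkeeping are identical to yours.
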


\begin{proof}[Sketch of Proof]
  The vanishing ideal $J \subseteq \algc{K}[G]$ of $H$ is generated by the homogeneous linear polynomials defining $H$.
  Therefore, the finite-dimensional $G$-invariant subspace $L_G$ of $\algc{K}[G]$ contains this generating set of $J$.
  Let $W = L_G \cap J$ and $r = \dim(W)$.
  Put $E = \left( \bigwedge^r V  \right) \oplus (\algc{K})^{d}$.
  Then
  \[
    \dim_{\algc{K}}(E) = \binom{\card{I}}{r} + d \le \binom{d^2}{r} + d.
  \]
  Following the proof of \cite[Theorem II.5.1]{borel91}, this gives an immersive representation $\alpha: G \to \GL(E)$ (defined over $K$) and a line $D=\bigwedge^r V \subseteq E$ satisfying the conclusions of \cite[Theorem II.5.1]{borel91} with respect to $H=N$.
  As in \cite[Theorem II.5.6]{borel91}, this can be improved to $N=\ker(\alpha)$ (and the analogous condition $\mathfrak n = \ker(d\alpha)$ on the associated derivation), by replacing $E$ by a subspace $E'$ of $\GL(E)$ (cf. the third paragraph of the proof of \cite[Theorem II.5.6]{borel91}).
  Then $\dim(E') \le \dim(E)^2$.

  Finally, the proof of \cite[Theorem II.6.8]{borel91} shows that $G/N$ is an affine $K$-subgroup of $\GL(E')$.
\end{proof}

Since the morphism in the previous proposition is a $K$-morphism, it gives rise to an embedding of $K$-rational points $(G/N)(K) \subseteq \GL_{d'}(K)$.
In the output of \cref{alg:groupclosure}, the subgroup $N$ is the irreducible component of $G$ containing the identity, and $G/N$ is finite.
The number of irreducible components of the output is $\card{(G/N)(K)}$.
We have now seen that $(G/N)(K)$ is a finite subgroup of $\GL_{d'}(K)$ with explicitly bounded $d'$, so it suffices to bound the size of finite subgroups of $\GL_{d'}(K)$.

To do so, we now restrict to $K=\bQ.$
By a theorem of Feit \cite{feit96}, finite subgroups of $\GL_{d'}(\bQ)$ have cardinality at most $2^{d'}d'!$ if $d'>10$.
For $d' \le 10$, Feit also classified the finite subgroups of maximal cardinality \cite[Table 1]{berry-dubickas-elkies-poonen-smyth04}.
Unfortunately, the theorem of Feit depends on unpublished work of Weisfeiler (see the introduction of \cite{friedland97} or \cite[\S 5, \S 6]{kuzmanovich-pavlichenkov02} for a discussion).
Let $X \subseteq M_d(K)$ be closed such that $X \cap \GL_d(K)$ is dense in $X$.
Set $\cS = \langle X \rangle$.
Under the assumption that the Feit result holds, one obtains
\[
   \comp{\overline{\cS}} \le 2^{4 \cdot 4^{d^2}} (4 \cdot 4^{d^2})!
\]
for all $d$ by bounding $(2^{d^2} + d)^2 \le 4 \cdot 4^{d^2}$.
(This also works for $d \le 10$ because the bound is sufficiently large compared to the cardinalities of finite subgroups listed in \cite[Table 1]{berry-dubickas-elkies-poonen-smyth04}.)
Of course this bound is not sharp, e.g., for $d=1$ it gives $\approx 1.3 \cdot 10^{18}$, whereas in this case actually $\comp{\overline{\cS}} = 1$.

Avoiding the use of unpublished work, independently of the theorem of Feit, Friedland \cite{friedland97} uses a different (published) result of Weisfeiler to show that a finite subgroup of $\GL_d(\bQ)$ has cardinality $\le 2^d d!$ for all sufficiently large $d$.
From this result one gets the existence of \emph{some} double-exponential bound for $\comp{\overline{\cS}}$, but not an explicit one.
In any case, Weisfeiler's results, and hence these bounds, depend on the classification of finite simple groups.

To extend a bound to different fields $K$, it would be necessary to understand the maximal cardinality of finite subgroup of $\GL_{d'}(K)$.

\smallskip
\textbf{Semigroup case.}
In the general (semigroup) case we get a bound on the output size by combining the bound for the group case with the recursive strategy of \cref{l:strategy}.
Here it is no longer possible to obtain a bound that is independent of the size of the input set (and that only depends on the dimension $d$).
To see this, consider a finite subset $M \subseteq K$ and let
\[
  X = \bigcup_{m \in M} \lspan\left\{ \begin{pmatrix} 1 & m \\ 0 & 0 \end{pmatrix} \right\} \subseteq M_2(K),
\]
which is a union of $\card{M}$ pairwise distinct one-dimensional vector spaces, so $\comp{X}=\card{M}$.
One checks easily that $X$ is a semigroup.

Let us start with some easy observations: if $X$, $Y \subseteq M_d(K)$ are closed sets, then $\comp{\overline{XY}} \le \comp{X}\, \comp{Y}$.
Thus
\[
  \comp{\overline{X^{\le n}}} \le \sum_{i=1}^n \comp{X}^i \le n \,\comp{X}^{n} \le \comp{X}^{n+1}, 
\]
and also $\comp{\overline{X^{\trianglelefteq n}}} \le 1 + n \,\comp{X}^n \le \comp{X}^{n+1}$.

Now we can bound the size of the sets $\cT(Y,A)$ (page~\pageref{def:tya}):
Assume that $C(d)$ is the maximal size of a finite subgroup of $M_d(K)$.
Then $\comp{\overline{\langle \cT_0(Y,A) \rangle}} \le C(r)$ (with $r=\rank(A)$), and
\[
  \comp{\cT(Y,A)} \le C(r)\, \comp{Y}^{2\binom{d}{r}+4}.
\]

Looking at \textsc{TryClose} and keeping in mind \cref{l:scc}, we get 
\[
  \comp{T_s} \le \binom{d}{s} C(s) \, \comp{Y_s}^{2\binom{d}{r} + 4} \le 2^d C(d) \, \comp{Y_s}^{2^d + 4},
\]
and
\[
  \begin{split}
    \comp{Y_s} &\le \big(\comp{Y_{s+1}} + \comp{T_s}\big)^{2\binom{d}{s}+3} \\
    &\le \big(\comp{Y_{s+1}} + 2^d C(d) \, \comp{Y_s}^{2^d + 4}\big)^{2^d+3}.
  \end{split}
\]
Suppose $C(d)$ satisfies a double-exponential bound, i.e., $C(d) \le 2^{2^{Q(d)}}$ for some polynomial $Q(s)$.
Then also $\comp{Y_s}$ satisfies a double exponential bound, i.e.,
\[
  \comp{Y_s} \le \comp{Y_{s+1}}^{2^{P_0(d)}},
\]
for a suitable polynomial $P_0(d)$ (which does not depend on $Y_{s+1}$).
Inductively, we get
\[
  \comp{\overline{\cS}} = \comp{Y_1} \le \comp{X}^{2^{(d-1)P_0(d)}}.
\]

So altogether we obtained the following.

\begin{proposition}
  If $X \subseteq M_d(\bQ)$ is a closed set and $\cS = \langle X \rangle$, then the number of components $\comp{\overline{\cS}}$ of $\overline{\cS}$ can be bounded by
  \[
    \comp{\overline{\cS}} \le \comp{X}^{2^{P(d)}},
  \]
  with $P(d)$ a suitable polynomial.
  A similar upper bound holds for the number of components of the linear hull of a $\bQ$-automaton.
\end{proposition}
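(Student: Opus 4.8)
The plan is to prove the bound by downward induction on the generic rank, mirroring the recursive structure of \textsc{TryClose} inside \cref{alg:semigroupclosure}, with the group case of \cref{sec:invertible} serving as the base case and providing a bound independent of $\comp{X}$.

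\emph{Base case: $\GL_d(\bQ)\cap X$ dense in $X$.} By the proof of \cref{t:alg1-correct}, $\overline{\cS}=TN$ with $N$ a vector subspace and $T\subseteq\langle I,A_1,\ldots,A_l\rangle\subseteq\GL_d(\bQ)$ finite; since each $AN$ with $A\in T$ is again a vector subspace, $\comp{\overline{\cS}}\le\card{T}\le\card{G/H}$ where $G\coloneqq\overline{\cS}\cap\GL_d(\bQ)$ and $H\coloneqq N\cap\GL_d(\bQ)$. The group $G/H$ is finite by Burnside--Schur (as in that proof) and embeds into some $\GL_{d'}(\bQ)$; tracing through Borel's construction --- which is exactly the content of the preceding proposition --- bounds the embedding dimension by $d'\le(2^{d^2}+d)^2\le4\cdot4^{d^2}$. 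Feit's theorem bounds finite subgroups of $\GL_{d'}(\bQ)$ by $2^{d'}d'!$ for $d'>10$ (the cases $d'\le10$ being covered by \cite[Table 1]{berry-dubickas-elkies-poonen-smyth04}), so $\comp{\overline{\cS}}\le C(d)$ for a function with $C(d)\le2^{2^{Q(d)}}$, $Q$ a polynomial.

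\emph{Inductive step: the semigroup case.} I would first record the elementary inequalities $\comp{\overline{XY}}\le\comp{X}\,\comp{Y}$, $\comp{\overline{X^{\le n}}}\le\comp{X}^{n+1}$, and $\comp{\overline{X^{\trianglelefteq n}}}\le\comp{X}^{n+1}$. Applying them to the definition of $\cT(Y,A)$ on page~\pageref{def:tya}, and using the base case to estimate $\comp{\overline{\langle\cT_0(Y,A)\rangle}}\le C(r)$ (here $\overline{\langle\cT_0(Y,A)\rangle}$ is a group by \cref{p:closure-he}), gives $\comp{\cT(Y,A)}\le C(r)\,\comp{Y}^{2\binom{d}{r}+4}$ for $r=\rank(A)$. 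Feeding this into \textsc{TryClose}, and bounding $\card{R_s}\le\binom{d}{s}$ via \cref{l:scc} together with \ref{strategy:class} of \cref{l:strategy}, yields
\[
  \comp{T_s}\le\binom{d}{s}\,C(s)\,\comp{Y_{s+1}}^{2\binom{d}{s}+4}
  \quad\text{and}\quad
  \comp{Y_s}\le\big(\comp{Y_{s+1}}+\comp{T_s}\big)^{2\binom{d}{s}+3}.
\]
Since $\binom{d}{s}\le2^d$ and $C(d)$ is doubly exponential, a short estimate turns this into $\comp{Y_s}\le\comp{Y_{s+1}}^{2^{P_0(d)}}$ with $P_0(d)$ a polynomial not depending on $Y_{s+1}$. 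Iterating from $s=r\coloneqq\grank(X)$ down to $s=1$, with $Y_{r+1}=X$, gives $\comp{\overline{\cS}}=\comp{Y_1}\le\comp{X}^{2^{P(d)}}$ for $P(d)=(d-1)P_0(d)$. Finally, for a $\bQ$-automaton with $d$ states, alphabet $\Sigma$, and linear representation $(u,\mu,v)$, apply the above to $X=\bigcup_{a\in\Sigma}\bQ\mu(a)$, so that $\comp{X}\le\card{\Sigma}$ and $\overline{\langle X\rangle}=\overline{\mu(\Sigma^*)}$ as in \cref{c:main-fg}; since the linear hull is the image of $\overline{\mu(\Sigma^*)}$ under the linear, hence closed, map $A\mapsto uA$ (as in \cref{c:lhull-compute}), it has at most $\comp{\overline{\langle X\rangle}}\le\card{\Sigma}^{2^{P(d)}}$ components.

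The main obstacle is the group case, and specifically the explicit control of the embedding dimension $d'$ of $G/H\hookrightarrow\GL_{d'}$: Borel's proof that a quotient of a linear algebraic group is again linear algebraic is not formulated with effective bounds, so one has to follow the exterior-power construction (the representation on $\bigwedge^r L_G$, then replacing the ambient space by one of dimension $\le\dim(E)^2$) step by step to extract $d'\le(2^{d^2}+d)^2$. A secondary caveat, rather than a genuine obstacle, is that the numerical form $2^{d'}d'!$ rests on Feit's theorem and hence on unpublished work; using Friedland's published result instead yields the same double-exponential shape for $C(d)$ --- and hence the proposition --- at the cost of the explicit constant. Everything else is routine bookkeeping with the component-counting inequalities above.
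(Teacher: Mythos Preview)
Your proposal is correct and follows essentially the same route as the paper: the group case via the explicit embedding $G/H\hookrightarrow\GL_{d'}$ with $d'\le(2^{d^2}+d)^2$ combined with Feit's bound, then the semigroup case by plugging the elementary component-count inequalities into the recursion of \textsc{TryClose} and iterating. You even correctly write $\comp{Y_{s+1}}$ in the bound for $\comp{T_s}$ (since $T_s=\bigcup_{B\in R_s}\cT(Y_{s+1},B)$), where the paper's displayed formula has a harmless typo with $\comp{Y_s}$, and you spell out the passage to the linear hull via $A\mapsto uA$ that the paper only alludes to.
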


The conclusion holds over any field $K$ where one has a bound on cardinality of a finite subgroup of $\GL_d(K)$ that is double-exponential in $d$.

\end{document}